\newcommand{\argdot}{\,\cdot\,}
\newcommand{\fakeopt}[1]{\widetilde{\raisebox{0pt}[0.85\height]{$\widetilde{#1}$}}}
\DeclareMathOperator{\Bernoulli}{\mathsf{Bernoulli}}
\DeclareMathOperator{\Bg}{\mathsf{B}}
\DeclareMathOperator{\Binomial}{\mathsf{Binomial}}
\DeclareMathOperator{\BNE}{\mathsf{BNE}}
\DeclareMathOperator{\Eq}{\mathsf{Eq}}
\DeclareMathOperator{\ESC}{\mathsf{ESC}}
\DeclareMathOperator{\Expect}{\mathsf{E}}
\DeclareMathOperator{\MNE}{\mathsf{MNE}}
\DeclareMathOperator{\Multi}{\mathsf{Multinomial}}
\DeclareMathOperator{\Opt}{\mathsf{Opt}}
\DeclareMathOperator{\Pg}{\mathsf{P}}
\DeclareMathOperator{\PNE}{\mathsf{PNE}}
\DeclareMathOperator{\Prob}{\mathsf{P}}
\DeclareMathOperator{\PoA}{\mathsf{PoA}}
\DeclareMathOperator{\Poisson}{\mathsf{Poisson}}
\DeclareMathOperator{\PoS}{\mathsf{PoS}}
\DeclareMathOperator{\SC}{\mathsf{SC}}
\DeclareMathOperator{\tot}{tot}
\DeclareMathOperator{\TV}{\mathsf{TV}}
\DeclareMathOperator{\Var}{\mathsf{Var}}
\DeclareMathOperator{\WE}{\mathsf{WE}}
\DeclareMathOperator{\Wg}{\mathsf{W}}
\DeclarePairedDelimiter{\braces}{\{}{\}}
\DeclarePairedDelimiter{\bracks}{[}{]}
\DeclarePairedDelimiter{\parens}{(}{)}
\DeclarePairedDelimiter{\abs}{\lvert}{\rvert}
\DeclarePairedDelimiter{\norm}{\lVert}{\rVert}
\def\EMAIL#1{\href{mailto:#1}{#1}}
\def\URL#1{\href{#1}{#1}}         
\begin{document}


 \RUNAUTHOR{Cominetti et al.} 

\RUNTITLE{Approximation and Convergence of Large Atomic Congestion Games}

\TITLE{Approximation and Convergence of Large Atomic Congestion Games}

\ARTICLEAUTHORS{%
\AUTHOR{Roberto Cominetti}
\AFF{Facultad de Ingenier\'ia y Ciencias, Universidad Adolfo Ib\'a\~nez, Santiago, Chile, \EMAIL{roberto.cominetti@uai.cl}, \URL{https://sites.google.com/site/cominettiroberto}}
\AUTHOR{Marco Scarsini}
\AFF{Dipartimento di Economia e Finanza, Luiss University, Roma, Italy, \EMAIL{marco.scarsini@luiss.it}, \URL{https://sites.google.com/view/marcoscarsini}}
\AUTHOR{Marc Schr\"oder}
\AFF{School of Business and Economics, Maastricht University, Maastricht, The Netherlands, \EMAIL{m.schroder@maastrichtuniversity.nl}, \URL{https://www.maastrichtuniversity.nl/p70024063}}
\AUTHOR{Nicol\'as Stier-Moses}
\AFF{Core Data Science, Meta, Menlo Park, USA, \EMAIL{nicostier@yahoo.com}, \URL{https://sites.google.com/site/nicostier}}
} 

\ABSTRACT{%
We consider the question of whether, and in what sense, Wardrop equilibria provide a good approximation for Nash equilibria in atomic unsplittable congestion games with a large number of small players.
We examine two different definitions of small players. In the first setting, we consider games where each player's weight is small. 
We prove that  when the number of players goes to infinity and their weights to zero, the random flows in all (mixed) Nash equilibria for the finite games converge in distribution to the set of Wardrop equilibria of the corresponding nonatomic limit game. 
In the second setting, we consider an increasing number of players with a unit weight that participate in the game with a decreasingly small probability. 
In this case, the Nash equilibrium flows converge in total variation towards Poisson random variables whose expected values are Wardrop equilibria of a different nonatomic game with suitably-defined costs. The latter can be viewed as symmetric equilibria in a Poisson game in the sense of Myerson, establishing a plausible connection 
between the Wardrop model for routing games and the stochastic fluctuations observed in real traffic. In both settings we provide explicit approximation bounds, and we study the convergence 
of the price of anarchy. Beyond the case of congestion games, we prove a general result on the convergence of large games with random players towards Poisson games. 
%
}%


\KEYWORDS{unsplittable atomic congestion games; nonatomic congestion games; Wardrop equilibrium; Poisson games; symmetric equilibrium; price of anarchy; price of stability; total variation}
\MSCCLASS{Primary: 91A14; secondary: 91A07, 91A43, 60B10}
\ORMSCLASS{Primary: Games/group decisions: non-atomic; 
secondary: Networks/graphs: multicommodity}

\maketitle

%


\section{Introduction}
\label{se:intro}

\emph{Nonatomic congestion games} were introduced by \citet{War:PICE1952} as a model for traffic networks with many drivers, where each single agent has a negligible impact on congestion.
The model was stated in terms of continuous flows, which are easier to analyze compared to a discrete model with a finite but large number of players. 
The heuristic justification is that a continuous flow model is a natural approximation to a game with many players, each one having a negligible weight. 
Although this argument is intuitive and plausible, the question of whether nonatomic games are the limit of atomic games has only been formally addressed in special cases, mainly for atomic splittable games with homogeneous players as in \citet{HauMar:N1985}.

Motivated by applications such as road traffic and telecommunications, it is also important to consider the issue of approximating \emph{unsplittable} routing games. In them, players must route a given load over a single path, which can be chosen either deterministically or at random using a mixed strategy. 
We consider the more general class of \emph{atomic unsplittable congestion games} (not necessarily routing games) and allow for heterogeneous players. 
The main question we address is whether Nash equilibria for these games are well approximated by a Wardrop equilibrium of a limiting nonatomic congestion game. 
A convergence result would provide a stronger support for Wardrop's model as an approximation for large games with many small players,
especially if we can estimate the distance between the corresponding equilibrium solutions.

In a \emph{weighted congestion game}, there is a finite number of players who are characterized by a type and a weight. The type determines the set of feasible strategies for the player and the weight determines the player's 
impact on the costs. Moreover, there is a finite set of resources and each strategy corresponds to a subset of these resources.
Players of a given type have the same set of available strategies. A strategy profile for all players induces a flow on each strategy equal to the total weight of players choosing it, as well as a load on each resource equal to the total weight of players using that resource as part of their strategy.
The cost of using a resource is a weakly increasing function of its load, and the cost of a strategy is additive over its resources.
This defines a finite cost-minimization game.
As shown by \citet{Ros:IJGT1973}, every congestion game in which all the players have the same weight admits a potential and has equilibria in pure strategies. 
With heterogeneous weights, only mixed equilibria are guaranteed to exist \citep{Nas:PNAS1950}.

As a special case of congestion games, a routing game features a finite directed network whose edges represent the resources. The origin-destination  pairs encode the types, and the corresponding origin-destination paths provide the strategies.
To illustrate, consider a routing game over a simple network composed of two parallel edges with the same strictly-increasing cost function $c(\argdot)$, as shown in Figure~\ref{fi:parallel-network}, and suppose that there are $n$ players who need to choose an edge to route an identical weight of $w\equiv d/n$. 
Here, $d$ denotes the total weight or demand.
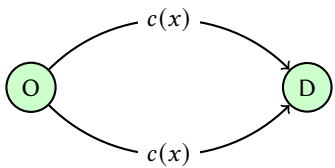
\begin{figure}[ht]
\centering
\begin{tikzpicture}[node distance  = 3 cm,thick,every node/.style={scale=0.8}]
  \tikzset{SourceNode/.style = {draw, circle, fill=green!20}}
  \tikzset{DestNode/.style = {draw, circle, fill=green!20}}
     \node[SourceNode](s){$\mathsf{O}$};
     \node[DestNode,right=of s](t){$\mathsf{D}$};
     \draw[->, bend left = 45](s) to node[fill = white]{$c(x)$} (t);
     \draw[->,bend right = 45 ](s) to node[fill = white]{$c(x)$} (t);
\end{tikzpicture}
\caption{\label{fi:parallel-network} Parallel edge network.}
\end{figure}
In a symmetric equilibrium every player randomizes by choosing each route 
with probability $1/2$. Consequently, the number of players on each route is distributed as a $\Binomial(n,1/2)$ random variable. 
The total load on each edge is therefore $d/n$ times a $\Binomial(n,1/2)$, and converges in distribution to $d/2$, which is precisely the Wardrop equilibrium for a total demand of $d$ units of flow.
Note that already in this simple example there is a multitude of other equilibria where $n_{1}$ and $n_{2}$ players, respectively, choose the upper and lower edges for sure (with $n_{1},n_{2}\le n/2$), and the remaining players (if any) randomize appropriately so as to equalize the expected cost of both routes. {This includes the special case of a pure equilibrium in which half of the players take each route, up to one unit if $n$ is odd.}
As the number of players $n$ tends to infinity, all these different equilibria converge to the unique Wardrop equilibrium with a $(d/2,d/2)$-split of the flow.

In real networks, players are confronted to make decisions while facing multiple sources of uncertainty. 
In particular, even if the population of potential drivers might be known, the subset of drivers that are actually on the road at any given time is random.
In fact, the contribution of an additional car to congestion is small but not negligible, and the congestion experienced by an agent depends basically on how many drivers are on the road at the same time.
To that point, \citet{AngFotLia:AGT2013} and \citet{ComScaSchSti:arXiv2022} studied \emph{Bernoulli congestion games} in which players participate with an independent probability.
If we focus on a small time interval, the probability that any given player participates in the game during that interval emerges naturally as a small parameter.
Motivated by this model, we ask the question whether a congestion game with a large but random number of players and small participation probabilities can be approximated by a nonatomic congestion game.
To address this question we study the convergence of the Nash equilibria for a sequence of Bernoulli games towards the Wardrop equilibria of some nonatomic game.
Taking the limit in this setting yields a different limit game in which the random loads on the resources converge to a family of Poisson random variables, whose expected values can be again characterized as Wardrop equilibria of a suitably defined nonatomic game or, alternatively, as 
an equilibrium of a Poisson game in the sense of \citet{Mye:IJGT1998}. This establishes a novel and precise connection between the Wardrop model and Poisson games.
To provide some insight, consider again the parallel-edge example of Figure~\ref{fi:parallel-network}, except that now each of the $n$ players has a unit weight, but is present in the game with a small probability $d/n$. 
In this case the effective demand is a random variable $D^{n}\sim\Binomial(n,d/n)$, which converges as $n\to\infty$ to a random variable $D\sim\Poisson(d)$. 
Also, in a symmetric equilibrium where each player choses an edge uniformly at random, the load on each route is distributed as a
$\Binomial(n,d/(2n))$ and converges to a $\Poisson(d/2)$, whose expected value is again $d/2$. 
We will show that this convergence holds for all Nash equilibria and for general congestion games with an increasing number of heterogenous players who are active with different vanishing probabilities.

Note that in both cases---vanishing weights and vanishing probabilities---the equilibria of the nonatomic limit games characterize the \emph{expected values} of the random loads. 
However, in the Poisson regime the resource loads remain random in the limit, whereas in the Wardrop 
limit these random loads converge to a constant.
The fact that the Poisson limit retains some variability makes it more suitable to model the traffic flows observed in real networks.
As a matter of fact, real traffic flows exhibit stochastic fluctuations that have 
been empirically confirmed to be close to Poisson distributions, at least under moderate congestion conditions.
As an illustration, Figure~\ref{fig:TrafficCounts} shows the histograms of traffic counts over three consecutive 10 minute intervals, 
observed every Thursday during 2 years on a specific road segment in Dublin. The red curves 
give the expected counts for Poisson distributions with the same mean. These histograms reveal a 
persistent day-to-day variability of traffic flows, so that a model predicting a random distribution is one step closer to reality 
compared to the point estimates provided by Wardrop equilibrium.
Our results establish a theoretically sound connection between the Wardrop and Poisson equilibria, 
which, combined, seem to provide a more sensible model for the traffic flows observed in real networks.

\begin{figure}[!h]
\centering
\includegraphics[scale=.33]{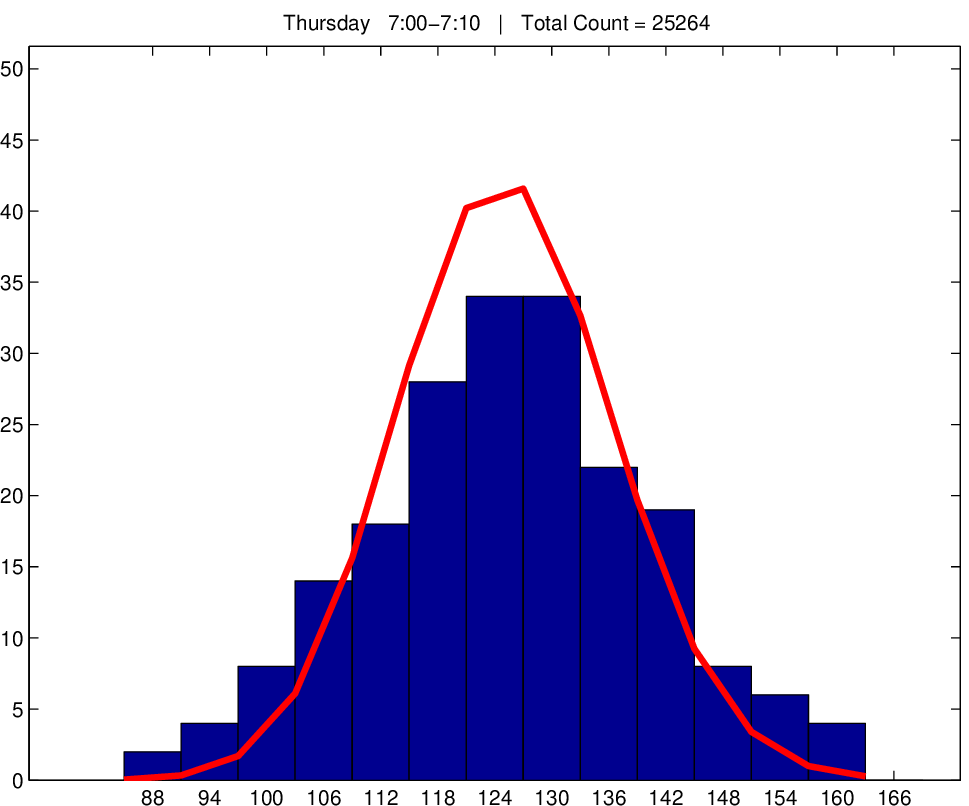}
\includegraphics[scale=.33]{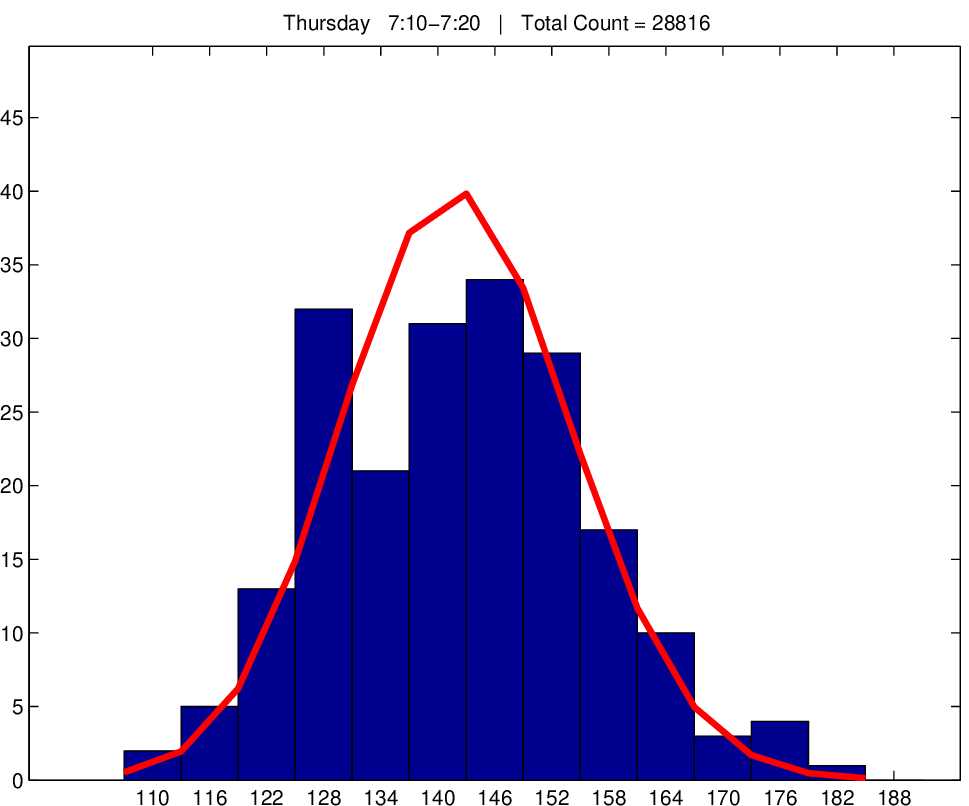}
\includegraphics[scale=.33]{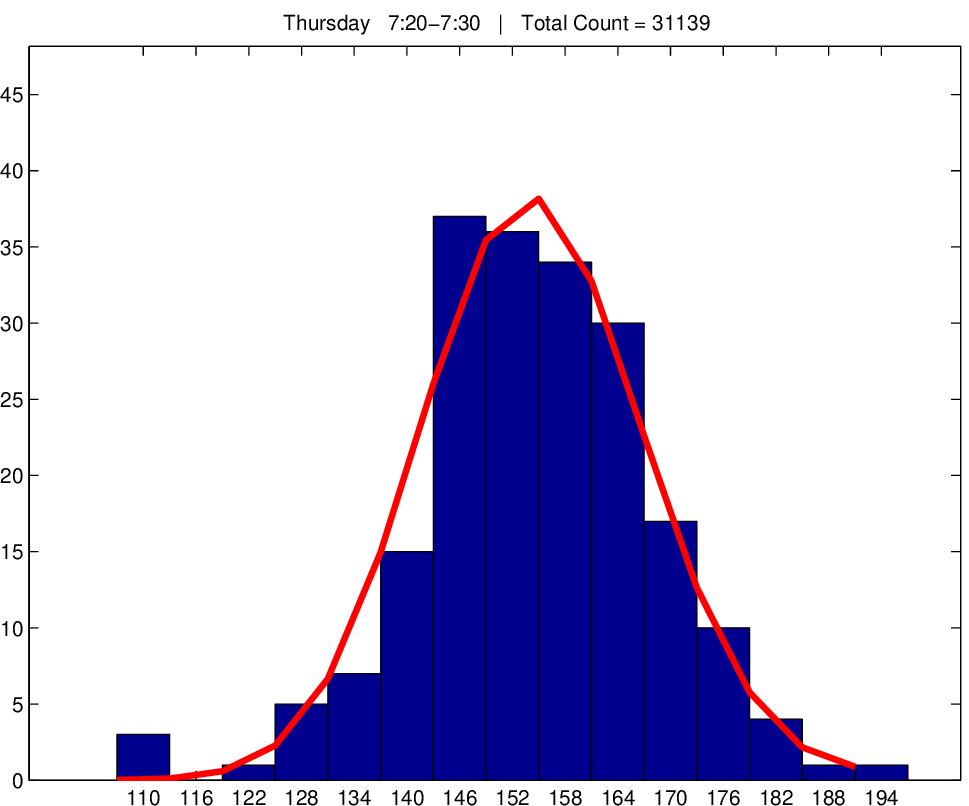}
\caption{\label{fig:TrafficCounts} Traffic counts. Dublin 2017-2018, Thursday 7:00-7:30 AM. Data from Transport Infrastructure Ireland. \url{https://www.tii.ie/roads-tolling/operations-and-maintenance/traffic-count-data/}}
\end{figure}

\subsection{Our Contribution}
\label{suse:our-contribution}

After introducing the relevant classes of congestion games in Section~\ref{se:congestion-games}, Section~\ref{se:convergence-weighted} deals with nonatomic approximations of weighted atomic congestion games.
Under mild and natural conditions we show that Nash equilibria in a sequence of weighted congestion games converge to a Wardrop equilibrium of a limiting nonatomic game. 
More precisely, if the number of players grows to infinity and their weights tend to zero in a way that the aggregate demands converge, then, for any sequence of mixed equilibria in the finite games, the random variables that represent the resource loads converge in distribution to a (deterministic) Wardrop equilibrium for a nonatomic limit game. 
We stress that players are not assumed to be symmetric and that they may have different weights and strategy sets. As long as their weights converge to zero, the random resource loads converge in distribution to some constants, which are precisely a Wardrop equilibrium for the limit game. 
This provides a strong support to Wardrop's model as a sensible approximation for large unsplittable congestion games with many small players.

Section~\ref{se:convergence-stochastic} focuses on approximations of Bernoulli congestion games. {
It considers sequences of Bayesian Nash equilibria for Bernoulli congestion games when the participation probability of players tends to zero, and establishes the convergence of the resource loads towards a family of Poisson variables whose expected values are Wardrop equilibria of another nonatomic game with suitably defined costs. 
In Section~\ref{se:poisson} we unveil the connection between this nonatomic limit game and a Poisson game in the sense of \citet{Mye:IJGT1998}, namely, the Wardrop equilibria are in one-to-one correspondence with the equilibria of an associated Poisson game with countably many players and demands distributed as Poisson random variables. 
We stress that Poisson games were originally introduced axiomatically 
and not as a limit of a sequence of finite games. We close this gap by proving a general result on convergence of sequences of games with Bernoulli players (not necessarily congestion games) towards Poisson games. This constitutes a novel alternative justification for Poisson games.

In Section~\ref{se:rate-of-convergence} we derive non-asymptotic bounds for the distance between 
the  distributions of the  equilibrium loads in the finite games and the Wardrop and Poisson counterparts. These bounds provide explicit estimates for using 
these simpler models  to approximate the equilibria in the finite games.
Finally, in Section~\ref{se:PoA} we turn to investigate the convergence of the price of anarchy (PoA)  and the price of stability (PoS)---as measures of the inefficiency of equilibria---for sequences of weighted congestion games and  Bernoulli congestion games.

In summary, the main contributions of this paper are:
\begin{enumerate}[(a)]
\item
A formal statement and proof of the fact that Nash equilibria of weighted atomic unsplittable congestion games converge to Wardrop equilibria of nonatomic congestion games, as the weights of all players vanish. 
This is achieved under very weak conditions and holds even when weights are different (in that case pure Nash equilibria may not exist).

\item
A formal statement and proof of the fact that, as the participation probability of each player vanishes, the random loads in Bayesian Nash equilibria of Bernoulli congestion games converge to Poisson random variables whose expectations are Wardrop equilibria of a suitably defined nonatomic congestion game.
Again, this is achieved under weak conditions; for instance, the participation probability of different players may differ.

\item
A connection between Poisson and Wardrop equilibria for congestion games, and a formal asymptotic justification of Poisson games as limits of Bernoulli games. 

\item 
Non-asymptotic bounds for the distance between the equilibria in weighted and Bernoulli congestion games to the corresponding Wardrop and Poisson equilibria.

\item
A proof of the convergence of the PoA and PoS for sequences of weighted and Bernoulli congestion games.

\end{enumerate}

\subsection{Related Literature}
\label{suse:related-literature}

Convergence of weighted congestion games towards nonatomic games has been considered before, 
mainly for the case in which players can split their weight over the available strategies.
\citet{HauMar:N1985} were the first to study such convergence issues. 
In a setting of atomic routing games with splittable flows and elastic demands, they proved that when players on each  origin-destination (OD) pair are identically replicated $n$ times, in the limit when $n$ grows, the splittable equilibria converge towards Wardrop equilibria. More recently,
\citet{JacWan:arXiv2018} considered splittable routing on parallel networks with heterogenous players, 
and \citet{JacWan:EJOR2022} studied the approximation of nonatomic aggregative games by a sequence of finite splittable games.
Also the relation between Nash equilibria and a Wardrop-like notion of equilibrium in aggregative games with finitely many players, was studied by \citet{PacGenParKamLyg:IEEETAC2019}.

We are not aware of any results on convergence of equilibria for unsplittable weighted congestion games. 
The closest is \citet{Mil:MOR2000} who studied limits of
\emph{finite crowding games} with an increasing number of players $n$ with identical weights $1/n$, i.e., unweighted congestion games with singleton strategies and type-dependent costs. 
Considering pure equilibria only, he established the convergence of the per-type aggregate strategy loads towards
an equilibrium of a large crowding game.
In another related result, \citet{San:JET2001} proved that infinite potential games can be obtained as limits of finite potential games.
His results are related to ours because every unweighted congestion game is a potential game \citep[][]{Ros:IJGT1973} and, 
conversely, every finite potential game is isomorphic to an unweighted congestion game \citep[][]{MonSha:GEB1996}. 
The difference is that here we consider the more general class of weighted congestion games, which in general do not have a potential structure. 
Moreover, Sandholm studied only the convergence of the potential functions but did not address the convergence of equilibria.
We also mention \citet{FelImmLucRouSyr:STOC2016}, who established asymptotic upper bounds for the PoA in unsplittable games when the number $n$ of players increases and their weight is $1/n$. 
Their results are based on the so-called \emph{$(\lambda,\mu)$-smoothness in the large} which provides upper bounds for the PoA, but again without addressing the convergence of equilibria. 
On a similar vein, \citet{ChedeKKemSch:TEAC2014}
considered nonatomic congestion games as limits of atomic unsplittable games, in order to define altruism and to investigate its impact on the price of anarchy.

In the present paper we consider unsplittable weighted congestion games and present approximation results for the flows themselves, which is a stronger statement and 
therefore technically harder to accomplish. In fact, whereas equilibria for splittable congestion games (as well as pure equilibria in the unsplittable case) are conveniently described in terms of flows and loads that 
live in the same finite dimensional space as Wardrop equilibria, in the unsplittable case with equilibria in mixed strategies these flows and loads become random so that limits must be properly understood 
in terms of convergence of random variables.
This can be used later to derive the convergence of aggregate metrics such as the PoA.

Besides routing games, auctions are another class of games where there is significant previous work on asymptotic properties when the number of players becomes large.
\citet{SatWil:RES1989} and \citet{RusSatWil:E1994} studied the efficiency loss in markets as the number of bidders grows.
\citet{CasdeV:TI2005,LamSho:ICEC2007,BluHol:EC2008,FibGav:IJGT2010}  used asymptotic analysis to understand the convergence of revenue in various auctions settings.
Like our work, several of these references compute equilibria or approximate equilibria as the auctions become larger, and use those characterizations to derive limiting results. Although convergence  might appear natural at first sight, some asymptotic results are fragile and only hold under carefully stated assumptions. For instance, \citet{RobPos:E1976} showed that the gain from deviating from competitive behavior in an exchange economy does not need to diminish as the number of agents in the economy goes to infinity. Also \citet{Swi:E2001} and \citet{FelImmLucRouSyr:STOC2016} showed that the inefficiency in auctions does not need to disappear in the limit with a large number of objects and players.

Several old papers considered the stochastic aspects of traffic, both theoretically and empirically, including the role of the Poisson distribution for modeling it
\citep[see, e.g.,][]{Ada:JICE1936,May:B1954,Oli:OR1961,Bre:BIIS1962,Bre:AMS1963,Buc:TR1967,Mil:TS1970}.
More recently, various authors have studied congestion games with stochastic features,
focusing on the efficiency of equilibria under incomplete information. 
For instance, \citet{GaiMomTie:TCS2008} studied the inefficiency of equilibria for congestion games in which the weight of a player is private information. 
Looking at the cost uncertainty, \citet{NikSti:OR2014} and \citet{PilNikSha:TEAC2016} considered players' risk attitudes in the nonatomic and atomic cases, respectively.
\citet{Rou:ACMTEC2015} showed that whenever player types are independent, the inefficiency bounds for complete information games extend to Bayesian Nash equilibria of the incomplete information game. \citet{WanDoaChe:TRB2014} and \citet{CorHoeSch:TRB2019} looked at similar questions for nonatomic routing games. 
This trend does not only include congestion games: \citet{Sti:Springer2014} studied the efficiency of some classical queueing models on various networks, whereas \citet{HasNowSha:EJOR2018} examined a queueing model with heterogeneous agents and studied how the inefficiency of equilibria varies with the intensity function. 
Closer to our results on Bernoulli games, \citet{AngFotLia:AGT2013} considered congestion games with stochastic players who are risk-averse, restricting their attention to the case of parallel edges.
In the same spirit, \citet{ComScaSchSti:arXiv2022} studied Bernoulli congestion games, where each player $i$ takes part in the game independently with probability $p_{i}$, and found sharp bounds for the PoA as a function of the maximum $p_{i}$.

Games with a random number of players were introduced by 
\citet{Mye:GEB1998,Mye:IJGT1998,Mye:JET2000,Mye:JET2002}, with the main goal of analyzing elections with a large number of voters.
In his seminal paper, \citet{Mye:IJGT1998} showed that the case where the number of players has a Poisson distribution is of particular relevance, being the only case 
where an environmental equivalence holds, i.e., the belief of a player of any type about the type profile of the other active players coincides with the belief of an external game theorist.
\citet{Mye:JET2000} dealt with large Poisson games in which the parameter of the Poisson distribution diverges. 
His approach differs from ours in the sense that he starts with a Poisson distribution, axiomatically justified, and lets the expectation of this distribution go to infinity. 
In our case we start with a finite number of players and let their number diverge in such a way that in the limit we get a Poisson distribution, but not necessarily with a large parameter.
{Our derivation of Poisson games as limits of finite Bernoulli games seems to be new.}
The asymptotic approach we take to Poisson games is based on results from Poisson approximation theory. 
In probability, results about this topic abound. For an overview of this literature, we refer the reader to the books and surveys by
\citet{BarHolJan:OUP1992,BarChe:SUP2005,Nov:PS2019}.

Several other papers have studied the properties of games with random number of players and Poisson games in particular. 
Among them, \citet{Mil:GEB2004} provided a sophisticated analysis of general games with population uncertainty. 
\citet{DeSPim:GEB2009,DeSMerPim:JET2014,MerPim:JET2017} dealt with various properties of equilibria in Poisson games, such as stability and existence of equilibria in undominated strategies.
Other papers apply Poisson games in different settings, not necessarily related to elections.
\citet{LimMat:GEB2009} studied contests with finitely many players, where each player takes part in the game independently with the same probability, whereas
\citet{DuGon:TRB2016} used a Poisson game to model parking congestion and proposed a decentralized and coordinated online parking mechanism to reduce congestion.
Let us finally mention \citet{KorPap:IEEETAC2015} and \citet{BerDes:DGA2017} who studied dynamic games where players arrive at random over time.

\subsection{Organization of the Paper}
\label{suse:organization}

The paper is organized as follows.
Section~\ref{se:congestion-games} describes the different versions of the congestion games that we will study, and introduces the basic notation.
Sections~\ref{se:convergence-weighted} and~\ref{se:convergence-stochastic} study the convergence of sequences of weighted congestion games and Bernoulli congestion games respectively.
Section~\ref{se:poisson} examines Poisson games in greater generality and focuses on convergence to them.
Section~\ref{se:rate-of-convergence} provides the non-asymptotic bounds on the rate of convergence of the finite congestion games to their limits, followed by Section~\ref{se:PoA} where we present convergence results for the PoA and the PoS.
Section~\ref{se:conclusions} presents some conclusions and possible directions for future work.
Appendices~\ref{se:PoA-Proofs} and \ref{se:rate-of-convergence-proof} include auxiliary results and proofs used in Sections~\ref{se:PoA} and \ref{se:rate-of-convergence} respectively, whereas
Appendix~\ref{se:Poisson-Bernoulli} provides a short summary of known results on Poisson approximations for sums of Bernoulli random variables that are used to derive our results.
Finally, Appendix~\ref{se:symbols} contains a glossary of notations for ease of reference.

\section{Congestion Games: Definitions and Notations}
\label{se:congestion-games}

{This section introduces the basic models of congestion games and the variations that will be studied in this paper.}
Informally, a \emph{congestion game} is played over a set of resources whose costs depend on the mass of players using the resource. 
Each player chooses a subset of resources among the allowed subsets, seeking to achieve the minimum possible cost. 
Throughout this paper we consider a finite set of resources $\mathscr{E}$, where each $e\in\mathscr{E}$ is associated with a
weakly increasing continuous cost function $c_{e}:\mathbb{N}_+\to \mathbb{N}_+$. We also fix a finite set of types $\mathscr{T}$ where each type $t\in\mathscr{T}$
is associated with a set of feasible actions $\mathscr{S}_{t}\subseteq 2^{\mathscr{E}}$, which describes the pure strategies. 

To name a standard example of a congestion game, routing games capture the topology of a network structure. 
These games are defined over the set of edges of a finite graph, encoded by $\mathscr{E}$, and the set of OD pairs, encoded by $\mathscr{T}$. 
The set of actions $\mathscr{S}_{t}$ contains the feasible paths for the OD pair representing type $t$, and the costs $c_{e}(\argdot)$ represent the delays when traversing an edge $e$.

The structural objects 
\begin{equation}
\label{eq:structural}
\mathscr{G}=\parens{\mathscr{E},(c_{e})_{e\in\mathscr{E}},\mathscr{T},(\mathscr{S}_{t})_{t\in\mathscr{T}}}
\end{equation} 
will be the same 
in all the congestion games considered hereafter, and the only differences will be in how we describe the set of players and their behavior.
In the nonatomic framework, players are considered to be infinitesimally small and the model is stated in terms of the aggregate mass of players that use each strategy and resource.
In contrast, for weighted congestion games, as well as for Bernoulli congestion games, we have a finite set of players who behave strategically. 
The rest of this section describes these three different models precisely.

As a guide for the notation used in the sequel, we write $\bigtriangleup(\mathscr{S}_{t})$ for the simplex of all probability distributions over the strategy set $\mathscr{S}_{t}$. 
We use capital letters for random variables and lower case for their expected values.
For instance, $X_{e}$ will represent a random load on a resource $e\in\mathscr{E}$ with expected value $x_{e}=\Expect[X_{e}]$,
and we will add a hat $\widehat{x}_{e}$ when referring to an equilibrium.

\subsection{Nonatomic Congestion Games}
\label{suse:nonatomic}

A \emph{nonatomic congestion game} $\Gamma^{\infty}$ is given by a pair $\parens{\mathscr{G},\boldsymbol{d}}$, where $\mathscr{G}$ stands for the structural objects of the game as in \eqref{eq:structural}, and $\boldsymbol{d}=(d_{t})_{t\in\mathscr{T}}$ is a vector of demands with $d_{t}\geq 0$ representing the aggregate demand of type $t$.
The \emph{total demand} is given by the sum over all types $d_{\tot}=\sum_{t\in\mathscr{T}}d_{t}$.

Each demand $d_{t}$ is split over the corresponding strategies $\mathscr{S}_{t}$ and induces loads on the resources. 
Specifically, a \emph{strategy flow} vector $\boldsymbol{y}\coloneqq\parens{y_{t,s}}_{t\in\mathscr{T}, s\in\mathscr{S}_{t}}$ and a \emph{resource load} vector 
$\boldsymbol{x}\coloneqq\parens{x_{e}}_{e\in\mathscr{E}}$ are called \emph{feasible} if they satisfy the following constraints:
\begin{equation}
\label{eq:feasibility-constr}
\forall t\in\mathscr{T},\quad d_{t}=\sum_{s\in\mathscr{S}_{t}}y_{t,s} \text{ with $y_{t,s}\geq 0,\quad$ and }\quad
\forall e\in\mathscr{E}, \quad x_{e}=\sum_{t\in\mathscr{T}}\sum_{s\in\mathscr{S}_{t}}y_{t,s}\mathds{1}_{\braces{e\in s}}.
\end{equation}
The set of such \emph{feasible flow-load pairs} $(\boldsymbol{y},\boldsymbol{x})$ is denoted by $\mathscr{F}(\boldsymbol{d})$. 
Note that the resource loads $\boldsymbol{x}$ are uniquely defined by the strategy flows $\boldsymbol{y}$, but not vice versa. 
Nevertheless, instead of only considering strategy flows, we will refer to flow-load pairs because some concepts are easier to express 
in terms of flows whereas others are defined in terms of loads. 
The notation and nomenclature is inspired by routing games, and most of our examples will be of this type because they are intuitive and well-studied. 
However, we will use these terms in the more general setting of congestion games, even when there is no network structure of resources and strategies.

A \emph{Wardrop equilibrium} is defined as a feasible flow-load pair $(\widehat{\boldsymbol{y}},\widehat{\boldsymbol{x}})\in\mathscr{F}(\boldsymbol{d})$ for which the prevailing cost of all used strategies 
is minimal, or, mathematically,
\begin{equation}
\label{eq:Wardrop}
\forall t\in\mathscr{T},\ \forall s,s'\in\mathscr{S}_{t}\quad
\widehat{y}_{t,s}>0 \implies
\sum_{e\in s}c_{e}(\widehat{x}_{e})\le
\sum_{e\in s'}c_{e}(\widehat{x}_{e}).
\end{equation}
Because only strategies with minimum cost are used, all the strategies used by any given type have the same cost and, as a consequence, any strategy flow decomposition of $\widehat{\boldsymbol{x}}$ yields an equilibrium.
The set of Wardrop equilibria of the nonatomic game $\Gamma^{\infty}$ will be denoted by $\WE(\Gamma^{\infty})$.

\subsection{Weighted Congestion Games}
\label{suse:weighted}

A \emph{weighted congestion game}  is defined as a tuple $\Gamma_{{\scriptscriptstyle \Wg}}=\parens{\mathscr{G},\parens{w_{i},t_{i}}_{i\in\mathscr{N}}}$, 
where $\mathscr{N}$ is a finite set of players and each player $i\in\mathscr{N}$ has a weight $w_{i}\in\mathbb{N}_{+}$ and a type $t_{i}\in\mathscr{T}$
that determines her strategy set $\mathscr{S}_{t_{i}}$. 
Our use of the term ``type'' slightly differs from what is common in game theory, where a type is a random variable associated to each player, 
whose distribution is common knowledge but whose realization is private information. Here, the players' types are deterministic and the type 
of each player is common knowledge.

The aggregate demand for each type $t$ and the total demand are given by 
\begin{equation}
\label{eq:demand-weights}
d_{t}=\sum_{i\colon t_{i}=t}w_{i}, \text{\quad and \quad}
d_{\tot}=\sum_{t\in\mathscr{T}}d_{t}=\sum_{i\in\mathscr{N}}w_{i}.
\end{equation}

Let $\boldsymbol{\sigma}=\parens{\sigma_{i}}_{i\in\mathscr{N}}$ be a \emph{mixed strategy profile}, where $\sigma_{i}\in\bigtriangleup(\mathscr{S}_{t_{i}})$
represents the mixed strategy used by player $i\in\mathscr{N}$, and let 
$\Sigma\coloneqq\times_{i\in\mathscr{N}}\bigtriangleup(\mathscr{S}_{t_{i}})$ be the set of mixed strategy profiles. 
Call $\Expect_{\boldsymbol{\sigma}}$ the expectation with respect to 
the product probability measure $\Prob_{\boldsymbol{\sigma}}\coloneqq\times_{i\in\mathscr{N}}\sigma_{i}$ over the set of pure strategy profiles $\mathscr{S} = \times_{i\in\mathscr{N}}\mathscr{S}_{t_{i}}$.
If $S_{i}$ is the random strategy of player $i$, whose distribution is $\sigma_{i}$, then the probability that 
player $i$ uses a given resource $e$ is
\begin{equation}
\label{eq:prob-e-in-Si}
\sigma_{i,e}\coloneqq\Prob_{\boldsymbol{\sigma}}(e\in S_{i})=
\sum_{s\in\mathscr{S}_{t_{i}}}\sigma_{i}(s)\mathds{1}_{\braces{e\in s}}.
\end{equation}
Accordingly, the strategy flows $Y_{t,s}$ and the resource loads $X_{e}$ become random variables 
\begin{equation}
\label{eq:flow-load-weights-rand}
Y_{t,s}=\sum_{i\colon t_{i}=t}w_{i}\mathds{1}_{\braces{S_{i}=s}} \text{\quad and \quad}
X_{e}=\sum_{i\in\mathscr{N}}w_{i}\mathds{1}_{\braces{e\in S_{i}}},
\end{equation}
given by the random realizations of $S_{i}$. Their expected values are
\begin{equation*}
y_{t,s}\coloneqq\Expect_{\boldsymbol{\sigma}}\bracks{Y_{t,s}}
=\sum_{i\colon t_{i}=t}w_{i}\sigma_{i}(s) \text{\quad and \quad}
x_{e}\coloneqq\Expect_{\boldsymbol{\sigma}}\bracks{X_{e}}=\sum_{i\in\mathscr{N}}w_{i}\sigma_{i,e}.
\end{equation*}
A straightforward calculation shows that the pair $(\boldsymbol{y},\boldsymbol{x})$ satisfies \eqref{eq:feasibility-constr}, so that $\parens{\boldsymbol{y},\boldsymbol{x}}\in\mathscr{F}(\boldsymbol{d})$.

To take the perspective of a fixed player $i$, we assume the player already selected resource $e$ and define the conditional load
\begin{equation}
\label{eq:load-e-i-present}
X_{i,e}\coloneqq w_{i}+ \sum_{j\neq i}w_{j}\mathds{1}_{\braces{e\in S_{j}}}.
\end{equation}
Using this, the expected cost of player $i$ conditional on the player using the resource $e$ is
\begin{equation*}
\Expect_{\boldsymbol{\sigma}}\bracks{c_{e}(X_{e})\mid e\in S_{i}}
=\Expect_{\boldsymbol{\sigma}}\bracks{c_{e}(X_{i,e})}.
\end{equation*}
A strategy profile $\widehat{\boldsymbol{\sigma}}\in\Sigma$ is a \emph{mixed Nash equilibrium} if 
\begin{equation*}
\forall i\in\mathscr{N},\ \forall s,s'\in\mathscr{S}_{t_{i}}\quad
\widehat{\sigma}_{i}(s)>0 \implies 
\sum_{e\in s}\Expect_{\widehat{\boldsymbol{\sigma}}}\bracks{c_{e}(X_{i,e})}\le
\sum_{e\in s'}\Expect_{\widehat{\boldsymbol{\sigma}}}\bracks{c_{e}(X_{i,e})}.
\end{equation*}
The set of mixed Nash equilibria of $\Gamma_{{\scriptscriptstyle \Wg}}$ is denoted by $\MNE(\Gamma_{{\scriptscriptstyle \Wg}})$. 

When all the players have the same weight $w_{i}\equiv w$,
\citet{Ros:IJGT1973} proved that $\Gamma_{{\scriptscriptstyle \Wg}}$ is a potential game and, as a consequence, pure Nash equilibria are guaranteed to exist \citep[see also][]{MonSha:GEB1996}. 
Also, \citet{FotKonSpi:TCS2005} showed that every weighted congestion game with affine costs admits an exact potential. 
Conversely, \citet{HarKliMoh:TCS2011} proved that if $\mathscr{C}$ is a class of cost functions such that every weighted congestion game with costs in $\mathscr{C}$ admits a potential, then $\mathscr{C}$ only contains affine functions.
Existence of pure equilibria in weighted congestion games was further studied by \citet{HarKli:MOR2012}. 
Beyond these cases, one can only guarantee the existence of equilibria in mixed strategies \citep[][]{Nas:PNAS1950}.

\subsection{Bernoulli Congestion Games}
\label{suse:stochastic}

In a weighted congestion game the randomness arises only from the players' mixed strategies. 
In this section, we add another stochastic element: players may not be present in the game.
A \emph{Bernoulli congestion game}  is a congestion game in which each player $i\in\mathscr{N}$ has a unit weight $w_{i}\equiv 1$, but takes part in the game only with some probability $u_{i}\in (0,1]$ and otherwise remains inactive and incurs no cost. The participation events, assumed to be independent, are encoded in random variables $U_{i}\sim \Bernoulli(u_{i})$, which indicate whether player $i\in \mathscr{N}$ is active or not.
A Bernoulli congestion game is denoted by $\Gamma_{{\scriptscriptstyle \Bg}}=\parens*{\mathscr{G},(u_{i},t_{i})_{i\in \mathscr{N}}}$.

The framework is similar to a weighted congestion game in which the $w_{i}$'s are replaced by random weights $U_{i}\in\braces{0,1}$ with expected value $u_{i}$, so that the per-type demands become the random variables
$D_{t}=\sum_{i:t_{i}=t}U_{i}$ with expected values $d_{t}=\Expect\bracks{D_{t}}=\sum_{i:t_{i}=t}u_{i}$.
 The formulas are therefore very similar, with $w_{i}$ replaced by $U_{i}$, or by $u_{i}$ when taking expectations. 
Nevertheless, later on we will see  that the two classes of games behave quite differently in some respects.

Let $\boldsymbol{\sigma}=(\sigma_{i})_{i\in\mathscr{N}}\in\Sigma$ be a profile of mixed strategies. We assume that each player
chooses her mixed strategy before the actual realization of these random variables, so that
no player knows for sure who will be present in the game.
Now randomness is induced both by the random participation and by these mixed strategies.
This is described by a discrete probability space $\parens{\Omega,2^{\Omega},\Prob_{\boldsymbol{\sigma}}}$, where 
$\Omega=\braces{0,1}^{\mathscr{N}}\times\mathscr{S}$ with $\mathscr{S} = \times_{i\in\mathscr{N}}\mathscr{S}_{t_{i}}$ as before,
and
$\Prob_{\boldsymbol{\sigma}}$ is now the probability measure induced by $\boldsymbol{\sigma}$ and by the random participation of players; that is, 
for $\boldsymbol{\omega}\in\braces{0,1}^{\mathscr{N}}$ and $\boldsymbol{s}\in\mathscr{S}$ we have
\begin{equation*}
\Prob_{\boldsymbol{\sigma}}\parens*{\boldsymbol{\omega},\boldsymbol{s}}=\prod_{i\in\mathscr{N}}\Prob_{i}(\omega_{i})\sigma_{i}(s_{i}),
\end{equation*}
with $\Prob_{i}(1)=u_{i}$ and $\Prob_{i}(0)=1-u_{i}$.
The corresponding expectation operator will be denoted $\Expect_{\boldsymbol{\sigma}}$.

As before, $\mathds{1}_{\braces{e\in S_{i}}}$ is a Bernoulli random variable indicating whether the random strategy $S_{i}$ 
includes resource $e$, with \eqref{eq:prob-e-in-Si} still in place.
Additionally, let $U_{i,e}=U_{i}\mathds{1}_{\braces{e\in S_{i}}}$ indicate whether player $i$ is active and chooses resource $e$, for which we have $\Expect_{\boldsymbol{\sigma}}\bracks{U_{i,e}}=u_{i}\,\sigma_{i,e}$.
Then, the total number of active players of type $t\in\mathscr{T}$ using strategy $s\in\mathscr{S}_{t}$, and the total load on resource $e\in\mathscr{E}$, are now the random variables
\begin{equation}
\label{eq:stoch-flow-load}
Y_{t,s}=\sum_{i:t_{i}=t} U_{i} \mathds{1}_{\braces{S_{i}=s}}, \text{\quad and \quad}
X_{e}=\sum_{i\in\mathscr{N}}U_{i,e}.
\end{equation}
The expected strategy flows and resource loads are
\begin{equation}
\label{eq:stoch-expect-flow-load}
y_{t,s}\coloneqq\Expect_{\boldsymbol{\sigma}}\bracks*{Y_{t,s}}=\sum_{i:t_{i}=t}u_{i}\,\sigma_{i}(s), \text{\quad and \quad}
x_{e} \coloneqq \Expect_{\boldsymbol{\sigma}}\bracks*{X_{e}}=\sum_{i\in\mathscr{N}}u_{i}\, \sigma_{i,e}.
\end{equation}
Once again, the pair $(\boldsymbol{y},\boldsymbol{x})$ satisfies \eqref{eq:feasibility-constr}, so that $\parens{\boldsymbol{y},\boldsymbol{x}}\in\mathscr{F}(\boldsymbol{d})$.

When $U_{i,e}=1$, conditional on player $i$ selecting resource $e\in\mathscr{E}$, its load is
$X_{i,e}=1+Z_{i,e}$, where $Z_{i,e}$ represents the number of other players
using that resource, that is,
\begin{equation}
\label{eq:Z}
Z_{i,e}= \sum_{j\neq i}U_{j,e}.
\end{equation}
Then the conditional expected cost for player $i\in\mathscr{N}$ when using this resource is 
\begin{equation}
\label{eq:exp-cost-i}
\Expect_{\boldsymbol{\sigma}}\bracks*{c_{e}\big(X_{e})\mid U_{i,e}=1}= \Expect_{\boldsymbol{\sigma}}\bracks*{c_{e}(X_{i,e})}= \Expect_{\boldsymbol{\sigma}}\bracks*{c_{e}(1+Z_{i,e})}.
\end{equation}
Note that in this setting all players have unit weight so that the resource loads are integer-valued and 
therefore the costs $c_{e}:\mathbb{N}^*\to\mathbb{N}_+$ need only be defined over the positive integers $\mathbb{N}^*=\mathbb{N}\setminus\{0\}$.

A strategy profile $\widehat{\boldsymbol{\sigma}}\in\Sigma$ is a \emph{Bayesian Nash equilibrium}  if
\begin{equation*}
\forall i\in\mathscr{N}, \forall s,s'\in\mathscr{S}_{t_{i}} \quad \widehat{\sigma}_{i}(s)>0 \implies \sum_{e\in s}\Expect_{\widehat{\boldsymbol{\sigma}}}\bracks*{c_{e}(1+Z_{i,e})}\leq\sum_{e\in s'}\Expect_{\widehat{\boldsymbol{\sigma}}}\bracks*{c_{e}(1+Z_{i,e})}.
\end{equation*}
The set of all Bayesian Nash equilibria of $\Gamma_{{\scriptscriptstyle \Bg}}$ is denoted by $\BNE(\Gamma_{{\scriptscriptstyle \Bg}})$.

\begin{remark}
\citet[proposition~3.3]{ComScaSchSti:arXiv2022} showed that $\Gamma_{{\scriptscriptstyle \Bg}}$ is a potential game, which drives the class of Bernoulli congestion games apart from the class of weighted congestion games, which admit a potential only in special cases.
In particular, a Bernoulli congestion game always has equilibria in pure strategies.
However, here we will consider both pure and mixed equilibria.
\end{remark}

\section{Convergence of Weighted Congestion Games}
\label{se:convergence-weighted}

Now that we laid out the games we are considering, we proceed to study the convergence of weighted congestion games to nonatomic congestion games when the number of players increases and their weights decrease. 
The only stochastic element appearing in this section is the fact that players randomize by considering mixed strategies. {
Under mild and natural conditions, we prove that the equilibrium resource loads converge in distribution to constants. These constants are equal to the resource loads prevailing under a Wardrop equilibrium of the corresponding nonatomic game.}

We consider a sequence of weighted congestion games
\begin{equation*}
\Gamma_{{\scriptscriptstyle \Wg}}^{n} = \parens{\mathscr{G},\parens{w_{i}^{n},t_{i}^{n}}_{i\in\mathscr{N}^{n}}}
\end{equation*}
in which the elements that vary over the sequence are the set $\mathscr{N}^{n}$ of players, their weights $w_{i}^{n}$, and their types $t_{i}^{n}$.
We want to study if and how the equilibria for this sequence converge. 
All the notations in Section~\ref{suse:weighted} will remain in place, by simply adding the superscript $n$.
We assume that the number of players goes to infinity and the sequence of weights goes to zero in such a way that the aggregate demand for each type converges. In other words, as there are more players, no player becomes dominant and the demands remain bounded. This is captured by the following asymptotic behavior as $n\to\infty$:
\begin{subequations}
\label{eq:H1}
\begin{align}
\label{eq:players-n}
&\abs{\mathscr{N}^{n}}\to\infty,\\
\label{eq:w-n}
&w^{n}\coloneqq\max_{i\in\mathscr{N}^{n}}w_{i}^{n}\to 0,\\
\label{eq:d-n-t}
&d_{t}^{n}\coloneqq \sum_{i\colon t_{i}^{n}=t}w_{i}^{n}\to d_{t}.
\end{align}
\end{subequations}

Under these conditions, the random loads at equilibrium for the sequence of games $\Gamma_{{\scriptscriptstyle \Wg}}^{n}$ converge in distribution to the loads of a Wardrop equilibrium of the corresponding nonatomic limit game.

\begin{theorem} 
\label{th:T1}
Let $\Gamma_{{\scriptscriptstyle \Wg}}^{n}$ be a sequence of weighted congestion games satisfying the assumptions in \eqref{eq:H1}, and let $\widehat{\boldsymbol{\sigma}}^{n}\in\MNE(\Gamma_{{\scriptscriptstyle \Wg}}^{n})$ be an
arbitrary sequence of mixed Nash equilibria.
Then the corresponding sequence of expected flow-load pairs $\parens{\widehat{\boldsymbol{y}}^{n}, \widehat{\boldsymbol{x}}^{n}}$ is bounded and every accumulation point
$\parens{\widehat{\boldsymbol{y}}, \widehat{\boldsymbol{x}}}$ is a Wardrop equilibrium of the nonatomic congestion game $\Gamma^{\infty}=\parens{\mathscr{G},\parens{d_{t}}_{t\in\mathscr{T}}}$.
Furthermore, along every convergent subsequence, the random flow-load pairs $\parens{{\boldsymbol{Y}}^{n}, {\boldsymbol{X}}^{n}}$ converge in distribution towards $\parens{\widehat{\boldsymbol{y}}, \widehat{\boldsymbol{x}}}$.
\end{theorem}

\proof{Proof.}
As for every mixed strategy profile, the expected flow-load pair $\parens{\widehat{\boldsymbol{y}}^{n},\widehat{\boldsymbol{x}}^{n}}$ belongs to $\mathscr{F}(\boldsymbol{d}^{n})$ 
so that we have \eqref{eq:feasibility-constr}, adding the superscript $n$ to all the terms involved.
Because $d_{t}^{n}\to d_{t}$, it follows from \eqref{eq:feasibility-constr} that the sequence $\parens{\widehat{\boldsymbol{y}}^{n},\widehat{\boldsymbol{x}}^{n}}$ is bounded,
and then passing to the limit in these equations we conclude that every accumulation point $\parens{\widehat{\boldsymbol{y}},\widehat{\boldsymbol{x}}}$ belongs to $\mathscr{F}(\boldsymbol{d})$. 

Now take a convergent subsequence and for simplicity rename it so that the full sequence converges $\parens{\widehat{\boldsymbol{y}}^{n},\widehat{\boldsymbol{x}}^{n}}
\to\parens{\widehat{\boldsymbol{y}},\widehat{\boldsymbol{x}}}$.
From \eqref{eq:flow-load-weights-rand} we get
\begin{equation*}
\Var_{\widehat{\boldsymbol{\sigma}}^{n}}\bracks{X_{e}^{n}}
= \Var_{\widehat{\boldsymbol{\sigma}}^{n}}\bracks*{\sum_{i\in\mathscr{N}^{n}}w_{i}^{n}\mathds{1}_{\braces{e\in S^{n}_{i}}}}
= \sum_{i\in\mathscr{N}^{n}}\parens{w_{i}^{n}}^{2}\widehat{\sigma}_{i,e}^{n}(1-\widehat{\sigma}_{i,e}^{n})
\le\frac{1}{4}\sum_{i\in\mathscr{N}^{n}}\parens{w_{i}^{n}}^{2}.
\end{equation*}
It can be shown similarly that
\begin{equation*}
\Var_{\widehat{\boldsymbol{\sigma}}^{n}}\bracks{Y_{t,s}^{n}}
\le \frac{1}{4}\sum_{i\in\mathscr{N}^{n}}\parens{w_{i}^{n}}^{2}.
\end{equation*}
Conditions \eqref{eq:H1} in turn imply 
\begin{equation*}
\sum_{i\in\mathscr{N}^{n}}(w_{i}^{n})^{2}\leq w^{n}\sum_{i\in\mathscr{N}^{n}}w_{i}^{n}= w^{n}\sum_{t\in\mathscr{T}}d_{t}^{n}\to 0,
\end{equation*}
so that $\Var_{\widehat{\boldsymbol{\sigma}}^{n}}\bracks{Y_{t,s}^{n}}\to 0$ and $\Var_{\widehat{\boldsymbol{\sigma}}^{n}}\bracks{X_{e}^{n}}\to 0$, from which convergence in distribution follows. 

It remains to show that $\parens{\widehat{\boldsymbol{y}},\widehat{\boldsymbol{x}}}$ is a Wardrop equilibrium, that is, we need to establish \eqref{eq:Wardrop}.
If $\widehat{y}_{t,s}>0$, then, for $n$ large enough, we have $\widehat{y}_{t,s}^{n}>0$ and there exists some player $i\in\mathscr{N}^{n}$ of type $t_{i}^{n}=t$ with $\widehat{\sigma}^{n}_{i}(s)>0$.
Note that $i$ actually depends on $n$, so we should write $i^{n}$. For the sake of simplicity we omit the superscript.
The equilibrium condition in $\Gamma_{{\scriptscriptstyle \Wg}}^{n}$ implies that, for each alternative strategy $s'\in\mathscr{S}_{t}$ and for this player $i$,
\begin{equation}
\label{eq:disc}
\sum_{e\in s} \Expect_{\widehat{\boldsymbol{\sigma}}^{n}}\bracks{c_{e}(X_{i,e}^{n})}\le \sum_{e\in s'} \Expect_{\widehat{\boldsymbol{\sigma}}^{n}}\bracks{c_{e}(X_{i,e}^{n})}.
\end{equation}
Because $\abs{X_{i,e}^{n}-X_{e}^{n}}
\leq w_{i}^{n}\leq w^{n}\to 0$, it follows that $X_{i,e}^{n}\to\widehat{x}_{e}$ in distribution. 
Moreover, the loads $X_{e}^{n}\geq 0$ are bounded above by the total demands $d_{\tot}^{n}=\sum_{t\in\mathscr{T}}d_{t}^{n}$, which converge to
$d_{\tot}=\sum_{t\in\mathscr{T}}d_{t}$, so that both $X_{e}^{n}$ and $X_{i,e}^{n}$ are uniformly bounded. 
It follows that
$\Expect_{\widehat{\boldsymbol{\sigma}}^{n}}\bracks{c_{e}(X_{i,e}^{n})}\to c_{e}(\widehat{x}_{e})$ and, letting $n\to\infty$ in \eqref{eq:disc}, we obtain \eqref{eq:Wardrop} as required.
\Halmos
\endproof

\begin{example}
\label{ex:Weathstone-weighted}
We illustrate the previous result on the Wheatstone network shown in Figure~\ref{fi:Wheatstone}. There is a single OD pair and $n\geq 2$ identical players, each one 
with weight $w_{i}\equiv 1/n$ so that the total demand is $d_{\tot}=1$.
\begin{figure}[ht]
\centering
\begin{tikzpicture}[scale=0.8,node distance = 3 cm,thick,every node/.style={scale=0.8}]
  \tikzset{SourceNode/.style = {draw, circle, fill=green!20,minimum size=8mm}}
  \tikzset{DestNode/.style = {draw, circle, fill=green!20,minimum size=8mm}}
  \tikzset{NodeStyle/.style = {draw, circle, fill=blue!20,minimum size=8mm}}
     \node[SourceNode](s) at (-3.5,0) {$\mathsf{O}$};
     \node[NodeStyle](v) at (0,2) {$v$};
     \node[NodeStyle](w) at (0,-2) {$w$};
     \node[DestNode](t) at (3.5,0) {$\mathsf{D}$};
     \draw[->](s) to node[fill = white,sloped]{\small $c_{1}(x)=x$} (v);
     \draw[->](s) to node[fill = white,sloped]{\small $c_{2}(x)=1$} (w);
     \draw[->](v) to node[fill = white,sloped]{\small $c_{4}(x)=1$} (t);
     \draw[->](w) to node[fill = white,sloped]{\small $c_{5}(x)=x$} (t);
     \draw[->](v) to node[fill = white]{\small $c_{3}(x)=0$} (w);
\end{tikzpicture}
\caption{\label{fi:Wheatstone} Wheatstone network.}
\end{figure}
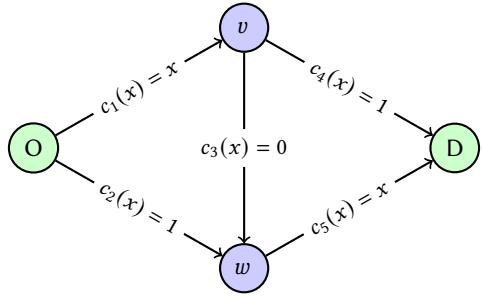
For each $n$, there is a unique symmetric pure Nash equilibrium in which
all players take the zig-zag path $e_{1},e_{3},e_{5}$ and pay a cost equal to $2$.
For $n=2$, we also have a pure equilibrium in which one player takes the upper path whereas the other 
 takes the lower path, and they both pay $3/2$. For $n\geq 3$, the only mixed equilibrium (modulo permutation of the players)
is when $n-1$ players take the zig-zag path and the last player mixes in any possible way over the 
three paths. In the limit when $n\to\infty$, all these equilibria converge to the Wardrop equilibrium of the corresponding 
nonatomic game with a unit flow over the zig-zag path.
\end{example}

For games with strictly-increasing costs $c_{e}(\argdot)$ the Wardrop equilibrium loads $\widehat{x}_{e}$ are unique, 
which yields the following direct consequence.

\begin{corollary} 
\label{co:increasing-costs}
Suppose that the resource costs $c_{e}(\argdot)$ are strictly increasing.
Then, for every sequence $\Gamma_{{\scriptscriptstyle \Wg}}^{n}$ of weighted congestion games satisfying \eqref{eq:feasibility-constr} and each $\widehat{\boldsymbol{\sigma}}^{n}\in\MNE(\Gamma_{{\scriptscriptstyle \Wg}}^{n})$, the random loads $X^{n}_{e}$ converge in distribution to the unique Wardrop equilibrium loads $\widehat{x}_{e}$ 
of the nonatomic limit game $\Gamma^\infty=\parens{\mathscr{G},\parens{d_{t}}_{t\in\mathscr{T}}}$. 
\end{corollary}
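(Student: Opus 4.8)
The plan is to obtain this as an essentially immediate consequence of \cref{th:T1}, the only extra input being the classical fact that a \acl{WE} of a nonatomic game with strictly increasing costs has uniquely determined resource loads. First I would record that uniqueness: the load vector of any \ac{WE} of $\game^{\infty}$ is characterized by the \ac{KKT} conditions of the convex program $\min\braces{\sum_{\edge\in\edges}\int_0^{\load_{\edge}}\cost_{\edge}(t)\dd t \;:\; (\flowprof,\loadprof)\in\feasible(\demandprof)}$ (Beckmann's potential, cf.\ \citet{BecMcGWin:Yale1956}). When each $\cost_{\edge}(\argdot)$ is strictly increasing, $t\mapsto\int_0^t\cost_{\edge}(s)\dd s$ is strictly convex, hence the objective is strictly convex in $\loadprof$; since the projection of $\feasible(\demandprof)$ onto the load coordinates is a compact convex set, the minimizing load vector $\eq{\loadprof}$ is unique, even though the strategy flows realizing it need not be.

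Next I would invoke \cref{th:T1}: for any sequence $\eq{\mixedprof}^{\run}\in\MNE(\Wgame^{\run})$ the expected load vectors $\eq{\loadprof}^{\run}$ form a bounded sequence in $\R^{\abs{\edges}}$, and every accumulation point of $\parens{\eq{\flowprof}^{\run},\eq{\loadprof}^{\run}}$ is a \ac{WE} of $\game^{\infty}$. By the uniqueness just established, every accumulation point of the bounded sequence $\parens{\eq{\loadprof}^{\run}}_{\run}$ must equal $\eq{\loadprof}$; a bounded sequence in finite dimension with a single accumulation point converges to it, so $\eq{\loadprof}^{\run}\to\eq{\loadprof}$ as $\run\to\infty$.

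Finally I would upgrade this to convergence in distribution of the random loads. For each $\edge\in\edges$ we have $\Expect_{\eq{\mixedprof}^{\run}}\bracks{\Load_{\edge}^{\run}}=\eq{\load}_{\edge}^{\run}\to\eq{\load}_{\edge}$, while the variance bound \cref{eq:var-xe} together with \cref{eq:sum-w} (both already in the proof of \cref{th:T1}) gives $\Var_{\eq{\mixedprof}^{\run}}\bracks{\Load_{\edge}^{\run}}\to 0$. Chebyshev's inequality then yields $\Load_{\edge}^{\run}\to\eq{\load}_{\edge}$ in probability, and convergence in probability to a constant implies convergence in distribution. I do not expect any real obstacle here: the substance is all in \cref{th:T1}, and the remaining work is the standard uniqueness argument for Wardrop loads plus the elementary observation that a bounded sequence with a single accumulation point converges.
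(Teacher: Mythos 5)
Your proposal is correct and follows exactly the route the paper intends: the paper offers no separate proof, only the remark that strictly increasing costs make the Wardrop loads unique so that \cref{co:increasing-costs} is a direct consequence of \cref{th:T1}, and your write-up simply fills in the standard details (uniqueness via the strictly convex Beckmann potential, a bounded sequence with a single accumulation point converges, and the variance bound from the proof of \cref{th:T1} upgrades this to convergence in distribution).
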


As a complement to these results, in Section~\ref{suse:rate-weighted} we will derive explicit bounds on the approximation and rate of convergence under additional conditions on the costs. 
Also, in Section~\ref{se:PoA} we will prove that the PoA for the sequence 
of weighted congestion games $\Gamma_{{\scriptscriptstyle \Wg}}^{n}$ converges to the PoA of the limit game $\Gamma^\infty$.

\section{Convergence of Bernoulli Congestion Games}
\label{se:convergence-stochastic}

In this section we study the convergence of Bernoulli congestion games to nonatomic games when the number of players increases and their participation probabilities decrease. 
We show that the equilibrium flows and loads converge in total variation towards Poisson random variables whose expected values are a Wardrop equilibrium of a corresponding nonatomic game with suitably defined cost functions. 
Nevertheless, we highlight that our results prove also that the distributions  of the random loads converge.

We proceed to study the convergence of Bayesian Nash equilibria $\widehat{\boldsymbol{\sigma}}^{n}\in\BNE(\Gamma_{{\scriptscriptstyle \Bg}}^{n})$ for a sequence of Bernoulli congestion games 
\begin{equation*}
\Gamma_{{\scriptscriptstyle \Bg}}^{n}=\parens*{\mathscr{G},(u_{i}^{n},t_{i}^{n})_{i\in \mathscr{N}^{n}}}.
\end{equation*}
The elements that vary over this sequence are the set $\mathscr{N}^{n}$ of players, their participation probabilities $u_{i}^{n}$, and their types $t_{i}^{n}$, 
as well as the underlying probability spaces encoding the random participation of players and their mixed strategies.
We assume that the number of players goes to infinity while the sequence of participation probabilities goes to zero in such a way that the aggregate expected demand for each type converges. This is captured by the following asymptotic behavior as $n\to\infty$:
\begin{subequations}
\label{eq:H3}
\begin{align}
\label{eq:players-to-infty}
&\abs{\mathscr{N}^{n}}\to\infty,\\
\label{eq:prob-part-to-0}
&u^{n}\coloneqq\max_{i\in\mathscr{N}^{n}}u^{n}_{i}\to 0,\\
\label{eq:dn-to-d}
&{d}^{n}_{t}\coloneqq \sum_{i\colon t^{n}_{i}=t}u^{n}_{i}\to d_{t}.
\end{align}
\end{subequations}

Our analysis is based on results on Poisson approximations for sums of Bernoulli random variables.
{In particular, under \eqref{eq:H3}, the per-type random demands $D_{t}^{n} = \sum_{i\colon t_{i}^{n}=t}U_{i}^{n}$ are known to
converge in total variation to a Poisson variable $D_{t}\sim\Poisson({d}_{t})$ \citep[see, e.g.,][corollary~3.1]{AdeLeu:B2005}.}
We recall that a sequence of probability measures $\mathsf{Q}^{n}$ on $\mathbb{N}$ converges in total variation to the probability measure $\mathsf{Q}$,
whenever $\rho_{\TV}\parens{\mathsf{Q}^{n},\mathsf{Q}}\to 0$, where
\begin{equation}
\label{eq:TV1}
\rho_{\TV}\parens{\mathsf{Q}^{n},\mathsf{Q}}\coloneqq\sup_{A\subset\mathbb{N}}\abs{\mathsf{Q}^{n}(A)-\mathsf{Q}(A)}=\frac{1}{2}\sum_{k\in\mathbb{N}}\abs{\mathsf{Q}^{n}(k)-\mathsf{Q}(k)}.
\end{equation}
Similarly, a sequence of integer-valued random variables $T^{n}$ converges in total variation to $T$ if their distributions satisfy $\rho_{\TV}\parens{\mathscr{L}(T^{n}),\mathscr{L}(T)}\to 0$.
Convergence in total variation is stronger than convergence in distribution---which was the concept used in Section~\ref{se:convergence-weighted}---and is suitable for situations where the limit distribution is discrete.
Appendix~\ref{se:Poisson-Bernoulli} collects further results on Poisson approximations that are used in our subsequent analysis.

\begin{proposition}
\label{pr:Poisson-conv} 
Let $\Gamma_{{\scriptscriptstyle \Bg}}^{n}$ be a sequence of Bernoulli congestion games satisfying the conditions in \eqref{eq:H3}, and $\boldsymbol{\sigma}^{n}$ an arbitrary sequence of mixed strategy profiles.
{Let
$Y^{n}_{t,s}$ and $X^{n}_{e}$ be the random flows and loads given by \eqref{eq:stoch-flow-load}
with $U_{i}=U_{i}^{n}$, $t_{i}=t_{i}^{n}$, and  
$S_{i}=S_{i}^{n}$ drawn independently at random according to $\sigma_{i}^{n}$,
and let $(\boldsymbol{y}^{n},\boldsymbol{x}^{n})$ be the vector of expected flows and loads.} Then:
\begin{enumerate}[\upshape(a)]
\item
\label{it:pr:Poisson-conv-1}
the sequence $(\boldsymbol{y}^{n},\boldsymbol{x}^{n})_{n\in\mathbb{N}}$ is bounded and each accumulation point $(\boldsymbol{y},\boldsymbol{x})$ belongs to $\mathscr{F}({\boldsymbol{d}})$,
\item
\label{it:pr:Poisson-conv-2}
along any convergent subsequence of $(\boldsymbol{y}^{n},\boldsymbol{x}^{n})$, the random flows $Y^{n}_{t,s}$ and loads $X^{n}_{e}$ converge in total variation to $Y_{t,s}\sim \Poisson(y_{t,s})$ and $X_{e}\sim \Poisson(x_{e})$ respectively, and
\item
\label{it:pr:Poisson-conv-3}
the limit Poisson variables $(Y_{t,s})_{t\in\mathscr{T},s\in\mathscr{S}_{t}}$ are independent.
\end{enumerate}
\end{proposition}

\proof{Proof.}
\ref{it:pr:Poisson-conv-1}
Because the expected demands ${d}_{t}^{n}$ are convergent, this follows directly from \eqref{eq:stoch-expect-flow-load}, which
in this case reads
\begin{equation*}
{d}_{t}^{n}=\sum_{s\in\mathscr{S}_{t}}y^{n}_{t,s}, \text{\quad and \quad}
x^{n}_{e} =\sum_{t\in\mathscr{T}}\sum_{s\in\mathscr{S}_{t}}y^{n}_{t,s}\mathds{1}_{\braces{e\in s}}.
\end{equation*}

\medskip
\noindent
\ref{it:pr:Poisson-conv-2}
Take a convergent subsequence and for simplicity rename it to be the full sequence $(\boldsymbol{y}^{n},\boldsymbol{x}^{n})\to(\boldsymbol{y},\boldsymbol{x})$. 
Considering a Poisson variable $V_{e}^{n}\sim\Poisson(x_{e}^{n})$ and using Theorem~\ref{th:Barbour}\ref{it:th:Barbour-1} we have
\begin{equation*}
\rho_{\TV}\parens*{\mathscr{L}(X_{e}^{n}),\mathscr{L}(V_{e}^{n})}\leq u^{n},
\end{equation*}
so that, using the triangle inequality and \eqref{eq:TV-bound}, we conclude
\begin{equation}
\label{eq:rate}
\rho_{\TV}\parens*{\mathscr{L}(X_{e}^{n}),\mathscr{L}(X_{e})}\leq u^{n}
+\abs{x_{e}^{n}- x_{e}}\to 0.
\end{equation}
A similar argument shows that $Y_{t,s}^{n}$ converges in total variation to $Y_{t,s}$.

\medskip
\noindent
\ref{it:pr:Poisson-conv-3}
Consider the joint moment generating function $M^{n}$ of the random variables $Y_{t,s}^{n}$
\begin{equation*}
M^{n}(\boldsymbol{\lambda})=\Expect_{\boldsymbol{\sigma}^{n}}\bracks*{\exp\parens*{\sum_{t\in\mathscr{T}}\sum_{s\in\mathscr{S}_{t}}\lambda_{t,s}Y^{n}_{t,s}}},
\end{equation*}
where $\boldsymbol{\lambda}=(\lambda_{t,s})_{t\in\mathscr{T},s\in\mathscr{S}_{t}}$. 
Using the fact that 
\begin{equation*}
Y^{n}_{t,s}=\sum_{i\colon t_{i}^{n}=t}U^{n}_{i}\mathds{1}_{\braces{S_{i}^{n}=s}},
\end{equation*}
with $U^{n}_{i}\mathds{1}_{\braces{s_{i}^{n}=s}}$ independent across players and types (although not across strategies), we have
\begin{align*}
M^{n}(\boldsymbol{\lambda})
&=\Expect_{\boldsymbol{\sigma}^{n}}\bracks*{\exp\parens*{\sum_{t\in\mathscr{T}}\sum_{s\in\mathscr{S}_{t}}\sum_{i\colon t_{i}^{n}=t}\lambda_{t,s}U_{i}^{n}\mathds{1}_{\braces{S_{i}^{n}=s}}}}\\
&=\prod_{t\in\mathscr{T}}\prod_{i\colon t_{i}^{n}=t}\Expect_{\boldsymbol{\sigma}^{n}}\bracks*{\exp\parens*{U_{i}^{n}\sum_{s\in\mathscr{S}_{t}}\lambda_{t,s}\mathds{1}_{\braces{S_{i}^{n}=s}}}}\\
&=\prod_{t\in\mathscr{T}}\prod_{i\colon t_{i}^{n}=t}\parens*{1-u_{i}^{n}+u_{i}^{n}\,\Expect_{\boldsymbol{\sigma}^{n}}\bracks*{\exp\parens*{\sum_{s\in\mathscr{S}_{t}}\lambda_{t,s} \mathds{1}_{\braces{S_{i}^{n}=s}}}}}\\
&=\prod_{t\in\mathscr{T}}\prod_{i\colon t_{i}^{n}=t}\parens*{1-u_{i}^{n}+u_{i}^{n}\sum_{s\in\mathscr{S}_{t}}\sigma^{n}_{i}(s)\exp\parens*{\lambda_{t,s}}}.
\end{align*}
Taking logarithms and using the fact that $\ln(1+y)=y+O(y^{2})$, it follows that 
\begin{align*}
\ln M^{n}(\boldsymbol{\lambda})
&=\sum_{t\in\mathscr{T}}\sum_{i\colon t_{i}^{n}=t}\ln\parens*{1+u_{i}^{n}\sum_{s\in\mathscr{S}_{t}}\sigma^{n}_{i}(s)\bracks*{\exp\parens*{\lambda_{t,s}}-1}}\\
&=\sum_{t\in\mathscr{T}}\sum_{i\colon t_{i}^{n}=t}\parens*{\sum_{s\in\mathscr{S}_{t}}u_{i}^{n}\,\sigma^{n}_{i}(s)\bracks*{\exp\left(\lambda_{t,s}\right)-1}+O((u_{i}^{n})^{2})}\\
&=\sum_{t\in\mathscr{T}}\sum_{s\in\mathscr{S}_{t}}y^{n}_{t,s}\bracks*{\exp\parens*{\lambda_{t,s}}-1}+\sum_{i\in\mathscr{N}^{n}}O((u_{i}^{n})^{2}).
\end{align*}
In view of the conditions in \eqref{eq:H3} the sum $\sum_{i\in\mathscr{N}^{n}}O((u_{i}^{n})^{2})$ converges to zero, and therefore
\begin{equation*}
\lim_{n\to\infty}M^{n}(\boldsymbol{\lambda})=\prod_{t\in\mathscr{T}}\prod_{s\in\mathscr{S}_{t}}\exp\parens*{y_{t,s}\bracks*{\exp(\lambda_{t,s})-1}},
\end{equation*}
which is the moment generating function of a family of independent Poisson random variables with parameters $y_{t,s}$.
\Halmos
\endproof

\begin{remark}
Even if the strategy flows $Y_{t,s}$ are independent, in general the resource loads $X_{e}$ are not. 
For instance in a trivial network with a single OD pair connected by a unique path, all the edges will carry exactly the same load.
\end{remark}

With these preliminary facts, we proceed to introduce the nonatomic congestion game  $\breve{\Gamma}^{\infty}$
 that characterizes the limit of the sequence $\Gamma_{{\scriptscriptstyle \Bg}}^{n}$.
Recall that the costs $c_{e}:\mathbb{N}^*\to\mathbb{N}_+$ in Bernoulli games are only defined over the positive integers.
The game $\breve{\Gamma}^{\infty}=\parens{\breve{\mathscr{G}},({d}_{t})_{t\in\mathscr{T}}}$ 
is given by the limiting demands ${d}_{t}$ and the auxiliary costs $\breve{c}_{e}:\mathbb{N}_+\to\mathbb{N}_+\cup\{+\infty\}$ 
where $\breve{c}_{e}(x)$ is
defined for $x\geq 0$ by taking a random variable $X\sim\Poisson(x)$ and setting 

\begin{equation}
\label{eq:cost-Poisson}
\breve{c}_{e}(x)\coloneqq\Expect\bracks{c_{e}(1+X)}.
\end{equation}

Notice that, similar to \eqref{eq:exp-cost-i}, the $+1$ is due to the fact that a player must account for her own presence on the resource.
To ensure that these expected costs are well defined and smooth, we impose the following mild condition:
\begin{equation}
\label{eq:HP}
\exists \nu\in\mathbb{N}\text{ and }d_{\max} > d_{\tot} \text{ with }\Expect\bracks*{\abs{\Delta^{2} c_{e}(1+V)}}\leq\nu\text{ for all } e\in\mathscr{E} \text{ and } V\sim\Poisson(d_{\max}),
\end{equation}
where $\Delta^{2} c_{e}(k)=c_{e}(k+2)-2c_{e}(k+1)+c_{e}(k)$.
This condition holds in particular for costs with subexponential growth $c_{e}(k)\leq b \exp(a k)$ for some $a,b\geq 0$, which include all polynomials.
The condition fails for fast growing functions such as $c_{e}(k)= k!$ or $c_{e}(k)=\exp(\exp(k))$.
The following result summarizes the main consequences of \eqref{eq:HP}.

\begin{lemma}
\label{le:aux}
Assume \eqref{eq:HP} and let 
\begin{align}
\label{eq:zeta}
\zeta&\coloneqq (\mathrm{e}^{d_{\max}}-1)\nu+\max_{e\in\mathscr{E}}\,(c_{e}(2)- c_{e}(1)),\\
\label{eq:Lambda}
\Lambda(u)&\coloneqq\frac{\nu\,d_{\max}}{2}\frac{u \mathrm{e}^{u}}{(1-u)^{2}}+u\,\zeta.
\end{align}
Then, the costs $\breve c_{e}(\argdot)$ are finite and of class $C^{2}$ on $[0,d_{\max}]$ with 
$0\leq \breve{c}_{e}'(x)\leq\zeta$ for all $x\in[0,d_{\max}]$.
Moreover, let $\Gamma_{{\scriptscriptstyle \Bg}}$ be a Bernoulli congestion game with $d_{\tot}\leq d_{\max}$ and let $u=\max_{i\in\mathscr{N}}u_{i}$.
Let $X_{e}$ be the random loads in a mixed strategy profile $\boldsymbol{\sigma}$, and 
$Z_{i,e}=X_{e}-U_{i,e}$ be the loads excluding player $i$. Then,
the expected values
$x_{e}=\Expect_{\boldsymbol{\sigma}}\bracks{X_{e}}$ and $z_{i,e}=\Expect_{\boldsymbol{\sigma}}\bracks{Z_{i,e}}$
satisfy $\abs{z_{i,e}-x_{e}}=\Expect_{\boldsymbol{\sigma}}\bracks{U_{i,e}}\leq u$,
and we have
\begin{equation}
\label{eq:cP-c1}
\abs*{\Expect_{\widehat{\boldsymbol{\sigma}}}\bracks*{c_{e}(1+Z_{i,e})}-\breve{c}_{e}(x_{e})}\leq \Lambda(u).
\end{equation}
\end{lemma}

\proof{Proof.}
The smoothness and the bound $0\leq\breve{c}_{e}'(x)\leq\zeta$ for the auxiliary costs 
follow directly from Proposition~\ref{pr:Poisson-Delta-j}\ref{it:pr:Poisson-Delta-j-2} and Corollary~\ref{co:Poisson-alpha}\ref{co:Poisson-alpha-a}. 
In particular, $\breve{c}_{e}(\argdot)$ is $\zeta$-Lipschitz,
and then \eqref{eq:cP-c1} follows by using a triangle inequality and Theorem~\ref{th:Barbour}\ref{it:th:Barbour-2}.
\Halmos
\endproof

With all these preliminaries we are ready to prove the convergence of equilibria for Bernoulli congestion games.

\begin{theorem}
\label{th:Bayes-Poisson} 
Let $\Gamma_{{\scriptscriptstyle \Bg}}^{n}$ be a sequence of Bernoulli congestion games satisfying \eqref{eq:H3} and \eqref{eq:HP}, and let $\widehat{\boldsymbol{\sigma}}^{n}\in\BNE(\Gamma_{{\scriptscriptstyle \Bg}}^{n})$ be an
arbitrary sequence of Bayesian Nash equilibria. Then the corresponding sequence of expected flow-load pairs $(\widehat{\boldsymbol{y}}^{n},\widehat{\boldsymbol{x}}^{n})$ 
is bounded and every accumulation point $(\widehat{\boldsymbol{y}},\widehat{\boldsymbol{x}})$ is a Wardrop equilibrium for the nonatomic congestion game $\breve{\Gamma}^{\infty}$.
Furthermore, along every such convergent subsequence the random flows $Y^{n}_{t,s}$ and loads $X^{n}_{e}$ converge in total variation
 to Poisson limits $Y_{t,s}\sim \Poisson(\widehat{y}_{t,s})$ and $X_{e}\sim \Poisson(\widehat{x}_{e})$, where the variables $Y_{t,s}$
 are independent. 
 \end{theorem}
 
\proof{Proof.}
From Proposition~\ref{pr:Poisson-conv} we have that $(\widehat{\boldsymbol{y}}^{n},\widehat{\boldsymbol{x}}^{n})$ is bounded and also that along any convergent subsequence
$Y^{n}_{t,s}$ and $X^{n}_{e}$ converge in total variation to Poisson limits $Y_{t,s}\sim \Poisson(\widehat{y}_{t,s})$ 
and $X_{e}\sim \Poisson(\widehat{x}_{e})$ with $Y_{t,s}$ independent. 
It remains to show that every accumulation point is a Wardrop equilibrium for $\breve{\Gamma}^{\infty}$.
Take a convergent subsequence and rename it to be the full 
sequence $(\widehat{\boldsymbol{y}}^{n},\widehat{\boldsymbol{x}}^{n})\to(\widehat{\boldsymbol{y}},\widehat{\boldsymbol{x}})$. 
We must prove that, for each type $t$ and each pair of strategies $s,s'$ in $\mathscr{S}_{t}$, we have
\begin{equation}
\label{eq:wardrop2}
\widehat{y}_{t,s}>0\quad \implies \quad\sum_{e\in s} \breve c_{e}(\widehat{x}_{e}) \leq \sum_{e\in s'}\breve c_{e}(\widehat{x}_{e}).
\end{equation}
A strict inequality $\widehat{y}_{t,s}>0$ implies that for all $n$ large enough we have $\widehat{y}^{n}_{t,s}>0$ and there must be a player $i=i^{n}$ of type $t_{i}^{n}=t$ with $\widehat{\sigma}_{i}^{n}(s)>0$.
The equilibrium condition in $\Gamma_{{\scriptscriptstyle \Bg}}^{n}$ implies that for each alternative strategy $s'\in\mathscr{S}_{t}$ for player $i$ we have 
\begin{equation}
\label{eq:disc2}
\sum_{e\in s} \Expect_{\widehat{\boldsymbol{\sigma}}^{n}}\bracks{c_{e}(1+Z_{i,e}^{n})}
\leq \sum_{e\in s'} \Expect_{\widehat{\boldsymbol{\sigma}}^{n}}\bracks{c_{e}(1+Z_{i,e}^{n})}.
\end{equation}
Using Lemma~\ref{le:aux}, we get
\begin{equation*}
\abs*{\Expect_{\widehat{\boldsymbol{\sigma}}^{n}}\bracks*{c_{e}(1+Z^{n}_{i,e})}-\breve{c}_{e}(x_{e}^{n})}\leq \Lambda(u^{n}) \to 0,
\end{equation*}
and, because $x_{e}^{n}\to\widehat{x}_{e}$, it follows that
$\Expect_{\widehat{\boldsymbol{\sigma}}^{n}}\bracks{c_{e}(1+Z^{n}_{i,e})}\to \breve{c}_{e}(\widehat{x}_{e})$.
Thus, letting $n\to\infty$ in \eqref{eq:disc2}, we obtain \eqref{eq:wardrop2} as required. 
\Halmos
\endproof

\begin{example}
\label{ex:sequence-stoch-games}
Consider a sequence $\parens{\Gamma_{{\scriptscriptstyle \Bg}}^{n}}_{n\in\mathbb{N}}$ of games played on the Wheatstone network in Figure~\ref{fi:Wheatstone}, 
with $n$ identical players and $u_{i}\equiv u=1/n$ so that the total expected demand is $d_{\tot}=1$. Notice that the conditional expected cost of a player $i\in\mathscr{N}$ when using $e_{1}$ or $e_{5}$ equals $1+u(x-1)$, where $x$ is the number of players using the edge.
It follows that the zig-zag path $(e_{1}$, $e_{3}$, $e_{5})$ is strictly dominated by a linear combination of the upper and lower paths, and no player will use the zig-zag path. 
This is in sharp contrast with the weighted congestion games described in Example~\ref{ex:Weathstone-weighted}, where all but one players were choosing the zig-zag path. 
This difference is explained by the fact that each player has a unit weight and has a significant impact on the costs. 

For an even number of players $n=2 k$, there is a unique PNE (modulo permutations of players), where half of the players choose the upper path and the other 
half take the lower path. 
In this equilibrium, the expected cost for each player is $(2.5n-1)/n^2$. 
When $n=2 k+1$ is odd, there is a NE where each edge gets $k$ players for sure, and the last player randomizes arbitrarily between these two paths.
For all $n$, there is also a symmetric mixed equilibrium where every player mixes with $q=1/2$ between the upper and lower paths,
and the expected cost of each player is $(5n-1)/(2n^2)$. 
In general, one can prove that every mixed Nash equilibrium is of the form where $k_{1}$ players take the upper path, $k_{2}$ players take the lower path, and the remaining $k_{3}=n-k_{1}-k_{2}$ players, 
with $k_{3}-1>\abs{k_{2}-k_{1}}$, randomize between these two paths with exactly the same probabilities $(q,1-q)$ with $q=\frac{1}{2}\parens*{1+\frac{k_{2}- k_{1}}{k_{3}-1}}$, so as to equalize their costs. 
As $n\to\infty$, all these equilibria converge to the Wardrop equilibrium of a nonatomic game where the costs of $e_{1}$ and $e_{5}$ 
become $\breve{c}(x)=1+x$. In this Wardrop equilibrium, half of the demand goes on the upper path and the remaining half on the lower path.
\end{example}

Under a mild additional condition, the equilibrium loads of the nonatomic limit game are unique and every sequence of equilibria converges.
The following is a direct consequence of Theorem~\ref{th:Bayes-Poisson}
and the strict monotonicity of $\breve{c}_{e}(\argdot)$ established in Corollary~\ref{co:Poisson-alpha}\ref{co:Poisson-alpha-b}.
\begin{corollary}\label{cor:Bayes-Poisson}
Let $c_{e}:\mathbb{N}^*\to\mathbb{N}$ be weakly increasing and nonconstant, and assume that the conditions \eqref{eq:HP} hold.
Then, the extended functions 
$\breve{c}_{e}(\argdot)$ are strictly increasing and the loads $\widehat{x}_{e}$ are the same in every Wardrop equilibrium for $\breve{\Gamma}^{\infty}$. 
Moreover, for every sequence $\Gamma_{{\scriptscriptstyle \Bg}}^{n}$ of Bernoulli congestion games satisfying the conditions in \eqref{eq:H3} and every sequence $\widehat{\boldsymbol{\sigma}}^{n}\in\BNE(\Gamma_{{\scriptscriptstyle \Bg}}^{n})$, the random loads
$X_{e}^{n}$ converge in total variation to a random variable $X_{e}\sim\Poisson(\widehat{x}_{e})$.
\end{corollary}

Theorem~\ref{th:Bayes-Poisson} and its Corollary~\ref{cor:Bayes-Poisson} provide one of our main results. 
They show that the equilibrium flows in Bernoulli congestion games converge in total variation towards 
Poisson variables whose expected values coincide with the Wardrop equilibria of an auxiliary nonatomic game $\breve{\Gamma}^\infty$.
In Section~\ref{se:poisson} we supplement this result by showing that the limit game $\breve{\Gamma}^\infty$ can be interpreted 
as a Poisson game in the sense of \citet{Mye:IJGT1998}, and that the equilibria of such Poisson games can similarly
be determined by means of a Wardrop equilibrium of $\breve{\Gamma}^\infty$. This provides a 
novel justification for Poisson games, as well as an alternative interpretation of Wardrop equilibria.

In addition to these results, Section~\ref{suse:rate-Bernoulli} presents 
explicit bounds for the approximation and rate of convergence of the distributions of the resource loads, and Section~\ref{se:PoA} 
shows that the PoA for the sequence of Bernoulli congestion games $\Gamma_{{\scriptscriptstyle \Bg}}^{n}$ converges to the 
PoA in the nonatomic limit $\breve{\Gamma}^{\infty}$.

\section{Poisson Games}\label{se:poisson}

{In this section we expose the connection between the nonatomic game $\breve{\Gamma}^\infty$ obtained in Section~\ref{se:convergence-stochastic}
 as a limit of Bernoulli congestion games, and Poisson games in the sense of \citet{Mye:IJGT1998}. We first formally introduce Myerson's framework and Poisson games, 
and conclude that Poisson congestion games arise naturally as a limit of Bernoulli congestion games (which was essentially proved in Theorem~\ref{th:Bayes-Poisson}), and 
that the equilibria of the Poisson game coincide with the Wardrop equilibria of the nonatomic game. This is a novel justification for Wardrop's model and shows that equilibria in Poisson congestion games can be calculated from the solution of a related Wardrop equilibrium. Later on, we consider more general games with bounded costs and establish the convergence of sequences of games with Bernoulli players towards Poisson games, providing a microfoundation for Poisson games.}

\subsection{Games with Population Uncertainty} 

\emph{Games with population uncertainty} were introduced by \citet{Mye:IJGT1998} as a model for situations in which players do not know exactly 
which players participate in the game, but share an awareness of the number and type of players that might be present. These games are 
described by a finite set of types $\mathscr{T}$ and  a joint probability distribution $\mu$ over $\mathbb{N}^{\mathscr{T}}$ that characterizes the random 
number of players of each type $\boldsymbol{N} = (N_{t})_{t\in\mathscr{T}}$ that take part in the game.\footnote{To avoid trivialities we suppose that for each $t\in\mathscr{T}$ the random variable $N_{t}$ is not identically zero.} Each player of type
$t\in\mathscr{T}$ has a corresponding strategy set $\mathscr{S}_{t}$, and a cost $C(\boldsymbol{Y}^{-t};t,s)$ that depends on the chosen
action $s\in\mathscr{S}_{t}$ and on the strategy flows $\boldsymbol{Y}^{-t}$ induced by the
actions of the \emph{other players} as described below. Altogether, a game with population uncertainty is defined by the tuple $\Gamma_{{\scriptscriptstyle \Pg}}=(\mathscr{T},\mu,(\mathscr{S}_{t})_{t\in\mathscr{T}},C)$.

Because players cannot distinguish the specific identities of the other players, and only know the joint probability distribution $\mu$ according to which 
 $\boldsymbol{N} = (N_{t})_{t\in\mathscr{T}}$ is drawn, they are treated symmetrically by assuming that all players of type $t$ adopt the same mixed strategy $\sigma_{t}\in\bigtriangleup(\mathscr{S}_{t})$,
which only depends on their type. The measure $\mu$ and the strategy profile $\boldsymbol{\sigma}=\parens{\sigma_{t}}_{t\in\mathscr{T}}$ determine the distribution of the strategy flows 
$\boldsymbol{Y}= (Y_{t,s})_{t\in\mathscr{T},s\in\mathscr{S}_{t}}$, where $Y_{t,s}$ is
the number of players of type $t$ that choose the action $s$. Indeed, conditionally on the event $N_{t}=\bar{n}_{t}$, the flows $Y_{t}=(Y_{t,s})_{s\in\mathscr{S}_{t}}$ for type $t\in\mathscr{T}$ are distributed across strategies as independent multinomials $Y_{t}\sim \Multi(\bar{n}_{t},\sigma_{t})$.
Hence, for each $\boldsymbol{n}=\parens{n_{t,s}}_{t\in\mathscr{T},s\in\mathscr{S}_{t}}$ the distribution of $\boldsymbol{Y}$ is given by
\begin{equation}
\label{eq:Udist}
\Prob_{\mu,\boldsymbol{\sigma}}(\boldsymbol{Y}=\boldsymbol{n})=\mu(\bar{\boldsymbol{n}})\prod_{t\in\mathscr{T}}\parens*{\bar{n}_{t} !\prod_{s\in\mathscr{S}_{t}}\frac{\sigma_{t}(s)^{n_{t,s}}}{n_{t,s}!}},
\end{equation}
where $\bar{\boldsymbol{n}}\coloneqq\parens{\bar n_{t}}_{t\in\mathscr{T}}$, with $\bar{n}_{t}\coloneqq\sum_{s\in\mathscr{S}_{t}}n_{t,s}$.

Now, in order to determine the expected cost for a generic player of type $t$ who is active in the game, we need the conditional 
distribution assessed by such a player for the strategy flows $\boldsymbol{Y}^{-t}$ induced by the other players.
To this end, 
let $\boldsymbol{N}^{-t}=(\boldsymbol{N}^{-t}_{t'})_{t'\in\mathscr{T}}$ be the random vector giving the number of players for each type, excluding this active player
of type $t$.
As argued by \citet{Mye:GEB1998,Mye:IJGT1998} \citep[see also][]{Mil:GEB2004}, the \emph{posterior distribution of $\boldsymbol{N}^{-t}$} can be identified with
the Palm distribution of $\mu$ viewed as a finite point process, that is}
\begin{equation*}
\mu(\bar{\boldsymbol{n}}\mid t)
\coloneqq\Prob_{\mu,\boldsymbol{\sigma}}(\boldsymbol{N}^{-t}=\bar{\boldsymbol{n}})
=\frac{(\bar{n}_{t}+1)\,\mu(\bar{\boldsymbol{n}}+\boldsymbol{\delta}_{t})}{\Expect_{\mu}\bracks{N_{t}}},
\end{equation*}
where $\bar{\boldsymbol{n}}+\boldsymbol{\delta}_{t}$ denotes the vector $\bar{\boldsymbol{n}}$ with $\bar{n}_{t}$ replaced by $\bar{n}_{t}+1$. 
Accordingly, the posterior distribution of the strategy flows $\boldsymbol{Y}^{-t}=(\boldsymbol{Y}^{-t}_{t',s})_{t'\in\mathscr{T},s\in\mathscr{S}_{t'}}$ induced by the remaining players is given by \eqref{eq:Udist}, with $\mu(\bar{\boldsymbol{n}})$ replaced by $\mu(\bar{\boldsymbol{n}}\mid t)$.
The expected cost for a player of type $t$ is computed according to this posterior distribution, and
an \emph{equilibrium} is then defined as a family 
of type-dependent mixed strategies $\widehat{\boldsymbol{\sigma}}=(\widehat{\sigma}_{t})_{t\in\mathscr{T}}$, with $\widehat{\sigma}_{t}\in\bigtriangleup(\mathscr{S}_{t})$ 
such that
\begin{equation}
\label{eq:Poisson-eq}
\forall t\in\mathscr{T}, \forall s,s'\in\mathscr{S}_{t}\quad \widehat{\sigma}_{t}(s)>0
\implies \Expect_{\mu,\widehat{\boldsymbol{\sigma}}}\bracks{C( Y^{-t};t,s)}
\leq \Expect_{\mu,\widehat{\boldsymbol{\sigma}}}\bracks{C(Y^{-t};t,s')}.
\end{equation}

In order for the expectations to be well defined, \citet{Mye:IJGT1998} assumed the functions $C(\argdot;t,s)$ to be bounded. 
For congestion games this would require the costs $c_{e}(\argdot)$ to be bounded.
Fortunately, this can be relaxed and we only require the much weaker condition in \eqref{eq:HP}. 
Later on, when considering more general Poisson games, we will go back to Myerson's assumptions.

\begin{remark}
The Bernoulli congestion games in Section~\ref{suse:stochastic} fall in the framework of games with population uncertainty 
where $N_{t}=\sum_{i:t_{i}=t}U_{i}$ is the sum of independent non-homogeneous Bernoulli random variables. 
However, we considered not only strategies defined by the player's type, but also
 asymmetric equilibria in which players choose their strategies individually. 
\end{remark}

\begin{example}
Consider the game $\Gamma_{{\scriptscriptstyle \Bg}}^{n}$ on the Wheatstone network of Figure~\ref{fi:Wheatstone} in Example~\ref{ex:sequence-stoch-games}. 
Recall that there are $n$ players---all of the same type $t$---and each one is present with probability $1/n$. 
Assume that all players play a mixed strategy in which with probability $1/2$ they choose the upper path, and with probability $1/2$ they choose the lower path. 
Let $\boldsymbol{Y}=(Y_{1},Y_{2},Y_{3})$ denote the random vector that gives the number of players on the paths $s_{1}=(e_{1},e_4)$, 
$s_{2}=(e_{1},e_{3},e_5)$ and $s_{3}=(e_{2},e_5)$, respectively. Then $Y_{1},Y_{3}\sim\Binomial(n,(2n)^{-1})$ and 
$Y_{2}\equiv 0$, while the corresponding posterior distributions are $Y_{1}^{-t},Y_{3}^{-t} \sim\Binomial(n-1,(2n-2)^{-1})$
and $Y_{2}^{-t}\equiv 0$.
\end{example}

\subsection{Poisson Games} 

An important subclass of games with population uncertainty is obtained when the $N_{t}$'s are independent Poisson 
variables $N_{t}\sim\Poisson({d}_{t})$, with $d_{t}>0$, that is,
\begin{equation}
\label{eq:meas-Poisson}
\mu(\bar{\boldsymbol{n}})
=\Prob(N_{t}
=\bar{n}_{t},\forall t\in\mathscr{T})
=\prod_{t\in\mathscr{T}}\mathrm{e}^{-{d}_{t}} \frac{( {d}_{t})^{\bar{n}_{t}}}{\bar {n}_{t}!}.
\end{equation}
It is not difficult to see that such Poisson games are characterized by the fact that the posteriors $\mu(\argdot\mid t)$ coincide with $\mu$ for every $t\in\mathscr{T}$. 
Moreover, in this case the loads $Y_{t,s}$ are also independent with $Y_{t,s}\sim\Poisson({d}_{t}\sigma_{t}(s))$. 
In fact, as shown in \citet[Theorem 1]{Mye:IJGT1998}, in a game with population uncertainty 
the variables $Y_{t,s}$ are independent {\em if and only if}  the game is Poisson. 

It turns out that the nonatomic game $\breve{\Gamma}^{\infty}$ in Section~\ref{se:convergence-stochastic}, obtained as a limit of a sequence of Bernoulli congestion games, can 
be interpreted as a Poisson game defined by the costs
\begin{equation}
\label{eq:Poisson-cost}
C(\boldsymbol{Y}^{-t};t,s)
=\sum_{e\in s}c_{e}(1+X_{e}^{-t}), \quad X_{e}^{-t}=\sum_{t'\in\mathscr{T}}\sum_{s\in\mathscr{S}_{t'}}Y^{-t}_{t',s}\mathds{1}_{\braces{e\in s}}.
\end{equation}
We state this observation in the following result.

\begin{theorem} 
\label{th:Bernoulli-Poisson}
Let ${\boldsymbol{\sigma}}$ be a strategy profile in the Poisson game defined by the costs in \eqref{eq:Poisson-cost} and the demands given in \eqref{eq:meas-Poisson} with $d_{t}>0$ satisfying \eqref{eq:HP}.
Define $({\boldsymbol{y}},{\boldsymbol{x}})$ as ${y}_{t,s}={d}_{t}\sigma_{t}(s)$ and 
${x}_{e}=\sum_{t\in\mathscr{T}}\sum_{s\in\mathscr{S}_{t}}{y}_{t,s}\mathds{1}_{\braces{e\in s}}$.
Then ${\boldsymbol{\sigma}}$ is an equilibrium in the Poisson game if and only if $({\boldsymbol{y}},{\boldsymbol{x}})$ is a Wardrop equilibrium for the nonatomic game $\breve{\Gamma}^\infty$ in Section~\ref{se:convergence-stochastic}.
\end{theorem}

\proof{Proof.}
It suffices to note that the posterior distribution of $\boldsymbol{Y}^{-t}$ is Poisson with independent components and expected values
$\Expect_{\mu,\boldsymbol{\sigma}}[\boldsymbol{Y}^{-t}_{t',s}]={y}_{t',s}$, so that 
$X^{-t}_{e}\sim\Poisson({x}_{e})$. Then, taking expectation, we get precisely 
\begin{equation}
\label{eq:EPcost}
\Expect_{\mu,\boldsymbol{\sigma}}\bracks{C(\boldsymbol{Y}^{-t};t,s)}=\sum_{e\in s}\Expect\bracks{c_{e}(1+X^{-t}_{e})}=\sum_{e\in s}\breve c_{e}({x}_{e}),
\end{equation}
where $\breve{c}_{e}(\argdot)$ is defined as in \eqref{eq:cost-Poisson}.
\Halmos
\endproof

Note that the costs in \eqref{eq:Poisson-cost} depend on the strategy flows $Y^{-t}$ only through the aggregate 
resource loads $X_{e}^{-t}$, whereas \citet{Mye:IJGT1998} considers more general cost functions $C(Y^{-t};t,s)$. 
However, as already mentioned, Myerson required costs to be bounded to ensure that their expected values are well defined,
whereas for congestion games this boundedness can be relaxed and replaced by the conditions in \eqref{eq:HP}.

\subsection{Convergence of Bernoulli Games with Bounded Costs}

As mentioned in the introduction, \citet{Mye:IJGT1998} introduced Poisson games axiomatically. 
The following variation of Theorem~\ref{th:Bayes-Poisson} provides a justification for Poisson games 
as limits of a sequence of finite games with population uncertainty of Bernoulli type. 
The result goes beyond the separable cost structure \eqref{eq:Poisson-cost} of congestion games, 
allowing the cost of a player to depend on her type, her action, and the full vector of strategy flows. 
To compensate, the costs are required to be bounded.

Consider a finite set of types  $\mathscr{T}$ with strategy sets $\mathscr{S}_{t}$ for $t\in\mathscr{T}$, and 
 a sequence of games $\Gamma^{n}$ with finitely many players $i\in\mathscr{N}^{n}$, with types $t_{i}^{n}$, and probabilities $u_{i}^{n}$ of being active. 
Let $U_{i}^{n}$ be Bernoulli random variables with $\Prob(U_{i}^{n}=1)=u_{i}^{n}$,
and let player $i$ choose $S_{i}^{n}\in\mathscr{S}_{t_{i}^{n}}$ at random using a mixed strategy $\sigma_{i}^{n}$. 
Let $Y^{n}$ be the random vector of strategy flows $Y^{n}_{t,s}=\sum_{j\colon t_{j}^{n}=t} U_{j} \mathds{1}_{\braces{S_{j}^{n}=s}}$,
 and define $Y^{-i,n}$ similarly by excluding player $i$. The expected cost of an action $s\in\mathscr{S}_{t_{i}^{n}}$
 for player $i$ is given by 
$\Expect_{\widehat{\boldsymbol{\sigma}}^{n}}[C(Y^{-i,n};t_{i}^{n},s)]$. A Nash equilibrium $\widehat{\boldsymbol{\sigma}}^{n}$ is defined as usual by the condition
\begin{equation*}
\forall i\in\mathscr{N}^{n}, \forall s,s'\in\mathscr{S}_{t_{i}^{n}}\quad \widehat{\sigma}_{i}^{n}(s)>0
\implies \Expect_{\widehat{\boldsymbol{\sigma}}^{n}}[C(Y^{-i,n};t_{i}^{n},s)]
\leq \Expect_{\widehat{\boldsymbol{\sigma}}^{n}}[C(Y^{-i,n};t_{i}^{n},s')].
\end{equation*}

\begin{theorem}
\label{th:general-Poisson}
Consider a sequence of games $\Gamma^{n}$ as above, with bounded cost functions $C(\argdot;t,s)$, and assume the conditions in \eqref{eq:H3} with $d_{t}>0$.
Then, for every sequence $\widehat{\boldsymbol{\sigma}}^{n}$ of Nash equilibria, the expected loads $y_{t,s}^{n}=\Expect_{\widehat{\boldsymbol{\sigma}}^{n}}\bracks{Y_{t,s}^{n}}$ are bounded and each accumulation point $\widehat{\boldsymbol{y}}=(\widehat{y}_{t,s})_{t\in\mathscr{T},s\in\mathscr{S}_{t}}$ corresponds to an equilibrium $\widehat{\boldsymbol{\sigma}}$ in the 
Poisson game by setting $\widehat{\sigma}_{t}(s)=\widehat{y}_{t,s}/{d}_{t}$ for all $s\in\mathscr{S}_{t}$.
\end{theorem}

\proof{Proof.}
The boundedness of the expected strategy loads $y_{t,s}^{n}$ follows from $\sum_{s\in\mathscr{S}_{t}}y_{t,s}^{n}= {d}_{t}^{n}\to {d}_{t}$.
Take a convergent subsequence and rename it so that $\boldsymbol{y}^{n}\to\widehat{\boldsymbol{y}}$, and define $\widehat{\boldsymbol{\sigma}}$ as in the statement.
If $\widehat{\sigma}_{t}(s)>0$ for some $s\in\mathscr{S}_{t}$, then $\widehat{y}_{t,s}>0$, so that, for all $n$ large we have $y^{n}_{t,s}>0$ and there must be a player $i^{n}\in\mathscr{N}^{n}$ with type $t$ and $\widehat{\sigma}^{n}_{i^{n}}(s)>0$. 
The equilibrium condition for $i^{n}$ implies that 
\begin{equation}
\label{eq:peq2}
\forall s'\in \mathscr{S}_{t} \qquad \Expect_{\widehat{\boldsymbol{\sigma}}^{n}}\bracks{C(Y^{-i^n,n};t,s)}\leq \Expect_{\widehat{\boldsymbol{\sigma}}^{n}}\bracks{C(Y^{-i^n,n};t,s')}.
\end{equation}
As in Proposition~\ref{pr:Poisson-conv}, it follows that the variables $\boldsymbol{Y}^{-i^{n},n}$ converge in total variation, and hence in distribution, to a random 
vector $\boldsymbol{Y}$ with independent Poisson components $Y_{t,s}\sim\Poisson(\widehat{y}_{t,s})$.
Letting $n\to \infty$ in \eqref{eq:peq2} we obtain \eqref{eq:Poisson-eq}, from which the result follows.
\Halmos
\endproof

\begin{remark} Theorem~\ref{th:general-Poisson} remains valid when some of the demands converge to zero $d_{t}^{n}\to d_{t}=0$, 
by considering the Poisson game restricted to the nontrivial types with $d_{t}>0$.
\end{remark}

\begin{remark} Despite their similarity, Theorems~\ref{th:Bayes-Poisson} and \ref{th:general-Poisson} are independent and neither one follows from the other.
In fact, Theorem~\ref{th:general-Poisson} allows for more general forms of the costs as long as they are bounded, whereas Theorem~\ref{th:Bayes-Poisson} 
can handle unbounded costs provided that they have the specific additive structure in \eqref{eq:Poisson-cost} and under the conditions in \eqref{eq:HP}.
\end{remark}

\section{Approximation Bounds and Rates of Convergence}
\label{se:rate-of-convergence}

In this section we establish non-asymptotic  bounds for the distance between the distribution of the edge loads at 
equilibrium in the finite games and the loads in the corresponding Wardrop and Poisson games. These  
bounds provide quantitative estimates on how well a Wardrop or Poisson equilibrium approximates the equilibrium of the finite game. 
We highlight that these results require the costs to have derivatives bounded away from zero.

As in previous sections we analyze separately the case of weighted congestion games and Bernoulli congestion games.
In both settings we exploit estimates for the distance between Wardrop equilibria in non\-atomic games 
with different demands, as well as between exact and approximate equilibria. 
Recall that an $\varepsilon$-approximate Wardrop equilibrium $(\boldsymbol{y},\boldsymbol{x})\in{\mathscr{F}}(\boldsymbol{d})$, or $\varepsilon$-Wardrop equilibrium, is defined exactly like a Wardrop equilibrium up to an additive error:
\begin{equation}
\label{eq:w4eps}
\forall t\in\mathscr{T},\ \forall s,s'\in\mathscr{S}_{t}\quad {y}_{t,s}>0 \implies \sum_{e\in s}c_{e}({x}_{e})\leq \sum_{e\in s'}c_{e}({x}_{e})+\varepsilon.
\end{equation}

\begin{proposition}
\label{pr:load-loadeq} 
Let $\parens{\widehat{\boldsymbol{y}},\widehat{\boldsymbol{x}}}$ be a Wardrop equilibrium for a nonatomic congestion game $\Gamma^{\infty}$ with $d_{\tot}\leq d_{\max}$.
Suppose that there exists some constant $c_{\min}'>0$ such that 
\begin{equation*}
\forall e\in\mathscr{E}, \forall x\in[0,d_{\max}]\quad c_{\min}'\leq c_{e}'(x),
\end{equation*}
and let $\Xi=\sqrt{2 C/c_{\min}'}$ with $C\geq\sum_{e\in s}c_{e}(d_{\max})$ for all strategies $s\in\cup_{t\in\mathscr{T}}\mathscr{S}_{t}$.

\begin{enumerate}[\upshape(a)]
\item
\label{it:pr:load-loadeq-1}
If $\parens{\boldsymbol{y},\boldsymbol{x}}\in \mathscr{F}(\boldsymbol{d})$ is an $\varepsilon$-approximate Wardrop equilibrium, then $\norm{\boldsymbol{x}-\widehat{\boldsymbol{x}}}_{2}\leq \sqrt{\varepsilon d_{\max}/c_{\min}'}$.
\item
\label{it:pr:load-loadeq-2}
If $\parens{\widehat{\boldsymbol{y}}',\widehat{\boldsymbol{x}}'}$ is a Wardrop equilibrium for perturbed demands $\boldsymbol{d}'$ with $d_{\tot}'\leq d_{\max}$,
then $\norm{\widehat{\boldsymbol{x}}'-\widehat{\boldsymbol{x}}}_{2}\leq \Xi\cdot \sqrt{\norm{\boldsymbol{d}'-\boldsymbol{d}}_{1}}$.
\end{enumerate}
\end{proposition} 
\proof{Proof.} 
See Appendix~\ref{se:rate-of-convergence-proof}. 
\Halmos
\endproof

\subsection{Weighted Congestion Games}
\label{suse:rate-weighted}

Theorem~\ref{th:T1} shows that the random loads $X_{e}^{n}$ in the finite weighted congestion games converge in distribution
to the Wardrop equilibrium loads $\widehat{x}_{e}$. 
Under additional assumptions which call for lower and upper bounds on the derivatives of the cost functions, 
we can find nonasymptotic estimates for the rate of convergence. 

We first show that any mixed Nash equilibrium of a weighted congestion game $\Gamma_{{\scriptscriptstyle \Wg}}$ is close to a Wardrop equilibrium in a nonatomic game $\Gamma^\infty$ with the same aggregate demands. 
The distance depends---in a way that the theorem makes precise---on the topology of the instance, on the cost functions, and on the magnitude of the weights. 

\begin{theorem} 
\label{th:random-load-load-0}
Let $\Gamma_{{\scriptscriptstyle \Wg}}$ be a weighted congestion game with aggregate demands $(d_{t})_{t\in\mathscr{T}}$ given by \eqref{eq:demand-weights} with $d_{\tot}\leq d_{\max}$.
Assume that there exist constants
$c_{\max}'\geq c_{\min}'>0$ such that 
\begin{equation*}
\forall e\in\mathscr{E}, \forall x\in[0,d_{\max}],\quad c_{\min}'\leq c_{e}'(x) \leq c_{\max}'
\end{equation*}
and suppose in addition that the derivatives $c_{e}'(\argdot)$ are $L$-Lipschitz continuous for some $L\geq 0$.
Define $\theta=\sqrt{d_{\max}/4}+\sqrt{\kappa\,d_{\max} \parens*{2c_{\max}'+L\,d_{\max}/4}/c_{\min}'}$ with 
  $\kappa\geq |s|$ for all strategies $s\in\cup_{t\in\mathscr{T}}\mathscr{S}_{t}$.
Let $X_{e}$ be the random load in a mixed Nash equilibrium $\widehat{\boldsymbol{\sigma}}\in\MNE(\Gamma_{{\scriptscriptstyle \Wg}})$, and $\widehat{x}_{e}$ the 
unique resource load in the Wardrop equilibrium for the nonatomic game $\Gamma^\infty=\parens{\mathscr{G},\parens{d_{t}}_{t\in\mathscr{T}}}$
with the same aggregate demands as $\Gamma_{{\scriptscriptstyle \Wg}}$. 
Letting $w=\max_{i\in\mathscr{N}}w_{i}$, we have
\begin{equation}
\label{eq:L2-est-0}
\norm{X_{e}-\widehat{x}_{e}}_{L^{2}} \leq \theta\cdot\sqrt{w}.
\end{equation}
As a reminder, the $L^2$ norm is defined as $\norm{X}_{L^{2}}=\sqrt{\Expect\bracks{X^{2}}}$.
\end{theorem}

\proof{Proof.} 
See Appendix~\ref{se:rate-of-convergence-proof}. 
\Halmos
\endproof

An unfortunate feature of the estimate \eqref{eq:L2-est-0} is the presence of the lower bound $c_{\min}'$  in the coefficient $\theta$. In particular, for fixed demands the left hand side of \eqref{eq:L2-est-0}
remains bounded whereas the bound on the right diverges as $c_{\min}'\to 0$, so that this estimate is not always tight. 
However, for arbitrary demands, the following example shows that the discrepancy between the equilibrium loads in a weighted atomic game and in the Wardrop equilibrium can be arbitrarily large when $c_{\min}'$ is small, and both the right and left hand side of  \eqref{eq:L2-est-0} tend to infinity, although at different rates. 
Although this example was constructed to illustrate the dependence on $c_{\min}'$, the upper bound parameter $\theta$ depends on various other parameters. A full study of the rate of convergence in relation to each of them is a topic for future research.

\begin{example}
Consider the network in Figure~\ref{fig:GrafoTight} with demands $d_{1}, d_{2}$ from two different
origins and a common destination. Let $v$ and $z$ denote the fractions of the 
demands sent over the central path.

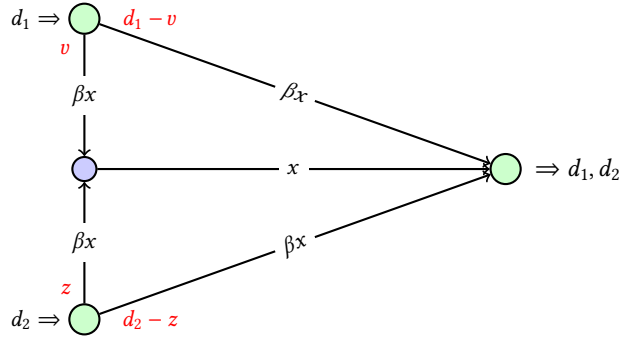
\begin{figure}[ht]
\centering
\begin{tikzpicture}[scale=0.8,node distance = 3 cm,thick,every node/.style={scale=0.8}]
  \tikzset{SourceNode/.style = {draw, circle, fill=green!20,minimum size=5mm}}
  \tikzset{DestNode/.style = {draw, circle, fill=green!20,minimum size=5mm}}
  \tikzset{NodeStyle/.style = {draw, circle, fill=blue!20,minimum size=4mm}}
     \node (D1) at (-1.3,2.5) {\small$d_{1}\Rightarrow$};
          \node (D1x) at (-0.8,2.0) {\small\color{red} $v$};
          \node (D1xx) at (0.6,2.5) {\small\color{red}$d_{1}-v$};
     \node (D2) at (-1.3,-2.5) {\small$d_{2}\Rightarrow$};
          \node (D2y) at (-0.8,-2.0) {\small\color{red} $z$};
          \node (D2yy) at (0.6,-2.5) {\small\color{red}$d_{2}-z$};
     \node (D12) at (7.7,0) {$\Rightarrow d_{1},d_{2}$};
     \node[SourceNode](s1) at (-0.5,2.5) {};
     \node[SourceNode](s2) at (-0.5,-2.5) {};
     \node[NodeStyle](v) at (-0.5,0) {};
     \node[DestNode](t) at (6.5,0) {};
     \draw[->](s1) to node[fill = white]{\small $\beta x$} (v);
     \draw[->](s2) to node[fill = white]{\small $\beta x$} (v);
     \draw[->](s1) to node[fill = white,sloped]{\small $\beta x$} (t);
     \draw[->](s2) to node[fill = white,sloped]{\small $\beta x$} (t);
     \draw[->](v) to node[fill = white,sloped]{\small $x$} (t);
\end{tikzpicture}
\caption{\label{fig:GrafoTight} A network with two OD pairs and unbounded errors for $\beta\to 0$. Labels over the arcs denote cost functions, whereas labels close to origins denote path flows.}
\end{figure}

\noindent Equalizing the costs of the central and outer paths we get the linear system
\begin{equation}
\label{eq:WE}
\begin{split}
(1+2\beta)\widehat{v} + \widehat{z}&=\beta d_{1}\\ 
\widehat{v}+(1+2\beta) \widehat{z} &=\beta d_{2},
\end{split}
\end{equation}
whose solution is a Wardrop equilibrium provided that $\widehat{v} \geq 0$ and $\widehat{z} \geq 0$. 
Similarly, for an atomic unsplittable game with weights $w_{i}\equiv w$ and demands $d_{1}=n_{1}w$, $d_{2}=n_{2}w$, integer multiples of $w$, 
we have that $v=k_{1}w$ and $z=k_{2}w$ with $k_{1},k_{2}\in\mathbb{N}\setminus\{0\}$ represent a pure Nash equilibrium if and only if 
\begin{equation}
\label{eq:NE}
\begin{split}
(1+2\beta) v + z &=\beta d_{1}+\varepsilon_{1}\\ 
 x + (1+2\beta) z &=\beta d_{2}+\varepsilon_{2}
\end{split}\quad\text{ with }
\varepsilon_{1},\varepsilon_{2} \in [-(1+\beta)w,\beta w].
\end{equation}
In particular, for $\varepsilon_{1}=0$ and $\varepsilon_{2}=-(1+\beta)w$ with ${1}/{\beta}\in\mathbb{N}$, it follows from \eqref{eq:NE} that the
demands  $d_{1}=w[2k_{1}+(k_{1}+k_{2})/\beta]$ and 
$d_{2}=w[2k_{2}+1+(k_{1}+k_{2}+1)/\beta]$ are integer multiples of $w$.
Now, solving \eqref{eq:WE} and \eqref{eq:NE} we get
\begin{equation*}
\begin{pmatrix} 
v\\ z 
\end{pmatrix}
-
\begin{pmatrix} 
\widehat{v}\\ \widehat{z} 
\end{pmatrix}
=\frac{w}{4\beta}
\begin{pmatrix} 
1\\ -(1+2\beta)
\end{pmatrix},
\end{equation*}
so that the nonnegativity conditions $\widehat{v}\geq 0$ and $\widehat{z}\geq 0$ are equivalent to $k_{1}\geq 1/(4\beta)$ and $k_{2}\geq -(1+2\beta)/(4\beta)$.
Hence, taking $k_{1}=\lceil 1/(4\beta)\rceil$ and $k_{2}=1$ we have that $(v,z)$ and $(\widehat{v},\widehat{z})$ are respectively Nash and Wardrop equilibria. 
Their load difference is of the order of $O(w/\beta)$, which diverges as $\beta\to 0$. 
Note that here $c_{\min}'=\beta$, $c_{\max}'=1$, 
$L=0$, $\kappa=2$, and
$d_{\max}=d_{1}+d_{2}\sim w/(2\beta^2)$, so that the bound \eqref{eq:L2-est-0}
becomes $\theta\sqrt{w}\sim\sqrt{2}w/\beta^{3/2}$,
which also diverges as $\beta\to 0$, although at a faster rate compared to the actual error $O(w/\beta)$. 
We do not know an instance when \eqref{eq:L2-est-0} is tight.
\end{example}

Combining Proposition~\ref{pr:load-loadeq} and Theorem~\ref{th:random-load-load-0} we can derive an explicit estimate for the $L^2$ distance between the random resource loads $X_{e}^{n}$ in a sequence of mixed Nash equilibria for $\Gamma_{{\scriptscriptstyle \Wg}}^{n}$, and the loads in the Wardrop equilibrium of the nonatomic limit game $\Gamma^\infty$.

The resulting bound in \eqref{eq:L2-est} has an additional term that involves the distance between the demands along the sequence $\Gamma_{{\scriptscriptstyle \Wg}}^{n}$
and those in $\Gamma^{\infty}$.
Thus, having an equilibrium close to the limit  requires small weights  and aggregate demands that are  close to the limiting demands.

\begin{corollary} 
\label{th:random-load-load}
Let $\Gamma_{{\scriptscriptstyle \Wg}}^{n}$ be a sequence of weighted congestion games satisfying \eqref{eq:H1} with $d_{\tot}^{n}\leq d_{\max}$ for all $n$.
Let the costs $c_{e}(\argdot)$ and $\theta$ be as in Theorem~\ref{th:random-load-load-0}, and  
$\Xi$ as in Proposition~\ref{pr:load-loadeq}. 
Let $X_{e}^{n}$ be the random loads in a sequence of mixed Nash equilibria $\widehat{\boldsymbol{\sigma}}^{n}\in\MNE(\Gamma_{{\scriptscriptstyle \Wg}}^{n})$, 
and $\widehat{x}_{e}$ the unique resource loads in the Wardrop equilibrium for the nonatomic limit game $\Gamma^\infty=\parens{\mathscr{G},\parens{d_{t}}_{t\in\mathscr{T}}}$. 
Then, with $w^{n}$ defined by \eqref{eq:w-n}, we have
\begin{equation}
\label{eq:L2-est}
\norm{X_{e}^{n}-\widehat{x}_{e}}_{L^{2}} \leq \theta\cdot\sqrt{w^{n}}+\Xi\cdot\sqrt{\norm{\boldsymbol{d}^{n}-\boldsymbol{d}}_{1}}.
\end{equation}
\end{corollary} 
\proof{Proof.} 
See Appendix~\ref{se:rate-of-convergence-proof}. 
\Halmos
\endproof

\subsection{Bernoulli Congestion Games}
\label{suse:rate-Bernoulli}

Analogously to Section~\ref{suse:rate-weighted}, we now find non-asymptotic estimates for the rate of convergence in Bernoulli congestion games.
Notice that \eqref{eq:rate} already provides a simpler---although difficult to quantify---bound on the distance between the resource loads in the 
finite games $\Gamma_{{\scriptscriptstyle \Bg}}^{n}$ and the Poisson limits. Indeed, although $ u^{n}$ is a primitive parameter of the model and its proximity 
to zero is readily available, the second term $\abs{\widehat{x}_{e}^{n}-\widehat{x}_{e}}$ is only known to converge to zero asymptotically. 
The following estimates provide more explicit bounds expressed directly in terms of the primitives of the model.

We first show that the distance between a Bayesian Nash equilibrium 
for a Bernoulli congestion game $\Gamma_{{\scriptscriptstyle \Bg}}$ and a Wardrop equilibrium for the nonatomic 
game $\breve{\Gamma}^\infty$ with the same aggregate demands
is bounded by an expression that depends on the topology and the function 
$\Lambda(u)$ defined in \eqref{eq:Lambda} evaluated at the maximum participation probability $u=\max_{i\in\mathscr{N}}u_{i}$.

\begin{theorem} 
\label{th:rate-Poisson-0}
Let $\Gamma_{{\scriptscriptstyle \Bg}}$ be a Bernoulli congestion game with expected demands $d_{t}=\sum_{i:t_{i}=t}u_{i}$ and $d_{\tot}\leq d_{\max}$.
Suppose that \eqref{eq:HP} holds and that
$\breve{c}_{e}'(x)\geq\breve{c}_{\min}'>0$ for all $x\in\bracks{0,d_{\max}}$. 
Let $\breve{\theta}=\sqrt{2\kappa d_{\max}/\breve{c}_{\min}'}$
with $\kappa\geq|s|$ for all strategies $s\in\cup_{t\in\mathscr{T}}\mathscr{S}_{t}$.
Let $X_{e}$ be the random loads in a Bayesian Nash equilibrium $\widehat{\boldsymbol{\sigma}}\in\BNE(\Gamma_{{\scriptscriptstyle \Bg}})$ with expected flow-loads $(\boldsymbol{y},\boldsymbol{x})$, 
and $\widehat{X}_{e}\sim\Poisson(\widehat{x}_{e})$ with $\widehat{\boldsymbol{x}}$ the 
unique Wardrop equilibrium loads in the nonatomic game $\breve{\Gamma}^{\infty}$ with costs $\breve{c}_{e}(\argdot)$ and
with the same aggregate demands $d_{t}$ as $\Gamma_{{\scriptscriptstyle \Bg}}$.
Then, 
$\norm{\boldsymbol{x}-\widehat{\boldsymbol{x}}}_{2}\leq \breve{\theta}\cdot\sqrt{\Lambda(u)}$
and 
\begin{equation}
\label{eq:B1}
\rho_{\TV}\parens*{\mathscr{L}(X_{e}),\mathscr{L}(\widehat{X}_{e})}\leq u + \breve{\theta}\cdot\sqrt{\Lambda(u)}.
\end{equation}
\end{theorem} 

\proof{Proof.} 
See Appendix~\ref{se:rate-of-convergence-proof}. 
\Halmos
\endproof

\begin{remark}
The condition $\breve{c}_{e}'(x)\geq\breve{c}_{\min}'$ holds when $c_{e}(k+1)\geq c_{e}(k)+\breve{c}_{\min}'$ for all $e\in\mathscr{E}$ and $k\geq 1$. 
A milder assumption is $0<\delta\coloneqq\min_{e\in\mathscr{E}}c_{e}(2)-c_{e}(1)$ in which case one can take $\breve{c}_{\min}'=\delta \mathrm{e}^{-d_{\max}}$.
\end{remark}

\begin{example} Similarly to the case of weighted congestion games, for $\breve{c}_{\min}'=\beta\approx 0$ the distance $\norm{\boldsymbol{x}-\widehat{\boldsymbol{x}}}_{2}$ between the loads in a Bernoulli congestion game 
and the corresponding Wardrop equilibrium can be arbitrarily large.
Indeed, consider the graph in Figure~\ref{fig:GrafoTight} and a Bernoulli game 
with identical participation probabilities $u_{i}\equiv u$ and expected demands
$d_{1}=n_{1}u$, $d_{2}=n_{2}u$, integer multiples of $u$. In this setting
$v=k_{1}u$ and $z=k_{2}u$, with $k_{1},k_{2}\in\mathbb{N}\setminus\{0\}$, is a pure Bayes-Nash equilibrium
 if and only if 
\begin{equation}
\label{eq:BNE}
\begin{split}
(1+2\beta) v +  z &=\beta d_{1}+\varepsilon_{1}\\ 
v + (1+2\beta) z &=\beta d_{2}+\varepsilon_{2}
\end{split}\qquad\text{ with }
\varepsilon_{1},\varepsilon_{2}\in[-1-\beta u,-1+(1+\beta)u].
\end{equation}
Taking $\varepsilon_{1}=-1+(1+\beta)u$ and $\varepsilon_{2}=-1-\beta u$, with $1/\beta\in\mathbb{N}$
and $1/(\beta u)\in\mathbb{N}$, the demands are indeed integer multiples of $u$, whereas
solving  \eqref{eq:WE} and \eqref{eq:BNE} we get
\begin{equation*}
\begin{pmatrix} v\\ z \end{pmatrix}
-\begin{pmatrix} \widehat{v}\\ \widehat{z} \end{pmatrix}
=\frac{u}{4\beta(1+\beta)}
\begin{pmatrix}1+2\beta(2+\beta-1/u)\\ 
-1-2\beta(1+\beta+1/u)
\end{pmatrix}.
\end{equation*}
Hence, by taking $k_{1}\geq (1+2\beta(2+\beta-1/u))/(4\beta(1+\beta))$ and
$k_{2}=1$  we have that $(\widehat{v},\widehat{z})$ is non-negative and therefore it is a Wardrop equilibrium, with
$\norm{\boldsymbol{x}-\widehat{\boldsymbol{x}}}_{2}\sim O(u/\beta)$ for $u$ fixed and $\beta\approx 0$.
The bound  in Theorem~\ref{th:rate-Poisson-0} is again not tight because 
$\breve{\theta}\sqrt{\Lambda(u)}\sim O(u/\beta^{3/2})$.
\end{example}

Using Theorem~\ref{th:rate-Poisson-0} we may estimate the distance between the random loads in a 
sequence $\Gamma_{{\scriptscriptstyle \Bg}}^{n}$ of Bernoulli congestion games and the corresponding Wardrop equilibrium in the limit game $\breve{\Gamma}^{\infty}$.
The bound in \eqref{eq:rate2}  has an additional term that involves the distance between the demands along the sequence $\Gamma_{{\scriptscriptstyle \Bg}}^{n}$
and the demands in $\breve{\Gamma}^{\infty}$.
Hence, having an equilibrium close to the limit  requires small participation probabilities together with the aggregate demands being close to the limiting demands.

\begin{corollary}
\label{co:rate-Poisson}
Let $\Gamma_{{\scriptscriptstyle \Bg}}^n$ be a sequence of Bernoulli congestion games satisfying \eqref{eq:H3} and \eqref{eq:HP} with
${d}_{\tot}^{n}\le d_{\max}$. 
Suppose that $\breve{c}_{e}'(x)\geq\breve{c}_{\min}'>0$ for all $x\in\bracks{0,d_{\max}}$, and
let $\breve{\theta}=\sqrt{2\kappa d_{\max}/\breve{c}_{\min}'}$  and $\breve{\Xi}=\sqrt{2{C}/\breve{c}_{\min}'}$
where $\kappa\geq |s|$ and $C\geq \sum_{e\in s}\breve{c}_{e}(d_{\max})$
for all $s\in\cup_{t\in\mathscr{T}}\mathscr{S}_{t}$. 
Let $X_{e}^{n}$ be the random load in a sequence of Bayes-Nash equilibria $\widehat{\boldsymbol{\sigma}}^{n}\in\BNE(\Gamma_{{\scriptscriptstyle \Bg}}^{n})$, 
and $X_{e}\sim\Poisson(\widehat{x}_{e})$ with $\widehat{x}_{e}$ the 
unique resource load in the Wardrop equilibrium for the nonatomic limit game $\breve{\Gamma}^{\infty}$. Then,
\begin{equation}
\label{eq:rate2}
\rho_{\TV}\parens*{\mathscr{L}(X_{e}^{n}),\mathscr{L}(X_{e})}\leq u^{n}
+\breve{\theta}\cdot\sqrt{\Lambda(u^{n})}+
\breve{\Xi}\cdot\sqrt{\norm{{\boldsymbol{d}}^{n}-{\boldsymbol{d}}}_{1}}.
\end{equation}
\end{corollary}

\proof{Proof.} 
See Appendix~\ref{se:rate-of-convergence-proof}. 
\Halmos
\endproof

\section{Convergence of the Inefficiency of Equilibria}
\label{se:PoA}

In this section we study the convergence of the inefficiency of equilibria, as captured by the PoA and the PoS.
Let us begin by recalling these notions.
We measure the social cost of a strategy profile as the sum of all players' costs. 
This provides us with a yardstick with which we can quantify the efficiency of equilibria as first proposed by \citet{KouPap:STACS1999}. 
The \emph{price of anarchy}  is the worst-case ratio of the social cost of the equilibrium to the optimum. 
Here, the worst case is taken with respect to all possible equilibria. 
The \emph{price of stability}  is defined accordingly with respect to the best equilibria.

Starting with nonatomic congestion games, their \emph{social cost} is given by
\begin{equation}
\label{eq:SC}
\forall\parens{\boldsymbol{y},\boldsymbol{x}}\in\mathscr{F}(\boldsymbol{d})\quad
\SC(\boldsymbol{y},\boldsymbol{x})\coloneqq\sum_{e\in\mathscr{E}}x_{e}\,c_{e}(x_{e}),
\end{equation}
from where the social optimum is 
$\Opt(\Gamma^{\infty})\coloneqq\min_{\parens{\boldsymbol{y},\boldsymbol{x}}\in\mathscr{F}(\boldsymbol{d})}\SC(\boldsymbol{y},\boldsymbol{x})$. 
It is well known \citep[see][]{BecMcGWin:Yale1956} that, whenever the cost functions $c_{e}$ are weakly increasing, all Wardrop equilibria have the same social cost, so that defining $\Eq(\Gamma^{\infty})\coloneqq\SC(\widehat{\boldsymbol{y}},\widehat{\boldsymbol{x}})$
for any $\parens{\widehat{\boldsymbol{y}},\widehat{\boldsymbol{x}}}\in\WE(\Gamma^{\infty})$, it follows that the PoA and PoS for a nonatomic game coincide and 
are given by
\begin{equation*}
\PoA(\Gamma^{\infty})=\PoS(\Gamma^{\infty})\coloneqq\frac{\Eq(\Gamma^{\infty})}{\Opt(\Gamma^{\infty})}.
\end{equation*}

The corresponding definitions for weighted congestion games and Bernoulli congestion games are similar, adjusted for the fact that now these games include stochastic realizations. 
The \emph{expected social cost} is
\begin{equation}
\label{eq:ESC}
\forall\boldsymbol{\sigma}\in\Sigma\quad
\ESC(\boldsymbol{\sigma})\coloneqq\Expect_{\boldsymbol{\sigma}}\bracks*{\sum_{e\in\mathscr{E}}X_{e}\,c_{e}(X_{e})},
\end{equation}
where $\parens{X_{e}}_{e\in\mathscr{E}}$ are the random resource loads induced by the mixed strategy profile $\boldsymbol{\sigma}$.
The optimum cost is
$\Opt(\Gamma)\coloneqq\min_{\boldsymbol{\sigma}\in\Sigma}\ESC(\boldsymbol{\sigma})$, 
and the social optimum is 
$\widetilde{\boldsymbol{\sigma}}\in\argmin_{\boldsymbol{\sigma}\in\Sigma}\ESC(\boldsymbol{\sigma})$.
Considering the worst and best social cost at equilibrium,
the price of anarchy and stability are:
\begin{equation*}
\PoA(\Gamma)\coloneqq\max_{\widehat{\boldsymbol{\sigma}}\in\MNE(\Gamma)}\frac{\ESC(\widehat{\boldsymbol{\sigma}})}{\Opt(\Gamma)}, \text{\quad and \quad}
\PoS(\Gamma)\coloneqq\min_{\widehat{\boldsymbol{\sigma}}\in\MNE(\Gamma)}\frac{\ESC(\widehat{\boldsymbol{\sigma}})}{\Opt(\Gamma)}.
\end{equation*}

With these definitions we may now establish the convergence of the PoA for a sequence of weighted congestion games 
$\Gamma_{{\scriptscriptstyle \Wg}}^{n}$. The proof uses some auxiliary results on the convergence of social costs presented in Appendix~\ref{se:PoA-Proofs}.

\begin{theorem}
\label{th:ESC-to-Eq}
Let $\Gamma_{{\scriptscriptstyle \Wg}}^{n}$ be a sequence of weighted congestion games satisfying the conditions in \eqref{eq:H1} and let $\widehat{\boldsymbol{\sigma}}^{n}\in\MNE(\Gamma_{{\scriptscriptstyle \Wg}}^{n})$. 
Then, 
$\ESC(\widehat{\boldsymbol{\sigma}}^{n})\to\Eq(\Gamma^{\infty})$, and 
therefore both the price of anarchy $\PoA(\Gamma_{{\scriptscriptstyle \Wg}}^{n})$ and the price of stability $\PoS(\Gamma_{{\scriptscriptstyle \Wg}}^{n})$
converge towards $\PoA(\Gamma^{\infty})=\PoS(\Gamma^{\infty})$.
\end{theorem}

\proof{Proof.}
From Theorem~\ref{th:T1} we know that every accumulation point of $(\widehat{\boldsymbol{y}}^{n},\widehat{\boldsymbol{x}}^{n})$ is a Wardrop equilibrium for $\Gamma^\infty$; hence, by Lemma~\ref{le:convergence-SC}, the full sequence $\ESC(\widehat{\boldsymbol{\sigma}}^{n})$ converges to $\Eq(\Gamma^\infty)$ as $n\to\infty$. On the other hand, Proposition~\ref{pr:min-SC} shows that $\Opt(\Gamma_{{\scriptscriptstyle \Wg}}^{n})\to\Opt(\Gamma^{\infty})$ from which the convergence
of $\PoA(\Gamma_{{\scriptscriptstyle \Wg}}^{n})$ and  $\PoS(\Gamma_{{\scriptscriptstyle \Wg}}^{n})$ follows at once.
\Halmos
\endproof

\begin{example}
\label{ex:sequence-weighted-PoA}
Consider the sequence of games in Example~\ref{ex:Weathstone-weighted} and the different equilibria described there. The social cost is minimized by splitting
half of the players between the upper and lower paths (up to 1 player when $n$ is odd). For $n=2$, we have $\PoA(\Gamma_{n})=4/3$ and 
$\PoS(\Gamma_{n})=1$, whereas for $n\geq3$, by setting $\delta_n=1$ if $n$ is odd and $\delta_n=0$ otherwise, we get $\PoA(\Gamma^{n})=4n^2/(3n^2+\delta_n)$ and $\PoS(\Gamma^{n})=(4n^2-2n+2)/(3n^2+\delta_n)$,  both converging to $\PoA(\Gamma^{\infty})=\PoS(\Gamma^{\infty})=4/3$ as $n\to\infty$.
\end{example}

The following analogous result holds for sequences of Bernoulli congestion games.  

\begin{theorem}
\label{th:PoA}
Let $\Gamma_{{\scriptscriptstyle \Bg}}^{n}$ be a sequence of Bernoulli congestion games satisfying the conditions in \eqref{eq:H3} and \eqref{eq:HP}. 
Then, for every sequence $\widehat{\boldsymbol{\sigma}}^{n}\in\BNE(\Gamma_{{\scriptscriptstyle \Bg}}^{n})$, the expected social cost $\ESC(\widehat{\boldsymbol{\sigma}}^{n})$ converges to $\Eq(\breve{\Gamma}^{\infty})$.
As a consequence, both $\PoA(\Gamma_{{\scriptscriptstyle \Bg}}^{n})$ and $\PoS(\Gamma_{{\scriptscriptstyle \Bg}}^{n})$ converge to $\PoA(\breve{\Gamma}^{\infty})=\PoS(\breve{\Gamma}^{\infty})$.
\end{theorem}

\proof{Proof.}
From Lemma~\ref{le:aux} we have that $\breve{c}_{e}'(x)\geq 0$ for $x\in[0,d_{\max}]$ so that the extended costs $\breve{c}_{e}'(\argdot)$ 
are weakly increasing, and therefore the social cost $\SC(\widehat{\boldsymbol{y}},\widehat{\boldsymbol{x}})\equiv \Eq(\breve{\Gamma}^{\infty})$ is the same in every Wardrop equilibrium. 
Now, Theorem~\ref{th:Bayes-Poisson} states that every accumulation point of $(\widehat{\boldsymbol{y}}^{n},\widehat{\boldsymbol{x}}^{n})$ is a Wardrop equilibrium for $\breve{\Gamma}^\infty$; hence, by Lemma~\ref{le:convergence-SC-Poisson}, the full sequence $\ESC(\widehat{\boldsymbol{\sigma}}^{n})$ converges to $\Eq(\breve{\Gamma}^\infty)$ as $n\to\infty$. On the other hand, Proposition~\ref{pr:min-SC-Poisson} gives $\Opt(\Gamma_{{\scriptscriptstyle \Bg}}^{n})\to\Opt(\breve{\Gamma}^{\infty})$ from which the convergence
of $\PoA(\Gamma_{{\scriptscriptstyle \Bg}}^{n})$ and  $\PoS(\Gamma_{{\scriptscriptstyle \Bg}}^{n})$ follows at once.
\Halmos
\endproof

\begin{example}
\label{ex:sequence-stoch-PoA}
Consider the sequence of games $\Gamma_{{\scriptscriptstyle \Bg}}^{n}$ on the Wheatstone network in Example~\ref{ex:sequence-stoch-games}. 
The social cost is minimized by splitting
half of the players between the upper and lower routes (up to one player for $n$ odd).
This strategy profile is also a pure Nash equilibrium so that $\PoS(\Gamma_{{\scriptscriptstyle \Bg}}^{n})=1$ for all $n\in\mathbb{N}$.
As far as the PoA is concerned, the worst equilibrium occurs when each player chooses the upper and lower routes
with probability $1/2$. Setting $\delta_{n}=1/n$ when $n$ is odd and $\delta_{n}=0$ otherwise, we obtain $\PoA(\Gamma_{{\scriptscriptstyle \Bg}}^{n})=(5n-1)/(5n-2+\delta_{n})$,
which converges to $\PoA(\breve{\Gamma}^{\infty})=1$ as $n\to\infty$.
\end{example}

\begin{remark}
For polynomial costs $c_{e}(\argdot)$ of degree at most $d$ we have that $\breve c_{e}(x)$ are again 
polynomials of the same degree (though with different coefficients), so that the results in \citet{Rou:JCSS2003} imply 
\begin{equation*}
\lim_{n\to\infty}\PoA(\Gamma_{{\scriptscriptstyle \Bg}}^{n})=\PoA(\breve{\Gamma}^{\infty})\leq B(d)\coloneqq\frac{(d+1)\sqrt[d]{d+1}}{(d+1)\sqrt[d]{d+1}-d}.
\end{equation*}
In fact, for $d=1$, the bound $\PoA(\Gamma_{{\scriptscriptstyle \Bg}}^{n})\leq 4/3$ is valid as soon as $u^{n}\leq 1/4$ \citep[see][]{ComScaSchSti:arXiv2022}.
For higher degrees we conjecture the existence of a threshold for $u^{n}$ under which $\PoA(\Gamma_{{\scriptscriptstyle \Bg}}^{n})$ already falls below the nonatomic bound $B(d)$. 
The current result only implies that we  have this as an asymptotic bound when $u^{n}\to 0$.
\end{remark}

\begin{remark}
Examples~\ref{ex:sequence-weighted-PoA} and \ref{ex:sequence-stoch-PoA} may suggest that the PoA in the nonatomic game obtained as a limit of weighted congestion games would be larger 
than the PoA in the nonatomic limit game for Bernoulli congestion games. 
This is not true in general. 
Consider for instance the Pigou network in Figure~\ref{fi:Pigou} with a demand of $1$.
\begin{figure}[ht]
\centering
\begin{tikzpicture}[node distance = 3 cm,thick,every node/.style={scale=0.8}]
  \tikzset{SourceNode/.style = {draw, circle, fill=green!20}}
  \tikzset{DestNode/.style = {draw, circle, fill=green!20}}
     \node[SourceNode](A){$\mathsf{O}$};
     \node[DestNode,right=of A](B){$\mathsf{D}$};
     \draw[->, bend left = 45](A) to node[fill=white]{$c_{1}(x)=x$} (B);
     \draw[->,bend right = 45 ](A) to node[fill=white]{$c_{2}(x)=2$} (B);
\end{tikzpicture}
\caption{\label{fi:Pigou} Pigou network.}
\end{figure}
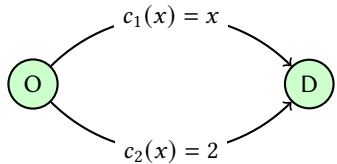
The Wardrop equilibrium of the standard nonatomic game is optimal, so $\PoA(\Gamma^{\infty})=1$. The Wardrop equilibrium of the nonatomic limit game of the Bernoulli game, in which the auxiliary cost function on the upper edge is now $\breve{c}_{1}(x)=1+x$, sends all demand on the upper path, whereas in the social optimum  the demand is split over the upper and lower path. So, we have that $\PoA(\breve{\Gamma}^{\infty})=8/7$.
\end{remark}

\section{Conclusions and Perspectives}
\label{se:conclusions}

In this paper we studied the convergence of equilibria of atomic unsplittable congestion games with an increasing number of players, towards a Wardrop equilibrium for a limiting nonatomic game. 
For the case where players have vanishing weights, the random flow-load pairs $(\boldsymbol{Y}^{n},\boldsymbol{X}^{n})$ at a mixed equilibrium in the finite games were shown to converge in 
distribution towards constant flow-load pairs $(\widehat{\boldsymbol{y}},\widehat{\boldsymbol{x}})$, which are Wardrop equilibria for the nonatomic limit game. 
In contrast, if players have a fixed unit weight but are present in the game with vanishing probabilities, then $(\boldsymbol{Y}^{n},\boldsymbol{X}^{n})$ converge in total variation to Poisson variables $(\boldsymbol{Y},\boldsymbol{X})$, whose expected values $(\widehat{\boldsymbol{y}},\widehat{\boldsymbol{x}})$ are again characterized as a Wardrop equilibrium for a nonatomic congestion game with auxiliary cost functions. 
In this case, the limit variables $(\boldsymbol{Y},\boldsymbol{X})$ can also be interpreted as an equilibrium for a Poisson game in the sense of \citet{Mye:IJGT1998}.
Under additional 
conditions we also established explicit estimates for the distance between the random loads $(\boldsymbol{Y}^{n},\boldsymbol{X}^{n})$
and their limits $(\widehat{\boldsymbol{y}},\widehat{\boldsymbol{x}})$ and $(\boldsymbol{Y},\boldsymbol{X})$, respectively.
These convergence results were completed by showing that in both frameworks the PoA and the PoS converge to the PoA of the limit game. 

We did not address the combined case in which both the weights and the presence probabilities vary across players. 
Such situations may be relevant for routing games where cars and trucks have a 
different impact on traffic, or in the presence of heterogeneous drivers that may be slower or faster inducing more or less congestion. 
Other settings in which players are naturally heterogeneous arise in telecommunications, where packets come in different sizes, and in processor sharing, where tasks arriving to a server have different workloads. 
In such cases, one might still expect to obtain a limit game which is likely to yield a \emph{weighted nonatomic game}, possibly involving weighted sums of Poisson distributions.

Another direction not explored in this paper concerns the case of oligopolistic competition in which some players, e.g., TomTom, Waze, FedEx, UPS, etc.\ may control a nonnegligible fraction of the demand, while simultaneously another fraction of the demand behaves as individual selfish players.
A natural conjecture is that in the vanishing weight limit one may converge
to a composite game as those studied by \citet{ComCorSti:OR2009} and \citet{SorWan:JDG2016}, with coexistence of atomic-splittable and nonatomic players. 
Similarly, in the case of fixed weights and vanishing probabilities one may expect to converge to some form of composite game involving Poisson random variables, which remains to be discovered.
%

%
%
%

\begin{APPENDICES}

\section{Approximation Bounds and Rates of Convergence -- Proofs} 
\label{se:rate-of-convergence-proof}

\proof{Proof of Proposition~\ref{pr:load-loadeq}.}
\ref{it:pr:load-loadeq-1}
Defining $C_{s}\coloneqq\sum_{e\in s} c_{e}(x_{e})$ and $\eta_{t}\coloneqq\min_{s\in\mathscr{S}_{t}}C_{s}$, we bound the squared distance by
\begin{align*}
\sum_{e\in\mathscr{E}}c_{\min}'\,(x_{e}-\widehat{x}_{e})^{2}&\leq\sum_{e\in\mathscr{E}}( c_{e}(x_{e})- c_{e}(\widehat{x}_{e}))(x_{e}-\widehat{x}_{e})\\
&\leq\sum_{e\in\mathscr{E}} c_{e}(x_{e})(x_{e}-\widehat{x}_{e})\\[-1.5ex]
&=\sum_{t\in\mathscr{T}}\sum_{s\in\mathscr{S}_{t}}\parens*{\sum_{e\in s} c_{e}(x_{e})}(y_{t,s}-\widehat{y}_{t,s})\\
&=\sum_{t\in\mathscr{T}}\sum_{s\in\mathscr{S}_{t}}(C_{s}-\eta_{t})(y_{t,s}-\widehat{y}_{t,s})\\
&\leq\sum_{t\in\mathscr{T}}\sum_{s\in\mathscr{S}_{t}}(C_{s}-\eta_{t})\,y_{t,s}\,.
\end{align*}
Here, in the second inequality we dropped the sum $\sum_{e\in\mathscr{E}} c_{e}(\widehat{x}_{e})(x_{e}-\widehat{x}_{e})$,
which is nonnegative because $(\widehat{\boldsymbol{y}},\widehat{\boldsymbol{x}})$ is a Wardrop equilibrium for $\Gamma^{\infty}$ and $(\boldsymbol{y},\boldsymbol{x})\in\mathscr{F}(\boldsymbol{d})$. {The first equality follows from expressing the resource loads in terms of the strategy flows and exchanging the order of summation, and the second equality from the fact that $\sum_{s\in\mathscr{S}_{t}}(y_{t,s}-\widehat{y}_{t,s})=0$ for all types $t$ as
 both solutions $\parens{\widehat{\boldsymbol{y}},\widehat{\boldsymbol{x}}}$ and $\parens{\boldsymbol{y},\boldsymbol{x}}$ are feasible.}
Now, using \eqref{eq:w4eps}, we conclude
\begin{equation*}
c_{\min}'\norm{\boldsymbol{x}-\widehat{\boldsymbol{x}}}_{2}^{2}\leq\sum_{t\in\mathscr{T}}\sum_{s\in\mathscr{S}_{t}}\varepsilon\,y_{t,s}=\varepsilon\sum_{t\in\mathscr{T}}d_{t}=\varepsilon\,d_{\tot}\leq\varepsilon\,d_{\max}. 
\end{equation*}

\noindent \ref{it:pr:load-loadeq-2}
Proceeding as in the previous part, we have
\begin{align*}
c_{\min}'\norm{\widehat{\boldsymbol{x}}'-\widehat{\boldsymbol{x}}}_{2}^{2}&\leq\sum_{e\in\mathscr{E}}(c_{e}(\widehat{x}_{e}')-c_{e}(\widehat{x}_{e}))(\widehat{x}_{e}'-\widehat{x}_{e})\\
&=\sum_{e\in\mathscr{E}} c_{e}(\widehat{x}_{e}')(\widehat{x}_{e}'-\widehat{x}_{e})
+\sum_{e\in\mathscr{E}} c_{e}(\widehat{x}_{e})(\widehat{x}_{e}-\widehat{x}_{e}').
\end{align*}
Let the two sums in the RHS be denoted by $\Psi(\widehat{\boldsymbol{x}}',\widehat{\boldsymbol{x}})$ and $\Psi(\widehat{\boldsymbol{x}},\widehat{\boldsymbol{x}}')$, respectively. 
To bound $\Psi(\widehat{\boldsymbol{x}}',\widehat{\boldsymbol{x}})$, we exploit the fact that $\widehat{\boldsymbol{x}}'$ is a Wardrop equilibrium for the demand $\boldsymbol{d}'$. 
Because $\parens{\widehat{\boldsymbol{y}},\widehat{\boldsymbol{x}}}$ is feasible for $\boldsymbol{d}$, but not for $\boldsymbol{d}'$, for each type $t$ with 
$d_{t}>0$ we consider the rescaled flows $y_{t,s}=\widehat{y}_{t,s} d_{t}'/d_{t}$, 
whereas when $d_{t}=0$ we simply take $y_{t,s}=\widehat{y}_{t,s}'$. 
Letting 
$x_{e}=\sum_{t\in\mathscr{T}}\sum_{s\in\mathscr{S}_{t}} y_{t,s}\mathds{1}_{\braces{e\in s}}$ denote the corresponding resource loads, 
we have that $(\boldsymbol{y},\boldsymbol{x})\in\mathscr{F}(\boldsymbol{d}')$, and therefore 
\begin{align*}
\Psi(\widehat{\boldsymbol{x}}',\widehat{\boldsymbol{x}})
&=\sum_{e\in\mathscr{E}} c_{e}(\widehat{x}_{e}')(\widehat{x}_{e}'-x_{e})+\sum_{e\in\mathscr{E}} c_{e}(\widehat{x}_{e}')(x_{e}-\widehat{x}_{e})\\
&\leq\sum_{e\in\mathscr{E}} c_{e}(\widehat{x}_{e}')(x_{e}-\widehat{x}_{e})\\[-1.5ex]
&=\sum_{t\in\mathscr{T}}\sum_{s\in\mathscr{S}_{t}}\parens*{\sum_{e\in s} c_{e}(\widehat{x}_{e}')}(y_{t,s}-\widehat{y}_{t,s}),
\end{align*}
where in the inequality we dropped the first sum which is nonpositive because $(\widehat{\boldsymbol{y}}',\widehat{\boldsymbol{x}}')$ is a Wardrop equilibrium and $(\boldsymbol{y},\boldsymbol{x})$ is feasible for $\boldsymbol{d}'$, 
whereas the last equality follows by expressing the resource loads in terms of the strategy flows and exchanging the order of the sums.

We now analyze each term in the outer sum over $t\in\mathscr{T}$. 
When $d_{t}>0$ the inner double sum can be bounded as 
\begin{align*}
\sum_{s\in\mathscr{S}_{t}}\parens*{\sum_{e\in s} c_{e}(\widehat{x}_{e}')}(y_{t,s}-\widehat{y}_{t,s})
&=\sum_{s\in\mathscr{S}_{t}}\parens*{\sum_{e\in s} c_{e}(\widehat{x}_{e}')}\widehat{y}_{t,s}(d_{t}'/d_{t}-1)\\
&\leq \sum_{s\in\mathscr{S}_{t}}C\,\widehat{y}_{t,s}\abs{d_{t}'/d_{t}-1}\\
&=C\abs{d_{t}'-d_{t}},
\end{align*}
whereas, when $d_{t}=0$, we have $y_{t,s}=\widehat{y}_{t,s}'$ and $\widehat{y}_{t,s}=0$ so that
\begin{equation*}
\sum_{s\in\mathscr{S}_{t}}\parens*{\sum_{e\in s} c_{e}(\widehat{x}_{e}')}(y_{t,s}-\widehat{y}_{t,s})
=\sum_{s\in\mathscr{S}_{t}}\parens*{\sum_{e\in s} c_{e}(\widehat{x}_{e}')}\widehat{y}_{t,s}'
\leq Cd_{t}'\\
=C\abs{d_{t}'-d_{t}}.
\end{equation*}
Summing these estimates over all $t\in\mathscr{T}$ we get $\Psi(\widehat{\boldsymbol{x}}',\widehat{\boldsymbol{x}})\leq C\norm{\boldsymbol{d}'-\boldsymbol{d}}_{1}$.
Symmetrically, we have $\Psi(\widehat{\boldsymbol{x}},\widehat{\boldsymbol{x}}')\leq C\norm{\boldsymbol{d}-\boldsymbol{d}'}_{1}$, from which the result follows.
\Halmos
\endproof

\proof{Proof of Theorem~\ref{th:random-load-load-0}.}
Let $x_{e}=\Expect_{\widehat{\boldsymbol{\sigma}}}\bracks{X_{e}}$ so that 
\begin{equation}
\label{eq:triangular}
\norm{X_{e}-\widehat{x}_{e}}_{L^{2}} \leq \norm{X_{e}-x_{e}}_{L^{2}} 
+\abs{x_{e}-\widehat{x}_{e}},
\end{equation}
with
\begin{equation}
\label{eq:variance}
\norm{X_{e}-x_{e}}_{L^{2}}^{2}=\Var_{\widehat{\boldsymbol{\sigma}}}(X_{e})
=\sum_{i\in\mathscr{N}}w_{i}^{2}\sigma_{i,e}(1-\sigma_{i,e})
\leq\frac{1}{4}\sum_{i\in \mathscr{N}}w_{i}^{2}
\leq\frac{w\,d_{\tot}}{4}
\leq\frac{w\,d_{\max}}{4}.
\end{equation}
With this bound in place, we proceed to estimate $\abs{x_{e}-\widehat{x}_{e}}$ by showing that 
$(\boldsymbol{y},\boldsymbol{x})$ is an $\varepsilon$-Wardrop equilibrium with 
\begin{equation*}
\varepsilon=\kappa\parens*{2c_{\max}'+ L\,d_{\max}/4}w.
\end{equation*}
To this end, we first observe that since the derivatives $c_{e}'(\argdot)$  are $L$-Lipschitz, 
we have:
\begin{equation*}
-\frac{1}{2}L\,(X_{e}-x_{e})^{2}
\leq c_{e}\big(X_{e}\big)-c_{e}(x_{e})-c_{e}'(x_{e})(X_{e}-x_{e})
\leq\frac{1}{2}\, L\,(X_{e}-x_{e})^{2},
\end{equation*}
so that taking expectations, and noting that the expected value of the linear part vanishes, we get
\begin{equation}
\label{eq-gamma-Var}
\abs*{\Expect_{\widehat{\boldsymbol{\sigma}}}\bracks*{c_{e}(X_{e})}-c_{e}(x_{e})}
\leq
\frac{1}{2}\, L\Var_{\widehat{\boldsymbol{\sigma}}}(X_{e})\leq L\,w\,d_{\max}/8.
\end{equation}
Now, if $y_{t,s}>0$, then there exists some player $i\in\mathscr{N}$ with $t_{i}=t$ and $\widehat{\sigma}_{i}(s)>0$, so that 
the equilibrium condition in $\Gamma_{{\scriptscriptstyle \Wg}}$ implies that for each alternative strategy $s'\in\mathscr{S}_{t}$ we have 
\begin{equation}
\label{eq:disc-0}
\sum_{e\in s} \Expect_{\widehat{\boldsymbol{\sigma}}}\bracks*{c_{e}(X_{i,e})}\leq \sum_{e\in s'}\Expect_{\widehat{\boldsymbol{\sigma}}}\bracks*{c_{e}(X_{i,e})}.
\end{equation}
Now, because $\abs{X_{i,e}-X_{e}} \leq w$ and $c_{e}(\argdot)$ is $c_{\max}'$-Lipschitz, by using \eqref{eq-gamma-Var} we get
\begin{equation*}
\abs*{\Expect_{\widehat{\boldsymbol{\sigma}}}\bracks*{c_{e}(X_{i,e})}-c_{e}(x_{e})}
\leq c_{\max}'w+\abs*{\Expect_{\widehat{\boldsymbol{\sigma}}}\bracks*{c_{e}(X_{e})}-c_{e}(x_{e})}
\leq \parens*{c_{\max}' +L\,d_{\max}/8}w
\end{equation*}
and, then, we can approximate \eqref{eq:disc-0} with respect to costs:
\begin{equation*}
\sum_{e\in s} c_{e}(x_{e})\leq \sum_{e\in s'} c_{e}(x_{e})+2\kappa\parens*{c_{\max}' + L d_{\max}/8}w.
\end{equation*}
This shows that $(\boldsymbol{y},\boldsymbol{x})$ is an $\varepsilon$-Wardrop equilibrium for the nonatomic game with demands $\boldsymbol{d}$, and then invoking Proposition~\ref{pr:load-loadeq}\ref{it:pr:load-loadeq-1} we get
\begin{equation}
\label{eq:combined-ineq}
\abs{x_{e}-\widehat{x}_{e}}
\leq\norm{\boldsymbol{x}-\widehat{\boldsymbol{x}}}_{2}\leq \sqrt{ \kappa d_{\max}\parens*{2c_{\max}' +L\,d_{\max}/4}w/c_{\min}'}.
\end{equation}
Plugging \eqref{eq:combined-ineq} and \eqref{eq:variance} into \eqref{eq:triangular} we obtain the final estimate in \eqref{eq:L2-est-0}.
\Halmos
\endproof

\proof{Proof of Theorem~\ref{th:random-load-load}.}
Take $(\widehat{\boldsymbol{y}}^{n},\widehat{\boldsymbol{x}}^{n})$ a Wardrop equilibrium for the nonatomic game $\Gamma^\infty$ with demands $\boldsymbol{d}^{n}$.
A triangle inequality gives $\norm{X_{e}^{n}-\widehat{x}_{e}}_{L^{2}} \leq\norm{X_{e}^{n}-\widehat{x}_{e}^{n}}_{L^{2}} +\abs{\widehat{x}_{e}^{n}-\widehat{x}_{e}}$, so the result follows from the estimate $\norm{X_{e}^{n}-\widehat{x}_{e}^{n}}_{L^{2}}\leq \theta\sqrt{w^{n}}$
in Theorem~\ref{th:random-load-load-0}, and 
 the bound $\abs{\widehat{\boldsymbol{x}}_{e}^{n}-\widehat{\boldsymbol{x}}_{e}}\leq \Xi \sqrt{\norm{\boldsymbol{d}^{n}-\boldsymbol{d}}_{1}}$ from
 Proposition~\ref{pr:load-loadeq}\ref{it:pr:load-loadeq-2}.
\Halmos
\endproof

\proof{Proof of Theorem~\ref{th:rate-Poisson-0}.}
The bound $\norm{\boldsymbol{x}-\widehat{\boldsymbol{x}}}_{2}\leq \breve{\theta}\sqrt{\Lambda(u)}$
 follows from Proposition~\ref{pr:load-loadeq}\ref{it:pr:load-loadeq-1} because $\boldsymbol{x}$ is an $\varepsilon$-Wardrop equilibrium for 
 $\breve{\Gamma}^{\infty}$ with $\varepsilon=2\kappa\Lambda(u)$. 
Indeed, if $y_{t,s}>0$ there exists some player $i\in\mathscr{N}$ with $t_{i}=t$ and $\widehat{\sigma}_{i}(s)>0$, so that 
the equilibrium condition yields for all $s'\in \mathscr{S}_{t}$
\begin{equation*}
\sum_{e\in s}\Expect_{\widehat{\boldsymbol{\sigma}}}\bracks*{c_{e}(1+Z_{i,e})}
\leq \sum_{e\in s'}\Expect_{\widehat{\boldsymbol{\sigma}}}\bracks*{c_{e}(1+Z_{i,e})},
\end{equation*}
and then \eqref{eq:cP-c1} implies the conditions for $\varepsilon$-Wardrop equilibrium
\begin{equation*}
\sum_{e\in s}\breve{c}_{e}(x_{e})\leq \sum_{e\in s'}\breve{c}_{e}(x_{e})+2\kappa\Lambda(u).
\end{equation*}

Now, taking $V_{e}\sim\Poisson(x_{e})$, Theorem~\ref{th:Barbour}\ref{it:th:Barbour-1} gives
$\rho_{\TV}\parens*{\mathscr{L}(X_{e}),\mathscr{L}(V_{e})}\leq u$,
and \eqref{eq:TV-bound} implies
$\rho_{\TV}({\mathscr{L}(V_{e}),\mathscr{L}(\widehat{X}_{e})})\leq \abs{x_{e}-\widehat{x}_{e}}$,
so that \eqref{eq:B1} follows from a triangle inequality.
\Halmos
\endproof

\proof{Proof of Corollary~\ref{co:rate-Poisson}.}
This follows by considering $\widehat{X}_{e}^{n}\sim\Poisson(\widehat{x}_{e}^{n})$ with $(\widehat{\boldsymbol{y}}^{n},\widehat{\boldsymbol{x}}^{n})$ a Wardrop equilibrium for the game $\breve{\Gamma}^{\infty}$ with demands $\boldsymbol{d}^{n}$, and then using a triangle inequality and applying Theorem~\ref{th:rate-Poisson-0} and Proposition~\ref{pr:load-loadeq}\ref{it:pr:load-loadeq-2}.
\Halmos
\endproof

\section{Convergence of Social Costs}
\label{se:PoA-Proofs}

This appendix includes the auxiliary results on the convergence of social costs required to establish  the convergence of the PoA and the PoS presented in Section~\ref{se:PoA}.

\subsection{Weighted Congestion Games}
\label{suse:PoA-weighted-proofs}

From Section~\ref{se:convergence-weighted} we know that the equilibria in a sequence of weighted congestion games converge to the set of Wardrop equilibria
for the limit game. Below we prove that the corresponding social costs at equilibrium, as well as the optimal social costs, also converge. 
To this end, we start by proving that for any sequence of converging expected flow-load pairs, the social cost of the sequence converges to 
that of the limiting flow-load pair.

\begin{lemma}
\label{le:convergence-SC}
Let $\Gamma_{{\scriptscriptstyle \Wg}}^{n}$ be a sequence of weighted congestion games satisfying the conditions in \eqref{eq:H1}, and $\boldsymbol{\sigma}^{n}$ an arbitrary sequence of mixed strategy profiles (not necessarily equilibria). 
Let $\parens{Y^{n}_{t,s}, X^{n}_{e}}$ be the corresponding random flow-load pairs with expected values $\parens{y^{n}_{t,s},x^{n}_{e}}$.
Then, along any subsequence of $\parens{\boldsymbol{y}^{n},\boldsymbol{x}^{n}}$ converging to some $\parens{\boldsymbol{y},\boldsymbol{x}}$, the expected
social cost $\ESC(\boldsymbol{\sigma}^{n})$ converges to $\SC(\boldsymbol{y},\boldsymbol{x})=\sum_{e\in\mathscr{E}}x_{e}c_{e}(x_{e})$.
\end{lemma}

\proof{Proof.}
Take a convergent subsequence and rename it so that $\parens{\boldsymbol{y}^{n},\boldsymbol{x}^{n}}\to\parens{\boldsymbol{y},\boldsymbol{x}}$. 
By conditioning on $\mathds{1}_{\braces{e\in S_{i}^{n}}}$, which indicates whether player $i$ selects a strategy including $e$, we get
\begin{equation*}
\ESC(\boldsymbol{\sigma}^{n})
=\sum_{e\in\mathscr{E}}\sum_{i\in \mathscr{N}^{n}}\Expect_{\boldsymbol{\sigma}^{n}}\bracks*{w_{i}^{n}\mathds{1}_{\braces{e\in S_{i}^{n}}} \,c_{e}(X_{e}^{n})}
=\sum_{e\in\mathscr{E}}\sum_{i\in \mathscr{N}^{n}}w_{i}^{n}\sigma_{i,e}^{n}\Expect_{\boldsymbol{\sigma}^{n}}\bracks*{c_{e}(X_{i,e}^{n})}.
\end{equation*}

Note that $\abs{X_{i,e}^{n}-X_{e}^{n}}\leq w^{n}\to 0$ and $0\leq X_{e}^{n}\leq d_{\tot}^{n}$ with $d_{\tot}^{n}\to d_{\tot}$.
Because $c_{e}(\argdot)$ is continuous,
and hence uniformly continuous on compact intervals, it follows that $\Expect_{\boldsymbol{\sigma}^{n}}\bracks{c_{e}(X_{i,e}^{n})}-\Expect_{\boldsymbol{\sigma}^{n}}\bracks{c_{e}(X_{e}^{n})}$ converges to $0$ uniformly in $i$, that is,
\begin{equation*}
\delta_{e}^{n}\coloneqq\max_{i\in\mathscr{N}^{n}}\abs*{\Expect_{\boldsymbol{\sigma}^{n}}\bracks*{c_{e}(X_{i,e}^{n})}-\Expect_{\boldsymbol{\sigma}^{n}}\bracks*{c_{e}(X_{e}^{n})}}\to 0.
\end{equation*}
Hence, using the identity 
\begin{equation*}
x_{e}^{n}=\sum_{i\in\mathscr{N}^{n}}w_{i}^{n}\sigma_{i,e}^{n},
\end{equation*}
we obtain
\begin{align*}
\abs*{\ESC(\boldsymbol{\sigma}^{n})-\sum_{e\in\mathscr{E}}x_{e}^{n}\,c_{e}(x_{e}^{n})} 
&\leq \sum_{e\in\mathscr{E}}\sum_{i\in\mathscr{N}^{n}}w_{i}^{n}\sigma_{i,e}^{n}\abs*{\Expect_{\boldsymbol{\sigma}^{n}}\bracks{c_{e}(X_{i,e}^{n})}-c_{e}(x_{e}^{n})}\\
&\leq \sum_{e\in\mathscr{E}}\sum_{i\in \mathscr{N}^{n}}w_{i}^{n}\sigma_{i,e}^{n}\parens*{\delta_{e}^{n}+\abs*{\Expect_{\boldsymbol{\sigma}^{n}}\bracks*{c_{e}(X_{e}^{n})}-c_{e}(x_{e}^{n})}}\\
&= \sum_{e\in\mathscr{E}}x_{e}^{n}\parens*{\delta_{e}^{n}+\abs*{\Expect_{\boldsymbol{\sigma}^{n}}\bracks*{c_{e}(X_{e}^{n})}-c_{e}(x_{e}^{n})}}.
\end{align*}
The conclusion follows because $x_{e}^{n}\to x_{e}$ and $X_{e}^{n}\xrightarrow{\mathscr{D}}x_{e}$, so that $\Expect_{\boldsymbol{\sigma}^{n}}\bracks*{c_{e}(X_{e}^{n})}\to c_{e}(x_{e})$.
\Halmos
\endproof

Using the previous lemma we may derive the convergence of the optimal social cost.

\begin{proposition}
\label{pr:min-SC} 
Let $\Gamma_{{\scriptscriptstyle \Wg}}^{n}$ be a sequence of weighted congestion games satisfying \eqref{eq:H1}. 
Then 
$\Opt(\Gamma_{{\scriptscriptstyle \Wg}}^{n})\to\Opt(\Gamma^{\infty})$.
\end{proposition}

\proof{Proof.}
Let $\parens{\widetilde{\boldsymbol{y}},\widetilde{\boldsymbol{x}}}$ be a social optimum flow-load pair in the limiting game $\Gamma^{\infty}$.
We convert the strategy flow $\widetilde{\boldsymbol{y}}$ into mixed strategies $\fakeopt{\sigma}_{t}\in\bigtriangleup(\mathscr{S}_{t})$ 
by setting 
\begin{equation*}
\forall s\in\mathscr{S}_{t}\qquad \fakeopt{\sigma}_{t}(s)=\widetilde{y}_{t,s}/d_{t} 
\end{equation*}
when $d_{t}>0$, and otherwise taking an arbitrary $\fakeopt{\sigma}_{t}\in\bigtriangleup(\mathscr{S}_{t})$ 
for each type with $d_{t}=0$.

Let $\fakeopt{\boldsymbol{\sigma}}^{n}$ be the strategy profile for $\Gamma_{{\scriptscriptstyle \Wg}}^{n}$ in which player $i$ plays $\fakeopt\sigma_{i}^{n}=\fakeopt\sigma_{t_{i}^{n}}$.
For each $t$ such that $d_{t}=0$ we have $\fakeopt y^{n}_{t,s}\to 0=\widetilde{y}_{t,s}$ for all $s\in\mathscr{S}_{t}$, whereas when $d_{t}>0$ we have
\begin{equation*}
\fakeopt y^{n}_{t,s}=\frac{\widetilde{y}_{t,s}d_{t}^{n}}{d_{t}}\to\widetilde{y}_{t,s}.
\end{equation*}
Hence, $\parens*{\fakeopt y^{n},\fakeopt x^{n}}$ converges to $\parens{\widetilde{y},\widetilde{x}}$, and Lemma~\ref{le:convergence-SC} implies that $\ESC(\fakeopt{\boldsymbol{\sigma}}^{n})\to \SC(\widetilde{\boldsymbol{y}},\widetilde{\boldsymbol{x}})$.

Now, take a sequence $\widetilde{\boldsymbol{\sigma}}^{n}$ of optimal mixed strategies in $\Gamma_{{\scriptscriptstyle \Wg}}^{n}$ and let $(\widetilde{\boldsymbol{y}}^{n},\widetilde{\boldsymbol{x}}^{n})$ be the corresponding expected loads. 
From the optimality of $\widetilde{\boldsymbol{\sigma}}^{n}$ we have $\ESC(\widetilde{\boldsymbol{\sigma}}^{n})\leq\ESC(\fakeopt{\boldsymbol{\sigma}}^{n})$, so that 
\begin{equation}
\label{eq:limsup0}
\limsup_{n\to\infty}\ESC(\widetilde{\boldsymbol{\sigma}}^{n})
\leq \limsup_{n\to\infty}\ESC(\fakeopt{\boldsymbol{\sigma}}^{n})=\SC(\widetilde{\boldsymbol{y}},\widetilde{\boldsymbol{x}})=\Opt(\Gamma^{\infty}).
\end{equation}
On the other hand, taking a subsequence along which we attain the $\liminf_{n\to\infty}\ESC(\widetilde{\boldsymbol{\sigma}}^{n})$ and extracting a 
further subsequence so that $\parens{\widetilde{\boldsymbol{y}}^{n},\widetilde{\boldsymbol{x}}^{n}}$ converges to a certain limit $\parens{\boldsymbol{y},\boldsymbol{x}}$, it follows that
\begin{equation*}
\liminf_{n\to\infty}\ESC(\widetilde{\boldsymbol{\sigma}}^{n})=\SC(\boldsymbol{y},\boldsymbol{x})\geq\Opt(\Gamma^{\infty}),
\end{equation*}
which, combined with \eqref{eq:limsup0}, yields the result.
\Halmos
\endproof

\subsection{Bernoulli Congestion Games}
\label{suse:PoA-BCG-proofs}

The following are the analogous results for a sequence of Bernoulli congestion games. 

\begin{lemma}
\label{le:convergence-SC-Poisson}
Let $\Gamma_{{\scriptscriptstyle \Bg}}^{n}$ be a sequence of Bernoulli congestion games satisfying the conditions in \eqref{eq:H3} and \eqref{eq:HP}, and let $\boldsymbol{\sigma}^{n}$ be an arbitrary sequence of mixed strategies. 
Let $Y^{n}_{t,s}$ and $ X^{n}_{e}$ be the
corresponding random loads with expected values $y^{n}_{t,s}$ and $ x^{n}_{e}$.
Then, along any subsequence of $(\boldsymbol{y}^{n},\boldsymbol{x}^{n})$ converging to some $(\boldsymbol{y}, \boldsymbol{x})$, the expected social cost $\ESC(\boldsymbol{\sigma}^{n})$ converges 
to $\breve{\SC}(\boldsymbol{y},\boldsymbol{x})\coloneqq
\sum_{e\in\mathscr{E}}x_{e}\, \breve c_{e}(x_{e})$.
\end{lemma}

\proof{Proof.} 
Take a convergent subsequence and rename it so that $(\boldsymbol{y}^{n}, \boldsymbol{x}^{n})\to (\boldsymbol{y}, \boldsymbol{x})$.
By conditioning on the event $U_{i,e}^{n}=1$, we have
\begin{equation*}
\ESC(\boldsymbol{\sigma}^{n})
=\sum_{e\in\mathscr{E}}\sum_{i\in \mathscr{N}^{n}}\Expect_{\boldsymbol{\sigma}^{n}}\bracks*{U_{i,e}^{n} \,c_{e}(X_{e}^{n})}
=\sum_{e\in\mathscr{E}}\sum_{i\in \mathscr{N}^{n}}u_{i}^{n}\sigma_{i,e}^{n}\,\Expect_{\boldsymbol{\sigma}^{n}}\bracks*{c_{e}(1+Z_{i,e}^{n})}.
\end{equation*}
Using the identity $x_{e}^{n}=\sum_{i\in\mathscr{N}^{n}}u_{i}^{n} \sigma_{i,e}^{n}$ and invoking Lemma~\ref{le:aux}, we obtain
\begin{align*}
\abs*{\ESC(\boldsymbol{\sigma}^{n})-\sum_{e\in\mathscr{E}}x_{e}^{n}\, \breve{c}_{e}(x_{e}^{n})}&
\leq \sum_{e\in\mathscr{E}}\sum_{i\in \mathscr{N}^{n}}u_{i}^{n}\sigma_{i,e}^{n}\abs*{\Expect_{\boldsymbol{\sigma}^{n}}\bracks{c_{e}(1+Z_{i,e}^{n})}-\breve c_{e}(x_{e}^{n})}\\
&\leq \sum_{e\in\mathscr{E}}\sum_{i\in \mathscr{N}^{n}}u_{i}^{n}\sigma_{i,e}^{n}\,\Lambda(u^{n})\\ 
&= \sum_{e\in\mathscr{E}}x_{e}^{n}\,\Lambda(u^{n})\to 0,
\end{align*}
and then the conclusion follows from $x_{e}^{n} \to x_{e}$ and the continuity of $\breve{c}_{e}(\argdot)$.
\Halmos
\endproof

\begin{proposition}
\label{pr:min-SC-Poisson} 
Let $\Gamma_{{\scriptscriptstyle \Bg}}^{n}$ be a sequence of Bernoulli congestion games satisfying \eqref{eq:H3} and\eqref{eq:HP}. 
Then 
$\Opt(\Gamma_{{\scriptscriptstyle \Bg}}^{n})\to\Opt(\breve{\Gamma}^{\infty})$. 
\end{proposition}

\proof{Proof.}
It suffices to repeat the proof of Proposition~\ref{pr:min-SC} step-by-step,
replacing $\SC(\argdot)$ with $\breve{\SC}(\argdot)$ and invoking Lemma~\ref{le:convergence-SC-Poisson} instead of Lemma~\ref{le:convergence-SC}.
\Halmos
\endproof

\section{Poisson Approximation for Sums of Bernoulli Random Variables}
\label{se:Poisson-Bernoulli}

This section collects some known facts on the Poisson approximation for sums of Bernoulli random variables. 
The main results are taken from \citet{AdeLeu:B2005}, \citet{BarHal:MPCPS1984}, and \citet{BorRuz:AP2002}, suitably adapted to our goals.
We recall that $X\sim\Poisson(x)$ with parameter $x\geq 0$ if and only if 
\begin{equation*}
\Prob(X=k)=\mathrm{e}^{-x}\frac{x^{k}}{k!}\quad\forall k\in\mathbb{N}.
\end{equation*}
As usual we denote the law of $X$ by $\mathscr{L}(X)$.

Two Poisson variables $X\sim\Poisson(x)$ and $Y\sim\Poisson(y)$ are close when $x\approx y$.
In fact, their total variation distance (see  \eqref{eq:TV1}) can be estimated as \citep[see][]{AdeLeu:B2005}
\begin{equation}
\label{eq:TV-bound}
\rho_{\TV}(\mathscr{L}(X),\mathscr{L}(Y))\leq 1-\mathrm{e}^{-\abs{x-y}}\leq \abs{x-y}.
\end{equation}

Given a function $h:\mathbb{N}\to\mathbb{R}$, for each $X\sim\Poisson(x)$ with $\Expect\bracks{\abs{h(X)}}<\infty$, we define
\begin{equation*}
\breve h(x)=\Expect\bracks{h(X)}=\sum_{k=0}^{\infty} h(k) \mathrm{e}^{-x}\frac{x^{k}}{k!}.
\end{equation*}
We also define $h\mapsto \Delta h$ as the operator that takes the function $h$ into $\Delta h(k)=h(k+1)-h(k)$.
Below we state three useful facts that are used in the sequel.
\begin{proposition}
\label{pr:Poisson-Delta} 
Let $X\sim\Poisson(x)$. Then
\begin{enumerate}[\upshape(a)]
\item
\label{it:pr:Poisson-Delta-1} 
$\Expect\bracks{\abs{h(Y)}}\leq \mathrm{e}^{x-y}\,\Expect\bracks{\abs{h(X)}}$ for each $Y\sim\Poisson(y)$ with $y\leq x$.

\item
\label{it:pr:Poisson-Delta-2} 
For each $j=1,2,\ldots$ we have\footnote{Note that if $\Expect\bracks{X^j\abs{h(X)}}<\infty$ holds for a certain $j$, it also holds for $j'=1,\ldots,j$.}
\begin{equation*}
\Expect\bracks{\abs{\Delta^j h(X)}}<\infty \iff \Expect\bracks{\abs{h(X+j)}}<\infty  \iff \Expect\bracks{X^j\abs{h(X)}}<\infty.
\end{equation*}
\end{enumerate}
\end{proposition}

\proof{Proof.} 
Property \ref{it:pr:Poisson-Delta-1} is just the monotonicity of 
$x\mapsto \mathrm{e}^{x}\cdot\Expect\bracks{\abs{h(X)}}=\sum_{k=0}^{\infty}\abs{h(k)}\frac{x^{k}}{k!}$, whereas 
\ref{it:pr:Poisson-Delta-2} can be found in \citet[proposition~1]{BorRuz:AP2002}.
\Halmos
\endproof

\begin{proposition} 
\label{pr:Poisson-Delta-j}
Let $h:\mathbb{N}\to\mathbb{R}$ and $V\sim\Poisson(d_{\max})$.
\begin{enumerate}[\upshape(a)]
\item \label{it:pr:Poisson-Delta-j-1}
If\hspace{0.2ex} $\Expect\bracks{\abs{h(V)}}<\infty$ then $\breve h(x)$ is well defined and continuous for $x\in[0,d_{\max}]$. 

\item \label{it:pr:Poisson-Delta-j-2}
If\hspace{0.2ex} $\Expect\bracks{\abs{\Delta^j h(V)}}<\infty$ for some $j\in\mathbb{N}$ then $\breve h$ is of class $C^j$ on $[0,d_{\max}]$ and $\breve h^{(j)}(x)=\Expect\bracks{\Delta^j h(X)}$.
\end{enumerate}
\end{proposition}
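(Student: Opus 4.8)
The plan is to regard $\eqstoc\testh(\poissx)=\sum_{k\geq 0}\testh(k)\,\expo^{-\poissx}\poissx^{k}/k!$ as a series of smooth functions on the closed interval $[0,\demandbound]$ and to differentiate it term by term, the only genuine subtlety being that all the estimates must be valid \emph{uniformly up to the right endpoint} $\poissx=\demandbound$. For part \ref{it:pr:Poisson-Delta-j-1}, I would first apply \cref{pr:Poisson-Delta}\ref{it:pr:Poisson-Delta-1} with the larger parameter taken equal to $\demandbound$: for every $\poissx\in[0,\demandbound]$ and every $\PoissX\sim\Poisson(\poissx)$ this gives $\Expect\bracks{\abs{\testh(\PoissX)}}\leq\expo^{\demandbound-\poissx}\Expect\bracks{\abs{\testh(\Poissvar)}}<\infty$, so $\eqstoc\testh(\poissx)$ is well defined, and in particular $\sum_{k\geq 0}\abs{\testh(k)}\demandbound^{k}/k!=\expo^{\demandbound}\Expect\bracks{\abs{\testh(\Poissvar)}}<\infty$. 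Since $\expo^{-\poissx}\poissx^{k}\leq\demandbound^{k}$ on $[0,\demandbound]$, the terms of the series for $\eqstoc\testh$ are dominated there by the summable sequence $\abs{\testh(k)}\demandbound^{k}/k!$, so by the Weierstrass $M$-test the series converges uniformly on $[0,\demandbound]$; being a uniform limit of polynomials, $\eqstoc\testh$ is continuous, which is part \ref{it:pr:Poisson-Delta-j-1}.

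For part \ref{it:pr:Poisson-Delta-j-2} the heart of the matter is the identity $\eqstoc{\testh}'=\eqstoc{(\diffop\testh)}$ on $[0,\demandbound]$, valid whenever $\Expect\bracks{\abs{\diffop\testh(\Poissvar)}}<\infty$. I would differentiate the series term by term, using $\tfrac{d}{d\poissx}\bigl(\expo^{-\poissx}\poissx^{k}/k!\bigr)=\expo^{-\poissx}\poissx^{k-1}/(k-1)!-\expo^{-\poissx}\poissx^{k}/k!$ for $k\geq1$, which after reindexing telescopes to $\sum_{k\geq0}\bigl(\testh(k+1)-\testh(k)\bigr)\expo^{-\poissx}\poissx^{k}/k!=\eqstoc{(\diffop\testh)}(\poissx)$. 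To legitimize term-by-term differentiation up to $\poissx=\demandbound$ it suffices to dominate the differentiated series uniformly on $[0,\demandbound]$: for $k\geq1$ one has $\bigl|\tfrac{d}{d\poissx}(\expo^{-\poissx}\poissx^{k}/k!)\bigr|\leq\demandbound^{k-1}/(k-1)!+\demandbound^{k}/k!$ (the $k=0$ term contributes only the constant $\abs{\testh(0)}$), and $\sum_{k\geq1}\abs{\testh(k)}\bigl(\demandbound^{k-1}/(k-1)!+\demandbound^{k}/k!\bigr)\leq\expo^{\demandbound}\bigl(\Expect\bracks{\abs{\testh(\Poissvar+1)}}+\Expect\bracks{\abs{\testh(\Poissvar)}}\bigr)$, which is finite because \cref{pr:Poisson-Delta}\ref{it:pr:Poisson-Delta-2} turns the hypothesis $\Expect\bracks{\abs{\diffop\testh(\Poissvar)}}<\infty$ into both $\Expect\bracks{\abs{\testh(\Poissvar+1)}}<\infty$ and $\Expect\bracks{\Poissvar\abs{\testh(\Poissvar)}}<\infty$, the latter also forcing $\Expect\bracks{\abs{\testh(\Poissvar)}}<\infty$. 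The classical theorem on differentiation of uniformly convergent series then yields $\eqstoc\testh\in C^{1}[0,\demandbound]$ with $\eqstoc{\testh}'=\eqstoc{(\diffop\testh)}$, and continuity of this derivative follows from part \ref{it:pr:Poisson-Delta-j-1} applied to $\diffop\testh$. The general statement then follows by induction on $j$: if $\Expect\bracks{\abs{\diffop^{j}\testh(\Poissvar)}}<\infty$, then \cref{pr:Poisson-Delta}\ref{it:pr:Poisson-Delta-2} makes this equivalent to $\Expect\bracks{\Poissvar^{j}\abs{\testh(\Poissvar)}}<\infty$, which gives $\Expect\bracks{\Poissvar^{i}\abs{\testh(\Poissvar)}}<\infty$, equivalently $\Expect\bracks{\abs{\diffop^{i}\testh(\Poissvar)}}<\infty$, for all $0\leq i\leq j$; applying the $C^{1}$ step successively to $\testh,\diffop\testh,\dots,\diffop^{j-1}\testh$ gives $\eqstoc\testh^{(i)}=\eqstoc{(\diffop^{i}\testh)}$ for $i=0,\dots,j$, hence $\eqstoc\testh\in C^{j}[0,\demandbound]$ and $\eqstoc\testh^{(j)}(\poissx)=\eqstoc{(\diffop^{j}\testh)}(\poissx)=\Expect\bracks{\diffop^{j}\testh(\PoissX)}$.

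The main obstacle is exactly this uniform control \emph{up to} the endpoint $\poissx=\demandbound$: on any compact subinterval of $[0,\demandbound)$ the function $\eqstoc\testh$ is represented by a convergent power series, hence analytic, so there the whole claim is automatic; obtaining $C^{j}$ regularity on the closed interval requires the summability criterion of \cref{pr:Poisson-Delta}\ref{it:pr:Poisson-Delta-2}, which is what converts a hypothesis on $\diffop^{j}\testh$ into finiteness of $\Expect\bracks{\Poissvar^{j}\abs{\testh(\Poissvar)}}$ and hence of each $\Expect\bracks{\abs{\testh(\Poissvar+i)}}$ with $i\leq j$; without it one cannot dominate the $j$-times differentiated series at $\poissx=\demandbound$. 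Everything else --- the elementary identity for $\tfrac{d}{d\poissx}(\expo^{-\poissx}\poissx^{k}/k!)$, the Weierstrass $M$-test, and the standard differentiation-under-the-series theorem --- is routine.
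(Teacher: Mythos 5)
Your proposal is correct and follows essentially the same route as the paper: uniform convergence of the series on $[0,\demandbound]$ dominated by $\sum_{k}\abs{\testh(k)}\demandbound^{k}/k!$ for continuity, term-by-term differentiation justified via \cref{pr:Poisson-Delta}\ref{it:pr:Poisson-Delta-2} (which converts the hypothesis on $\diffop\testh$ into finiteness of $\Expect\bracks{\abs{\testh(\Poissvar+1)}}$), and induction on $j$. The only cosmetic difference is that the paper first factors out $\expo^{\poissx}$ and differentiates the power series $f(\poissx)=\sum_{k}\testh(k)\poissx^{k}/k!$ before applying the product rule, whereas you differentiate the full series $\sum_{k}\testh(k)\expo^{-\poissx}\poissx^{k}/k!$ directly; the estimates involved are the same.
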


\proof{Proof.}
\ref{it:pr:Poisson-Delta-j-1} 
It suffices to note that the series 
\begin{equation*}
f(x)=\mathrm{e}^{x}\breve h(x)=\sum_{k=0}^{\infty} h(k)\frac{x^{k}}{k!}
\end{equation*}
is well defined and continuous. 
This follows because the partial sums $f_{n}(x)=\sum_{k=0}^{n} h(k)x^{k}/k!$ converge uniformly to $f(x)$. 
Indeed,
\begin{equation*}
\sup_{x\in[0,d_{\max}]}\abs*{f(x)-f_{n}(x)}\leq \sup_{x\in[0,d_{\max}]}\sum_{k=n+1}^{\infty} \abs{h(k)}\frac{x^{k}}{k!}\leq \sum_{k=n+1}^{\infty} \abs{h(k)} \frac{d_{\max}^{k}}{k!},
\end{equation*}
where the latter tail of the series tends to 0 as $n\to\infty$ because $\Expect\bracks{\abs{h(V)}}<\infty$.

\medskip
\noindent
\ref{it:pr:Poisson-Delta-j-2} 
Consider first the case $j=1$. 
We note that the derivatives 
\begin{equation*}
f_{n}'(x)=\sum_{k=0}^{n-1} h(k+1)\frac{x^{k}}{k!}
\end{equation*}
converge uniformly towards $g(x)=\sum_{k=0}^{\infty} h(k+1)x^{k}/k!$.
This follows from part \ref{it:pr:Poisson-Delta-j-1} because by Proposition~\ref{pr:Poisson-Delta}\ref{it:pr:Poisson-Delta-2} with $j=1$ we have
$\Expect\bracks{\abs{h(V+1)}}<\infty$.
Hence $f$ is $C^{1}$ with $f'(x)=g(x)$ and then
$\breve h(x)=\mathrm{e}^{-x}f(x)$ is also in $C^{1}$ with
\begin{equation*}
\breve{h}'(x)=\mathrm{e}^{-x}f'(x)-\mathrm{e}^{-x}f(x)=\sum_{k=0}^{\infty} \big(h(k+1)-h(k)\big)\mathrm{e}^{-x}\frac{x^{k}}{k!}=\Expect\bracks{\Delta h(X)}.
\end{equation*}
This establishes the case $j=1$. Applying this property to $\Delta h$ we obtain the result for $j=2$, and then the cases $j=3,4, \ldots$ follow by induction.
\Halmos
\endproof

\begin{corollary} 
\label{co:Poisson-alpha} 
Let $V\sim\Poisson(d_{\max})$ and suppose that $\Expect\bracks{\abs{\Delta^{2} h(V)}}\leq\nu<\infty$. 
\begin{enumerate}[\upshape(a)]
\item
\label{co:Poisson-alpha-a}
 For all $x\in[0,d_{\max}]$ we have $\abs{\breve{h}'(x)} \leq (\mathrm{e}^{d_{\max}}-1)\,\nu+\abs{h(1)-h(0)}$.
\item
\label{co:Poisson-alpha-b}
 If $h(\argdot)$ is weakly increasing then for all $x\in[0,d_{\max}]$ we have $\breve{h}'(x)\geq 0$, with strict inequality when $h(\argdot)$ is nonconstant.
\end{enumerate}
\end{corollary}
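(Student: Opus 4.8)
The plan is to reduce both parts to the identities $\eqstoc{\testh}'(\poissx)=\Expect\bracks{\diffop\testh(\PoissX)}$ and $\eqstoc{\testh}''(\poissx)=\Expect\bracks{\diffop^{2}\testh(\PoissX)}$ for $\PoissX\sim\Poisson(\poissx)$, valid on $[0,\demandbound]$, and then to read off the estimate in (a) by integrating $\eqstoc{\testh}''$ from $0$ and the sign in (b) directly from the series for $\eqstoc{\testh}'$. First I would check that these identities hold: the hypothesis $\Expect\bracks{\abs{\diffop^{2}\testh(\Poissvar)}}\leq\nu<\infty$ (with $\Poissvar\sim\Poisson(\demandbound)$) gives, via \cref{pr:Poisson-Delta}\ref{it:pr:Poisson-Delta-2} and the monotonicity noted in the footnote there, that also $\Expect\bracks{\abs{\diffop\testh(\Poissvar)}}<\infty$; then \cref{pr:Poisson-Delta-j}\ref{it:pr:Poisson-Delta-j-2}, applied with $j=1$ and with $j=2$, shows $\eqstoc{\testh}\in C^{2}([0,\demandbound])$ with the two formulas above. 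In particular $\eqstoc{\testh}'(0)=\diffop\testh(0)=\testh(1)-\testh(0)$, since a $\Poisson(0)$ variable equals $0$.

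For part (a) I would write $\eqstoc{\testh}'(\poissx)=\eqstoc{\testh}'(0)+\int_{0}^{\poissx}\eqstoc{\testh}''(t)\,dt$ and bound the integrand using \cref{pr:Poisson-Delta}\ref{it:pr:Poisson-Delta-1} applied to $\diffop^{2}\testh$ in place of $\testh$ (with $\PoissY\sim\Poisson(t)$ and $\Poissvar\sim\Poisson(\demandbound)$, $t\leq\demandbound$): this yields $\abs{\eqstoc{\testh}''(t)}\leq\Expect\bracks{\abs{\diffop^{2}\testh(\PoissY)}}\leq\expo^{\demandbound-t}\Expect\bracks{\abs{\diffop^{2}\testh(\Poissvar)}}\leq\expo^{\demandbound-t}\nu$. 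Integrating, $\int_{0}^{\poissx}\expo^{\demandbound-t}\nu\,dt=\nu(\expo^{\demandbound}-\expo^{\demandbound-\poissx})\leq\nu(\expo^{\demandbound}-1)$ for every $\poissx\in[0,\demandbound]$, so a triangle inequality with $\abs{\eqstoc{\testh}'(0)}=\abs{\testh(1)-\testh(0)}$ gives the claimed bound.

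For part (b) I would use $\eqstoc{\testh}'(\poissx)=\sum_{k=0}^{\infty}\parens{\testh(k+1)-\testh(k)}\expo^{-\poissx}\poissx^{k}/k!$. If $\testh$ is weakly increasing, every summand is nonnegative, hence $\eqstoc{\testh}'(\poissx)\geq 0$. If moreover $\testh$ is nonconstant, choose $k_{0}$ with $\testh(k_{0}+1)>\testh(k_{0})$; for each $\poissx>0$ one has $\expo^{-\poissx}\poissx^{k_{0}}/k_{0}!>0$, so that single term already forces $\eqstoc{\testh}'(\poissx)>0$, while at $\poissx=0$ we only get $\eqstoc{\testh}'(0)=\testh(1)-\testh(0)\geq 0$; together these make $\eqstoc{\testh}$ strictly increasing on $[0,\demandbound]$.

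Everything here is routine once the two differentiation identities are in place, so the one step deserving care is that first paragraph — assembling $\Expect\bracks{\abs{\diffop^{2}\testh(\Poissvar)}}<\infty\Rightarrow\eqstoc{\testh}\in C^{2}$ from \cref{pr:Poisson-Delta} and \cref{pr:Poisson-Delta-j} — and, for the exact constant $(\expo^{\demandbound}-1)\nu$ in (a), remembering to use the $\expo^{\demandbound-t}$ decay of the bound on $\eqstoc{\testh}''(t)$ rather than a crude uniform estimate over $[0,\demandbound]$.
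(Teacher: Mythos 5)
Your proof is correct and follows essentially the same route as the paper: both derive $\eqstoc{\testh}''(\poissx)=\Expect\bracks{\diffop^{2}\testh(\PoissX)}$ from \cref{pr:Poisson-Delta-j}\ref{it:pr:Poisson-Delta-j-2}, bound it by $\expo^{\demandbound-\poissx}\nu$ via \cref{pr:Poisson-Delta}\ref{it:pr:Poisson-Delta-1}, integrate from $0$ using $\eqstoc{\testh}'(0)=\testh(1)-\testh(0)$, and read off part (b) from the series representation of $\eqstoc{\testh}'$. Your aside that strict positivity in (b) may fail at $\poissx=0$ for a nonconstant $\testh$ with $\testh(1)=\testh(0)$ is a valid observation the paper glosses over, but it does not affect the way the corollary is used.
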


\proof{Proof.} 
\ref{co:Poisson-alpha-a} Combining  Propositions~\ref{pr:Poisson-Delta-j}\ref{it:pr:Poisson-Delta-j-2} and \ref{pr:Poisson-Delta}\ref{it:pr:Poisson-Delta-1} we get
$|\breve{h}''(x)|\leq \mathrm{e}^{d_{\max}-x}\nu$, so that by integration it follows that $|\breve{h'}(x)-\breve{h}'(0)|\leq (\mathrm{e}^{d_{\max}}-1)\nu$,
and we conclude because $\breve{h}'(0)=h(1)-h(0)$.

\vspace{1ex}
\noindent
\ref{co:Poisson-alpha-b} This follows from Proposition~\ref{pr:Poisson-Delta-j}\ref{it:pr:Poisson-Delta-j-2}.
\Halmos
\endproof

Let $S=X_{1}+\ldots+X_{n}$ be a sum of independent Bernoulli random variables with $\Prob(X_{i}=1)=p_{i}\in (0,1)$, and denote 
$x=\Expect[S]=p_{1}+\ldots+p_{n}$. Consider a Poisson variable 
$X\sim\Poisson(x)$ with the same expectation.
The following result shows that $S$ and $X$ are close when the $p_{i}$'s are small.

\begin{theorem}
\label{th:Barbour}
Let $p=\max\braces{p_{1},\dots,p_{n}}$. Then
\begin{enumerate}[\upshape(a)]
\item
\label{it:th:Barbour-1}
The following double inequality holds:
\begin{equation*}
\rho_{\TV}(\mathscr{L}(S),\mathscr{L}(X))\leq (1-\mathrm{e}^{-x})\,x^{-1}\sum_{i=1}^{n}p_{i}^{2}\leq p.
\end{equation*}

\item
\label{it:th:Barbour-2}
If \ $h:\mathbb{N}\to\mathbb{R}$ is such that $\Expect\bracks{\abs{\Delta^{2} h(X)}}\leq\nu <\infty$, then 
\begin{equation*}
\abs*{\Expect\bracks*{h(S)}-\Expect\bracks*{h(X)}}\leq\frac{\nu\, x}{2}\frac{p\, \mathrm{e}^{p}}{(1-p)^{2}}.
\end{equation*}
\end{enumerate}
\end{theorem}

\proof{Proof.}
These properties follow from \citet[theorem~1]{BarHal:MPCPS1984} and \citet[corollary~4]{BorRuz:AP2002}, respectively.
\Halmos
\endproof

\section{List of Symbols}
\label{se:symbols}

The following table contains the symbols used in the paper.

\begin{longtable}{p{.10\textwidth} p{.85\textwidth}}

$\BNE(\Gamma_{{\scriptscriptstyle \Bg}}^{n})$ & set of Bayesian Nash equilibria of the game $\Gamma_{{\scriptscriptstyle \Bg}}^{n}$\\
$c_{e}$ & cost function of edge $e$\\
$\breve{c}_{e}$ & cost function of edge $e$ in the Poisson limit game, defined in \eqref{eq:cost-Poisson}\\
$C$ & upper bound for $\sum_{e\in s}c_{e}(d_{\max})$ in Proposition~\ref{pr:load-loadeq} and $\sum_{e\in s}\breve{c}_{e}(d_{\max})$ in Corollary~\ref{co:rate-Poisson}\\
$C_{s}$ & $\sum_{e\in s} c_{e}(x_{e})$\\
$c_{\min}'$ & lower bound on $c_{e}'(\argdot)$\\
$c_{\max}'$ & upper bound on $c_{e}'(\argdot)$\\
$\breve{c}_{\min}'$ & lower bound on $\breve{c}_{e}'(\argdot)$\\
$\boldsymbol{d}$ & demand vector\\
$d_{t}$ & demand of type $t$\\
${d}^{n}_{t}$ & aggregate demand of type $t$ in the games $\Gamma_{{\scriptscriptstyle \Wg}}^{n}$ and $\Gamma_{{\scriptscriptstyle \Bg}}^{n}$\\
$d_{\max}$ & upper bound on the demand\\
$d_{\tot}$ & total demand\\
$D$ & random demand\\
$e$ & edge\\
$\mathscr{E}$ & set of resources\\
$\Expect_{\boldsymbol{\sigma}}$ & expectation induced by $\boldsymbol{\sigma}$\\
$\Eq(\Gamma^{\infty})$ & equilibrium cost of $\Gamma^{\infty}$\\
$\ESC(\boldsymbol{\sigma})$ & expected social cost of $\boldsymbol{\sigma}$, defined in \eqref{eq:ESC}\\
$\mathscr{F}(\boldsymbol{d})$ & feasible pairs for demand $\boldsymbol{d}$\\
$\mathscr{G}$  & $\parens{\mathscr{E},(c_{e})_{e\in\mathscr{E}},\mathscr{T},(\mathscr{S}_{t})_{t\in\mathscr{T}}}$, defined in \eqref{eq:structural}\\
$L$ & Lipschitz constant for $c_{e}'(\argdot)$\\ 
$\mathscr{L}(X)$ & law of the random variable $X$\\
$\MNE(\Gamma)$ & set of mixed Nash equilibria of $\Gamma$\\
$\boldsymbol{n}$ & $\parens{n_{t,s}}_{t\in\mathscr{T},s\in\mathscr{S}_{t}}$\\
$N_{e}$ & $\sum_{i\colon e\in s_{i}}U_{i}$, i.e., random number of players who use resource $e$\\
$N_{t}$ & random number of players of type $t$ in a Poisson game\\
$\boldsymbol{N}$ & $(N_{t})_{t\in\mathscr{T}}$\\
$\mathscr{N}$ & set of players\\
$\Opt(\Gamma^{\infty})$ & optimum social cost in $\Gamma^{\infty}$\\
$\Opt(\Gamma)$ & optimum expected social cost in $\Gamma$\\
$\Prob_{\boldsymbol{\sigma}}$ & probability measure induced by $\boldsymbol{\sigma}$\\
$\PNE(\Gamma)$ & set of pure Nash equilibria of $\Gamma$\\
$\PoA$ & price of anarchy\\
$\PoS$ & price of stability\\
$\mathscr{S}_{t}$ & set of strategies for type $t$\\
$\boldsymbol{s}$ & strategy profile\\ 
$s_{i}$ & strategy of player $i$\\
$S_{i}$ & random strategy of player $i$\\
$\SC(\boldsymbol{y},\boldsymbol{x})$ & social cost of $(\boldsymbol{y},\boldsymbol{x})$, defined in \eqref{eq:SC}\\
$\mathscr{T}$ & set of types\\
$t_{i}$ & type of player $i$\\
$u_{i}$ & probability of player $i$ being active\\
$u^{n}$ & $\max_{i\in\mathscr{N}^{n}}u^{n}_{i}$, defined in \eqref{eq:prob-part-to-0}\\
$U_{i}$ & indicator of player $i$ being active\\
$U_{i,e}$ & $U_{i}\mathds{1}_{\braces{e\in S_{i}}}$\\
$w_{i}$ & weight of player $i$\\
$w^{n}$ & $\max_{i\in\mathscr{N}^{n}}w_{i}^{n}$, defined in \eqref{eq:w-n}\\
$\boldsymbol{x}$ & load vector \\
$x_{e}$ & load of edge $e$\\
$\boldsymbol{X}$ & random load vector\\
$X_{e}$ & random load of edge $e$\\
$X_{i,e}$ & $w_{i}+\sum_{j\neq i}w_{j}\mathds{1}_{\braces{e\in S_{j}}}$, defined in \eqref{eq:load-e-i-present}\\
$\widetilde{\boldsymbol{x}}$ & optimum load in $\Gamma^{\infty}$\\
$(\widehat{\boldsymbol{y}},\widehat{\boldsymbol{x}})$ & equilibrium flow-load pair\\
$\boldsymbol{y}$ & flow vector\\
$y_{t,s}$ & flow of type $t$ on strategy $s$\\
$\boldsymbol{Y}$ & random  flow vector\\
$Y_{t,s}$ & random flow of type $t$ on strategy $s$\\
$Z_{i,e}$ & $\sum_{j\neq i}U_{j,e}$, defined in \eqref{eq:Z}\\
$\Gamma$ & game\\
$\Gamma_{{\scriptscriptstyle \Bg}}$ & Bernoulli congestion game\\
$\Gamma_{{\scriptscriptstyle \Pg}}$ & game with population uncertainty\\
$\Gamma_{{\scriptscriptstyle \Wg}}$ & weighted congestion game\\
$\Gamma^{\infty}$ & nonatomic congestion game\\
$\Delta$ & difference operator\\
$\zeta$ & $(\mathrm{e}^{d_{\max}}-1)\nu+\max_{e\in\mathscr{E}}\,(c_{e}(2)- c_{e}(1))$, defined in \eqref{eq:zeta}\\
$\eta_{t}$ & $\min_{s\in\mathscr{S}_{t}} C_{s}$\\
$\theta$ & $\sqrt{d_{\max}/4}+\sqrt{2\kappa d_{\max} \parens*{c_{\max}'+L d_{\max}/4}/c_{\min}'}$, defined in Theorem~\ref{th:random-load-load-0}\\
$\breve{\theta}$ & $\sqrt{2\kappa d_{\max}/\breve{c}_{\min}'}$, defined in Theorem~\ref{th:rate-Poisson-0}\\
$\kappa$ & cardinality of the largest feasible strategy $s\in\cup_{t\in\mathscr{T}}\mathscr{S}_{t}$, defined in Theorem~\ref{th:random-load-load-0}\\
$\Lambda(u)$ & $\frac{d_{\max}\nu}{2}\frac{u \mathrm{e}^{u}}{(1-u)^{2}}+\zeta u$, defined in \eqref{eq:Lambda}\\
$\mu$ & joint probability distribution of the number of players in a Poisson game\\
$\nu$ & upper bound of $\Expect\bracks*{\abs{\Delta^{2} c_{e}(1+V)}}$ in \eqref{eq:HP}\\
$\Xi$ & $\sqrt{2 C/c_{\min}'}$, defined in Proposition~\ref{pr:load-loadeq}\\
$\breve{\Xi}$ & $\sqrt{2 C/\breve{c}_{\min}'}$, defined in  Corollary\ref{co:rate-Poisson}\\
$\rho_{\TV}$ & total variation distance, defined in \eqref{eq:TV1}\\
$\boldsymbol{\sigma}$ & mixed strategy profile\\
$\sigma_{i}$ & mixed strategy of player $i$\\
$\sigma_{i,e}$ & $\Prob_{\boldsymbol{\sigma}}(e\in S_{i})$, defined in \eqref{eq:prob-e-in-Si}\\
$\widehat{\boldsymbol{\sigma}}$ & mixed Nash equilibrium\\
$\widetilde{\boldsymbol{\sigma}}$ & optimum mixed strategy in $\Gamma$\\
$\Sigma$ & set of mixed strategy profiles\\

$\bigtriangleup(\mathscr{S}_{t})$ & simplex of probability measures on $\mathscr{S}_{t}$\\
$\norm{\boldsymbol{x}}_{2}$ & $\sqrt{\sum_{e\in\mathscr{E}}x_{e}^{2}}$\\
$\norm{\boldsymbol{d}}_{1}$ & $\sum_{t\in\mathscr{T}}\abs{d_{t}}$\\ 
$\norm{X}_{L^{2}}$ & $\sqrt{\Expect\bracks{X^{2}}}$\\

\end{longtable}

\end{APPENDICES}

\section*{Acknowledgments.}
This collaboration started in Dagstuhl at the Seminar on Dynamic Traffic Models in Transportation Science in 2018.
Marco Scarsini is a member of the Gruppo Nazionale per l’Analisi Matematica, la Probabilit\`a e le loro Applicazioni, Istituto Nazionale di Alta Matematica.

\paragraph{Funding:}
Roberto Cominetti gratefully acknowledges the support of Proyecto Anillo ANID/PIA/ACT192094.
Marco Scarsini's work was partially supported by the Gruppo Nazionale per l'Analisi Matematica, la Probabilit\`a e le loro Applicazioni, Istituto Nazionale di Alta Matematica [Grant 2020 ``Random walks on random games''] and the Progetti di Rilevante Interesse Nazionale 2017 [Grant ``Algorithms, Games, and Digital Markets''].
This research also received partial support from the European Cooperation in Science and Technology [Action European Network for Game Theory].


\bibliographystyle{informs2014} 
\bibliography{bibcong-MOR} 


\end{document}